\documentclass{article}
\usepackage[letterpaper, left=1in, right=1in, top=1in,bottom=1in]{geometry}
\usepackage{graphicx} 
\usepackage{amssymb}
\usepackage{amsthm}
\usepackage{mathrsfs}
\usepackage{amsmath}
\usepackage{float}
\usepackage{color,xcolor}
\usepackage[citecolor=blue]{hyperref}
\usepackage{booktabs}
\usepackage{longtable}
\usepackage{comment}
\usepackage{tikz}
\usetikzlibrary{calc}
\usetikzlibrary{positioning}
\usetikzlibrary{decorations.pathreplacing}
\usepackage{diagbox} 
\usepackage{multirow}
\usepackage{array}  
\usepackage{subcaption}
\usepackage{titletoc}
\usepackage{microtype}
\usepackage{natbib}

\newtheorem{theorem}{Theorem}
\newtheorem{lemma}{Lemma}[section]
\newtheorem{app_theorem}[lemma]{Theorem}
\newtheorem{proposition}[lemma]{Proposition}
\newtheorem{corollary}[lemma]{Corollary}
\newtheorem{remark}[lemma]{Remark}

\newtheorem{definition}[lemma]{Definition}

\newtheorem{condition}[lemma]{Condition}
\newtheorem{example}[lemma]{Example}

\usepackage[T1]{fontenc}
\usepackage{etoolbox}

\makeatletter

\patchcmd{\@maketitle}{\null}{}{}{} 
\renewcommand\section{%
  \@startsection{section}{1}
                {\z@}%
                {-3.5ex \@plus -1ex \@minus -.2ex}%
                {2.3ex \@plus.2ex}%
                {\normalfont\fontsize{14}{13.6}\bfseries}
}
\renewcommand\subsection{%
  \@startsection{subsection}{2}
                {\z@}%
                {-3.25ex\@plus -1ex \@minus -.2ex}%
                {2.3ex \@plus.2ex}
                {\normalfont\fontsize{12}{13.6}\bfseries}
}
\renewcommand\subsubsection{%
  \@startsection{subsubsection}{3}
                {\z@}%
                {-3.25ex\@plus -1ex \@minus -.2ex}%
                {1sp}
                {\normalfont\fontsize{12}{13.6}\bfseries}
}
\makeatother

\title{Human-AI Productivity Paradoxes: \\ Modeling the Interplay of Skill, Effort, and AI Assistance}

\author{Ali Aouad\thanks{Massachusetts Institute of Technology, maouad@mit.edu} \and Thodoris Lykouris\thanks{Massachusetts Institute of Technology, lykouris@mit.edu} \and Huiying Zhong\thanks{Massachusetts Institute of Technology, zhong826@mit.edu}\\}

\date{}

\begin{document}

\maketitle
\allowdisplaybreaks

\begin{abstract}
Generative Artificial Intelligence (AI) tools are rapidly adopted in the workplace and in education, yet the empirical evidence on AI's impact remains mixed. We propose a model of human-AI interaction to better understand and analyze several mechanisms by which AI affects productivity. In our setup, human agents with varying skill levels exert utility-maximizing effort to produce certain task outcomes with AI assistance. We find that incorporating either endogeneity in skill development or in AI unreliability can induce a {\em productivity paradox:} increased levels of AI assistance may degrade productivity, leading to potentially significant shortfalls. Moreover, we examine the long-term distributional effect of AI on skill, and demonstrate that \emph{skill polarization} can emerge in steady state when accounting for heterogeneity in AI literacy---the agent's capability to identify and adapt to inaccurate AI outputs. Our results elucidate several mechanisms that may explain the emergence of human-AI productivity paradoxes and skill polarization, and identify simple measures that characterize when they arise.

\textbf{Keywords:} Human-AI interaction, AI impact, endogenous skill
\end{abstract}

\section{Introduction}\label{sec:introduction}
Large Language Models (LLMs) and Generative AI (GenAI) tools promise to enhance human capabilities but also pose salient risks. On the one hand, these tools are increasingly used to assist humans in cognitive and creative tasks. The ``transformative AI'' hypothesis underscores the potential of AI technologies to dramatically boost productivity in the workplace \citep{brynjolfsson2025generative,cui2024effects,noy2023experimental} and in critical sectors such as education \citep{henkel2024effective, kestin_ai_2025,nie2025gpt}.
On the other hand, recent literature flags several limitations. 

First, GenAI tools exhibit hallucination and overconfidence \citep{ji2023survey}, and humans may overly rely on AI-assisted outputs or exert insufficient oversight, resulting in degraded output quality \citep{niederhoffer2025ai}. Due to such failures, workers assisted by AI tools can exhibit a decline in productivity compared to those without AI assistance \citep{dell2023navigating, snyder2026algorithm}.

Second, passive use of AI assistance may hinder important cognitive processes in human learning and self-improvement~\citep{kosmyna2025your}. A false sense of having mastered a task without exerting adequate effort or a lack of engagement in problem-solving may lead to long-term knowledge shortfall, as seen in several educational contexts \citep{abbas2024harmful,poulidis2025self}. A recent report  \citep{shen2026ai} illustrates the tension between productivity, skill, and effort in the presence of AI assistance in a software development setting.
\begin{example} \label{example1} In a randomized controlled trial~\cite{shen2026ai} where developers learned a new Python library with or without AI assistance, those using AI showed no statistically significant productivity gains yet scored 17\% lower on follow-up evaluations of conceptual understanding, code reading, and debugging. Interaction patterns involving higher cognitive effort---such as asking conceptual questions or manually adapting AI suggestions---preserved skill formation, while passive delegation did not.
\end{example}

In line with this, current empirical evaluations of GenAI-powered systems show mixed evidence across different operating contexts. Interpreting this evidence, however, requires accounting for three key moderating factors that are often difficult to isolate empirically.
 
First, the quality of \emph{AI assistance} varies across task domains and complexity---this is known as the ``jagged-frontier'' \citep{dell2023navigating}. In customer service, productivity can increase or decrease relative to a no-AI baseline depending on whether the AI system is sufficiently proficient for the query at hand \citep{snyder2026algorithm}.

Second, empirical evidence offers contrasting findings in terms of \emph{skill} heterogeneity.  \cite{brynjolfsson2025generative,cui2024effects,noy2023experimental} find that low-skill workers benefit more from AI assistance. In contrast to this leveling effect, \cite{eames2026computer,roldan2024genai} suggest that high-skill agents benefit more owing to their greater ability to leverage AI-generated information---a notion we refer to as AI literacy.

Third, productivity outcomes are often heterogeneous depending on how individuals engage with AI tools and the modalities of their interactions. Results vary based on whether individual usage is merely passive or involves critical thinking~\citep{lehmann2024ai,snyder2026algorithm}---resulting in different {\em effort} levels. The cost of effort also depends on the availability of resources and frictions. For instance, GenAI tutoring in education may be highly beneficial to resource-constrained environments where technology and human tutors are less accessible \citep{henkel2024effective}, but it may harm academic growth in other settings \citep{Bastani2025doi:10.1073/pnas.2422633122}.

These production factors---{\em AI assistance}, {\em human skills}, and {\em effort}---have intricate connections.  Access to AI assistance affects human effort on specific tasks, which in turn affects long-term skill development; moreover, endogenous effort decisions are likely moderated by the perceived usefulness of the AI system at such tasks. 
This raises fundamental questions: 
\begin{center}
\emph{When is AI assistance beneficial instead of harmful to long-term human productivity and skill? \\What mechanisms may plausibly explain the contrasting empirical findings to date?}
\end{center}

\subsection{Our contributions}
This paper formally studies the impact of AI assistance on productivity and skill in hybrid human-AI operating contexts. In Section~\ref{sec:basic}, we develop a model for human-AI productivity, where human agents with heterogeneous {\em skill levels} exert utility-maximizing {\em effort} to produce certain task outcomes. The access to AI technology for the task at hand is captured by an {\em AI assistance level}. We assume that skill, effort, and AI assistance are substitutable production factors and contribute additively to a concave {\em production function}, while effort incurs a {\em linear cost}. Within this basic model, we show that a deterministic level of AI assistance acts as a substitute for effort and skill, but results in higher productivity, aligning with findings from related modeling literature. 

The crux of our study is to extend this model by introducing plausible mechanisms---skill development and AI unreliability---that endogenously modify the human-AI interaction modality. Each of these mechanisms gives rise to a \emph{productivity paradox} where a greater level of AI assistance may harm productivity. When both mechanisms operate jointly, and human agents have endogenous AI literacy, {\em skill polarization} can emerge, where the long-term skill distribution becomes multimodal. Our work identifies under which conditions these phenomena arise:

\begin{itemize}
\item \textbf{The impact of skill development (Section~\ref{sec:bdc}).} We introduce a birth-death chain model of skill development, which captures how effort shapes future skills. We analyze the steady-state productivity and find that increasing the level of AI assistance can potentially decrease productivity (Theorem~\ref{thm:bdc_productivity}). The productivity shortfall can be arbitrarily large (Proposition~\ref{prop:bcd_bad_epsilon_general}). We characterize precisely when this productivity paradox occurs---the outcome depends on a notion that we refer to as {\em sensitivity gap}, measuring the relative influence of effort on skill and of AI assistance on productivity. The counterintuitive decline in productivity is analogous to Simpson's paradox: AI assistance is beneficial to productivity for a fixed skill level but it also modifies the skill distribution across the population.
\item \textbf{The impact of AI unreliability (Section~\ref{sec:AI_unreliability}).} We introduce a model of AI unreliability, in which AI assistance can either produce proficient outputs or fail to do so. We analyze the expected productivity and show counterintuitively that increasing AI proficiency, while holding AI reliability constant, can decrease productivity (Theorem~\ref{thm:unreliable_concave3}). Moreover, the productivity shortfall can be arbitrarily large (Proposition~\ref{prop:exante_linear_bad}). We identify when this productivity paradox occurs---the outcome depends on a so-called {\em absolute risk aversion} notion, which measures the rate of diminishing returns of the production function. The productivity decline happens under accelerating diminishing returns, in which the indirect cannibalization of effort by AI assistance outweighs the direct productivity gain from the AI boost.
\item \textbf{The impact of AI literacy (Section~\ref{sec:AI_literacy}).} Finally, we examine a model of AI literacy, which considers how individuals react to unreliable AI outputs. We analyze the long-term effect of AI assistance on steady-state skill distribution. We show how a multimodal skill distribution can emerge (Theorem~\ref{thm:bayes_skill_multimodal}), due to the non-monotonicity of effort in skill. 
This highlights a polarizing effect in the population analogous to the Matthew effect, where individuals with early advantages exert more effort and thus amplify their skills. 
\end{itemize}

\paragraph{Implications.}
We draw three main insights from this analysis. First, in knowledge-critical sectors, AI deployment should be accompanied by knowledge management programs: firms or educational institutions anticipating critical knowledge shortfalls should provide remedial training. Principals who design AI systems can also introduce usage frictions to incentivize human agents to exert effort and oversee AI outputs, helping them maintain or acquire adequate skills. Second, returns to AI proficiency (i.e., average output quality) can be negative for productivity when AI is unreliable (i.e., when output quality is variable). This suggests introducing risk-based measures to track the progress of AI systems.  Third, our work suggests that AI literacy interventions may be important for mitigating the uneven impacts of AI in the workplace.

\paragraph{Modeling assumptions.} Our findings hinge on the following important modeling assumptions.
\begin{itemize}
\item \emph{Substitutatibility (basic model):} We assume skill, effort, and AI assistance are substitutable factors that contribute additively to production. This modeling assumption may be valid for tasks that can be partially automated by AI, such as writing, coding, or customer services; but it may not hold true in other settings where complementarities arise. The mechanisms we add to our basic model can be viewed as micro-founded interactions between these production factors. For example, we can interpret AI unreliability as a negative interaction of AI assistance with effort and skill (see Section~\ref{sec:unreliability_discussion}). Similarly, AI literacy acts as a positive interaction of AI assistance with effort and skill (see Section~\ref{sec:literacy_technical}). 
\item \emph{Linear cost:} We consider a linear cost function of effort, which is natural when effort represents time spent under a constant per-unit cost. However, the paradoxes identified for linear cost functions carry over to the convex case in a weaker form (see Remark~\ref{rmk:cvx_cost} and Appendix~\ref{app:cvx_cost}).
\item \emph{Myopic users:} We posit that individuals are short-sighted utility-maximizers and do not account for long-term consequences of AI usage on their skill level. This assumption is supported by empirical evidence,\footnote{See, e.g., the 2025 global report on AI attitudes by KPMG and the University of Melbourne: \url{https://assets.kpmg.com/content/dam/kpmgsites/xx/pdf/2025/05/trust-attitudes-and-use-of-ai-global-report.pdf}. Accessed on 05/2026.} which shows myopic patterns of individual usage of AI. 
\end{itemize}

\subsection{Related literature}

\paragraph{Empirical evaluations of AI impact} 
A growing body of empirical literature evaluates how AI assistance affects individual skill formation and productivity.

On skill formation, AI can enhance individual learning in certain educational contexts \cite{nie2025gpt,henkel2024effective, kestin_ai_2025}, particularly when personalized examples provide significant learning benefits. However, other work highlights potential drawbacks to task engagement and skill development arising from unfettered access to GenAI tools. Neuroscience findings by \cite{kosmyna2025your} suggest that LLM
use reduces cognitive activity and accumulates ``cognitive debt'', where users fail to internalize the knowledge. Randomized controlled trials in classes \cite{Bastani2025doi:10.1073/pnas.2422633122,poulidis2025self} show that individuals' ability to build knowledge may be impaired by a lack of hands-on practice. Consistent with these findings on skill degradation, we develop a model of skill development and characterize its steady-state, providing a theoretical understanding of how AI shapes long-term human capital. 
More recently, \cite{shen2026ai} run an RCT on developer learning and find that AI assistance degrades skill formation, with outcomes depending on how individuals engage with AI tools. Our model formalizes this learning-by-doing process using a birth-death chain over skill states.

A substream of this empirical research examines AI literacy---understood as the ability to critically evaluate AI-generated output. \citep{roldan2024genai} find that high-ability students benefit more from GenAI tools in a university debate competition, as they are better at using the AI-generated information; \cite{eames2026computer} similarly observes that in math learning, higher-ability students gain more from AI assistance because they invest more complementary effort; \cite{allen2022algorithm} show that workers with greater domain experience are more accurate at judging algorithmic output. We formalize this moderating channel via heterogeneous AI literacy: we uncover that heterogeneity in evaluating AI output shifts effort choices and leads to skill polarization in the long run.

On productivity impact, empirical evidence in the context of AI-assisted coding \citep{cui2024effects}, writing \citep{noy2023experimental}, and customer support tasks \citep{brynjolfsson2025generative} indicates that GenAI tools improve individual productivity and could benefit less-experienced or lower-skilled workers more. These studies primarily focus on tasks within AI’s capability, where humans effectively implement AI's suggestions, as captured by our basic model. When AI assistance is unreliable, however, AI can disrupt the workflow and degrade the outputs. \cite{dell2023navigating} conduct a randomized controlled trial with consultants, which suggests that for hard tasks outside the ``jagged-frontier'' of AI capabilities, humans using AI are less likely to produce correct solutions compared to those without AI; \citep{becker2025measuring} suggest that unreliable AI assistance slows down developers, especially in complex coding environments; \cite{snyder2026algorithm} examine algorithm augmentation in customer service queueing systems and find that throughput times decrease only when the algorithm quality is superior. Our productivity paradox under AI unreliability provides a micro-founded theory for these findings: effort loss from unreliable AI can exceed the productivity gain from reliable AI depending on a simple sensitivity-gap measure, leaving the agent worse off in expectation.

\paragraph{Human-AI interaction} Our work relates to recent studies on human-AI interaction in decision-making and cognitive creation tasks. In a meta-analysis of more than 100 empirical studies, \cite{vaccaro2024combinations} find that human-AI interaction often fails to improve performance, with a larger loss in decision tasks (making decisions) than in creation tasks (generating open-response content).  The majority of existing modeling work in operations management examines (the failure of) decision-making tasks over discrete labels \citep{boyaci2024human,mclaughlin2024designing} or parametrized decision rules \citep{balakrishnan2026human,grand2026best}. We instead focus on creation tasks and uncover that the productivity shortfall occurs even in these settings that do not involve decision making. For content generation tasks, 
\cite{castro2024human} build a Bayesian framework where users choose the amount of information to share with AI and show how the tradeoff between output fidelity and communication cost can lead to output homogenization and societal bias.
\cite{gans2026ring} consider task automation with task interdependence, where an agent allocates a fixed time endowment on different tasks, and machines can substitute for individual tasks. In contrast, our model focuses on AI-assisted productivity for a single task but considers the interplay of different production factors---skill, effort, and AI assistance. 

Closest to our work, \cite{Agrawal2025NBERw34034} develop a model of AI-assisted task improvements in which AI substitutes for human effort while still increasing net benefit on utility. While this insight is consistent with our findings, our work primarily focuses on the effect of AI assistance on productivity gains and uncovers that under mechanisms of skill dynamics and AI unreliability, the net productivity gain from greater AI assistance can disappear; further, we provide simple measures that characterize when such a productivity paradox occurs.
\cite{Agrawal2025NBERw34034} also establish that the variance of individual benefits is U-shaped in AI level, first narrowing as AI compresses the raw skill gap, then widening as a skill-judgment effect re-amplifies it. This insight is distinct from our skill polarization phenomenon: their result states that increasing AI assistance might increase inequality, while ours shows that, at a {\em fixed} level of AI assistance above a threshold, heterogeneous AI literacy and endogenous skill evolution disperse skills across the population.

\paragraph{Concurrent work.} We also note that several concurrent papers study settings similar to ours. 

\cite{bastani2025human,siderius2026use} study a principal-agent model where an agent has access to AI assistance, which they can use in conjunction with spending costly effort; the principal compensates the agent based on its output. \cite{bastani2025human} show a contracting paradox where the principal may prefer a less reliable AI tool since more reliable AI systems are harder and more expensive to monitor. \cite{siderius2026use} show that profit is non-monotonic in AI quality and that the profit-maximizing outcome may come at the cost of long-term skill degradation. Unlike contracting paradoxes that focus on a mismatch of incentives, we expose paradoxes in productivity itself, stemming either from misalignment between AI-driven short-term productivity gains and long-term skill development, or from myopic effort reduction driven by unreliable AI. 

\cite{acemoglu2026ai,bondi2026skill} study models that incorporate social learning in which agents have access to AI and one agent's learning process generates externalities for the learning of others. \cite{bondi2026skill} examines the measurement implications of individual skill degradation and learning spillover bias for AI's productivity effects, arguing that standard estimates overstate AI's net productivity effect. The study of skill endogeneity is simular to our work but the focus of \cite{bondi2026skill} is on how panel and RCT estimands of productivity may differ under AI adoption (due to skill atrophy or spillover bias), which is different than our focus on the mechanisms that can induce the emergence of productivity paradoxes
\cite{acemoglu2026ai} model the learning externality of human effort, in which successful decision-making requires both societal and individual knowledge, and investigate how agentic AI degrades the collective knowledge stock. While \cite{acemoglu2026ai} mostly focuses on externalities of individual-level skill on community-level knowledge, our work isolates individual-level mechanisms that may generate a productivity paradox.

Closer to our work, \cite{caosun2026augmentation}  also show a productivity paradox and skill-polarization effect (the latter is termed skill stratification). The productivity paradox is driven by a similar mechanism to the one we study in Section~\ref{sec:bdc}; that said, \cite{caosun2026augmentation}  do not consider the effect of AI unreliability as a distinct mechanism causing a productivity paradox. With respect to skill polarization, their result stems from specifying interaction terms between skill and AI in the production function (absent the notion of effort), whereas we assume those factors to be substitutable and derive skill polarization from endogenous heterogeneity in AI literacy.

\paragraph{Innovation and technology.}  Our work also relates to a long-standing literature in economics analyzing the impact of major technological revolutions (steam engine, electricity, computers). While such technological innovations are associated with productivity gains, these gains are often lagging and dependent on organizational and institutional changes \cite{david1990dynamo,BRESNAHAN1995,brynjolfsson2021productivity}, reflecting transition periods of partial equilibrium before the economy absorbs the new technology. GenAI distinguishes itself from previous general-purpose technologies \cite{wachter2024will} in several ways: its versatility in automating a wide spectrum of cognitive and creative tasks and the risks it poses from unreliability and hallucination. The productivity failures we identify are distinct from the partial-equilibrium explanation: they arise from the specific characteristics of GenAI technology, rather than from transitional frictions. We thus model individual-level productivity as the outcome of complex skill-effort dynamics. That said, how these effects play out in general equilibrium, considering demand and supply elasticities, wages, and employment, is  an important direction for future work.

\section{Basic Model} \label{sec:basic}
We consider heterogeneous agents, each with a skill level $s>0$. The agent chooses an effort level $e\geq 0$ under an AI assistance level $a \geq 0$. The assistance level of AI captures how much the AI tool can support human workers per unit time. A higher value of $a$ means a more proficient AI tool in advancing the task. We assume that skill, effort, and AI assistance are perfect substitutes: they contribute additively to the total input into production, denoted by $x=s+e+a$. This model captures tasks that can be (at least partially) automated by AI, such as writing, translation, coding, or customer service. Prior worker experience translates into skills that reduce the level of effort required for a given output productivity level. Similarly, AI automation reduces the human effort needed for task completion.
The input yields a productivity $p(x)$, where $p(x)\geq 0$ is a (weakly) increasing, (weakly) concave, continuous, and twice differentiable production function. Exerting effort $e$ incurs a cost of $c(e)$, where $c(e) = \gamma e$ with $\gamma > 0$ is a linear cost function (see Remark~\ref{rmk:cvx_cost} for a discussion on this assumption). The cost function depends only on effort and not AI assistance, which isolates the effect of AI in production. Beyond the diminishing return in input level, we further assume that $\limsup_{x\to\infty} \frac{p(x)}{x}<\gamma$ to ensure boundedness of a critical level $x^\star$ that we formalize in \eqref{eq:critical_level}.

The utility $u(e,s,a)$ of the agent is given by the difference between the production and the cost, i.e., $u(e,s,a):=p(s+e+a) - \gamma e$. Therefore, the agent chooses an effort level $e^\star(s,a)$ to maximize utility. When the maximizer is not unique, the largest one is selected: 
\begin{equation}\label{eq:basic_effort_def}
    e^\star(s,a):= \max \Big(\arg\max_{e\geq 0} u(e,s,a)\Big).
\end{equation}
The resulting productivity at the utility-maximizing effort is defined as 
\begin{equation}\label{eq:basic_prod_def}
    p^\star(s,a):= p(s+e^\star(s,a)+a).
\end{equation} 
Under this basic model, the effort $e^\star(s,a)$ is decreasing and the productivity $p^\star(s,a)$ is increasing in $s+a$ (this is formalized in Corollary~\ref{cor:effort_productivity}).  
Thus, our additive model recovers the findings of \cite{Agrawal2025NBERw34034}, which posits that for a fixed skill level, AI assistance can substitute human effort and achieve an overall productivity gain. This arises as AI assistance is a substitute for human effort: as AI assists more, the marginal benefit of human effort decreases. To formalize this observation, we define the \emph{critical level} $x^\star= x^\star(p,\gamma)$ as the largest maximizer\footnote{The maximizer may not be unique when the production function $p$ contains a linear segment with slope $\gamma$. In the following, we will use $\arg\max u(\cdot)$ instead of $\max (\arg\max u(\cdot))$ to denote the largest maximizer for notational simplicity.} of the net function $p(x)-\gamma x$: 
\begin{equation}\label{eq:critical_level}
    x^\star = \max \Big( \arg\max_{x\geq 0} \big(p(x)-\gamma x\big) \Big).
\end{equation}
The critical level $x^\star$ is unique and finite as the condition $\limsup_{x\to\infty} \frac{p(x)}{x}<\gamma$ holds.\footnote{The limsup condition, together with concavity and continuity of $p$, guarantees that the argmax set is nonempty and bounded, thus a largest maximizer exists and is finite. By our convention, this defines a uniquely determined scalar $x^\star$.} It represents the largest point where the slope of the production function is greater than or equal to $\gamma$, i.e., $p'(x^\star)\geq \gamma$, and $p'(x^\star +\varepsilon)< \gamma$ for any $\varepsilon>0$. Equipped with the definition of the critical level, we formalize our finding in the following proposition. Below, we define $x^+ :=\max(x,0)$.
\begin{proposition}\label{prop:effort_productivity_expressions}
The agent's effort level $e^\star(s,a)$ and productivity $p^\star(s,a)$ are given by:
$$ e^\star(s,a)=(x^\star-s-a)^+ \quad\text{and}\quad  p^\star(s,a)=\max (p(x^\star), p(s+a) ).$$
\end{proposition}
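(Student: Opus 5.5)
The plan is to reparametrize the agent's problem by total input and then reduce it to the definition of the critical level \eqref{eq:critical_level}. Fix $s,a$ and write $y:=s+a$. Substituting $x=y+e$, the utility becomes $u(e,s,a)=p(x)-\gamma x+\gamma y$. So maximizing $u$ over $e\ge 0$ is the same as maximizing $g(x):=p(x)-\gamma x$ over the half-line $x\ge y$, and the largest maximizer in $e$ corresponds to the largest maximizer in $x$. The key structural fact is that $g$ is concave on $[0,\infty)$, being a concave function minus a linear one. By \eqref{eq:critical_level}, $x^\star$ is the largest point of its (nonempty, bounded) argmax set on $[0,\infty)$.

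I will split into two cases according to the position of $y$ relative to $x^\star$.

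\emph{Case 1: $y\le x^\star$.} Then $x^\star$ is feasible, i.e., $x^\star\ge y$. It is a global maximizer of $g$ on $[0,\infty)$, hence also on $[y,\infty)$. Every maximizer on $[y,\infty)$ attains the global maximum value, so it lies in the global argmax set. Therefore the largest one is again $x^\star$, which gives $e^\star=x^\star-y$.

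\emph{Case 2: $y> x^\star$.} I claim $g$ is strictly decreasing on $[x^\star,\infty)$. By concavity, $g$ is nonincreasing to the right of any global maximizer. If it were constant on some interval $[z_1,z_2]$ with $x^\star\le z_1<z_2$, then $g(z_2)=g(x^\star)$ would have to hold. To see this, note that concavity forces $g$ to be constant on $[x^\star,z_2]$: a concave function that is nonincreasing there and flat at the right end cannot strictly drop before it, since a chord argument rules this out. Then $z_2$ would be a maximizer larger than $x^\star$, a contradiction. Equivalently, one can use that $p'(x)<\gamma$ for $x>x^\star$, as noted after \eqref{eq:critical_level}, so $g'<0$ there. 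Hence on $[y,\infty)$ the unique maximizer is $x=y$, i.e., $e^\star=0$. Combining both cases yields $e^\star(s,a)=(x^\star-s-a)^+$.

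For the productivity, $p^\star(s,a)=p(y+e^\star)$. This equals $p(x^\star)$ in Case 1 and $p(y)$ in Case 2. Since $p$ is weakly increasing, in Case 1 we have $p(y)\le p(x^\star)$, and in Case 2 we have $p(x^\star)\le p(y)$. In both cases the value is $\max(p(x^\star),p(y))$.

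The only delicate step is the strictness claim in Case 2, which handles the tie-breaking when $p$ has linear segments of slope $\gamma$. I expect the derivative fact $p'(x^\star+\varepsilon)<\gamma$, recorded in the paper right after the definition of $x^\star$, to dispatch it directly.
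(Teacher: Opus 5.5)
Your proof is correct and follows essentially the same route as the paper: a case split on whether $s+a\le x^\star$, using concavity to show the objective increases up to total input $x^\star$ and strictly decreases beyond it. Your reparametrization by total input and your explicit treatment of the largest-maximizer tie-breaking, via strict decrease of $p(x)-\gamma x$ past $x^\star$, spell out a point the paper's proof leaves implicit.
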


\begin{proof}[Proof of Proposition~\ref{prop:effort_productivity_expressions}]
For a given AI assistance level $a$ and a given skill level $s$, the concavity of $p$ implies that the utility is increasing in $e$ for $e\leq x^\star-s-a$ and is decreasing otherwise.
The following case distinction characterizes the agent's effort and corresponding productivity:
\begin{enumerate}
    \item $s+a \leq x^\star$. The agent's utility-maximizing effort is a non-negative value that brings the total input to the critical level $x^\star$, i.e., $e^\star(s,a)=x^\star-s-a$. The corresponding productivity is given by $p^\star(s,a)=p(x^\star)$. 
    \item $s+a > x^\star$. The agent's utility is decreasing in $e$ for all non-negative effort levels. Thus, the agent's utility-maximizing effort is $e^\star(s,a)=0$. The corresponding productivity is given by $p^\star(s,a)=p(s+a)$.
\end{enumerate}
Combining both cases, we complete the proof.
\end{proof}
As a direct corollary, we have the following monotonicity result.
\begin{corollary}\label{cor:effort_productivity}
The effort level $e^\star(s,a)$ (weakly) decreases in $s+a$ and eventually declines to $0$.
The productivity $p^\star(s,a)$ is initially constant and then becomes (weakly) increasing in $s+a$.
\end{corollary}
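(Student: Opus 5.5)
The plan is to read both claims directly off the closed forms in Proposition~\ref{prop:effort_productivity_expressions}, treating $e^\star$ and $p^\star$ as functions of the single variable $t:=s+a$. Both expressions depend on $(s,a)$ only through $t$, so it suffices to study the maps $t\mapsto (x^\star-t)^+$ and $t\mapsto \max(p(x^\star),p(t))$ on $t>0$.

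For the effort, I will observe that $t\mapsto x^\star-t$ is strictly decreasing and that $y\mapsto y^+$ is nondecreasing. The composition $t\mapsto (x^\star-t)^+$ is therefore weakly decreasing. It equals $0$ for every $t\ge x^\star$, and $x^\star$ is finite by the limsup condition (footnote to \eqref{eq:critical_level}). This gives both "weakly decreasing" and "eventually declines to $0$."

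For the productivity, I will split at $t=x^\star$. For $t\le x^\star$, the monotonicity of $p$ gives $p(t)\le p(x^\star)$, so $p^\star=p(x^\star)$, which is constant. For $t> x^\star$, the same monotonicity gives $p(t)\ge p(x^\star)$, so $p^\star=p(t)$, which is weakly increasing in $t$. At the junction, continuity of $p$ gives $p(t)\to p(x^\star)$ as $t\downarrow x^\star$. Since $p(t)\ge p(x^\star)$ on the right, the whole function is weakly increasing on $(0,\infty)$. It is constant on the initial segment $t\le x^\star$ and then follows $p$.

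The only subtlety is the edge case $x^\star=0$, where the "initially constant" region $\{t\le x^\star\}$ contains no admissible $t>0$. I would state that in this case the constant phase is empty and the claim degenerates to "weakly increasing," which is still consistent with the corollary. Apart from that, there is no real obstacle: the argument is a direct consequence of the proposition together with the monotonicity of $p$.
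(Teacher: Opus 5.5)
Your proposal is correct and matches the paper, which presents the corollary as a direct consequence of the closed forms in Proposition~\ref{prop:effort_productivity_expressions} without further argument. A minor point: the continuity appeal at $t=x^\star$ is unnecessary, since $\max(p(x^\star),p(t))$ is the maximum of a constant and a weakly increasing function and is therefore weakly increasing outright.
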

The results in the basic model demonstrate a substitution between AI and human effort, yet leading to an overall productivity gain. As we discuss in the following sections, this pattern breaks once we introduce different sources of endogeneity in the model.  
\begin{remark}\label{rmk:cvx_cost}
The linearity of the cost function is natural when effort represents time spent under a constant per-unit opportunity cost. When frictions occur, and effort and time can not be stretched, a convex cost function may be more appropriate.
It is clear from the functionals in Proposition~\ref{prop:effort_productivity_expressions} that productivity increases strictly only in the region where humans exert no effort. Otherwise, AI and human effort substitute for one another. While these substitution patterns are consistent with the findings in Example~\ref{example1}, they are implied by our assumption of linear cost of effort; for convex cost functions, productivity strictly increases in AI assistance level. We focus on linear costs as they lead to cleaner analytical results, but our main insights also hold for convex costs (see Appendix~\ref{app:cvx_cost}).
\end{remark}

\section{The Impact of Skill Development: A Simpson's paradox}\label{sec:bdc}
We now extend the model to allow skill to evolve endogenously. As established in cognitive psychology, long-term memory formation requires active cognitive work \cite[Chap.~8-9]{stangor2014introduction} to internalize new representations and knowledge, while mere exposure is insufficient. Once we incorporate endogenous skill development into our model, the monotonic relationship between AI assistance and productivity established in Corollary~\ref{cor:effort_productivity} breaks down. As one might expect, AI assistance and skill development are substitutes. This drives our first productivity paradox: higher levels of AI assistance can in some cases lower productivity. Specifically, we characterize when this productivity paradox can arise via a condition that explicitly compares the long-term deskilling loss caused by AI with the short-term AI productivity gains. 

\subsection{Model of endogenous skill development}\label{sec:bdc_model}
We develop a dynamic model where an agent’s skill evolves over
time. 
We consider a birth-death chain with $N\geq 2$ skill states $S=(s_1,s_2,\cdots,s_N)$, where $s_1 \leq s_2 \leq \cdots \leq s_N$. The process is specified by upward transition rates of moving from $s_k$ to $s_{k+1}$, $\mathcal{K}_{s_k s_{k+1}}(a)$, and downward transition rates of moving from $s_k$ to $s_{k-1}$, $\mathcal{K}_{s_k s_{k-1}}(a)$. This setup extends the two-state (unlearned and learned) skill acquisition model of \cite{corbett1994knowledge} and it is a discrete-time version of the continuous dynamics in \cite{ben1967production}.
We assume that agents are myopic. At each state $s_k$, the agent chooses an effort level to maximize the short-term utility, given by \eqref{eq:basic_effort_def}, yielding a productivity as defined in \eqref{eq:basic_prod_def}:
$$e^\star(s_k,a)=\arg\max_{e\geq 0} (p(s_k+e+a) - \gamma e) \quad \text{and} \quad p^\star(s_k,a):= p(s_k+e^\star(s_k,a)+a).$$
Higher effort level increases the likelihood of skill improvement. Hence, the transition rate of each state depends endogenously on the agent’s effort $e^\star(s_k,a)$ at that state, which is given by
$$\mathcal{K}_{s_k s_{k+1}}(a)= \lambda(e^\star(s_k,a)) \quad\text{and}\quad  \mathcal{K}_{s_k s_{k-1}}(a)= \mu,$$
where $\mu > 0$ and $\lambda(\cdot)>0$ 
is an increasing, (weakly) concave, and differentiable transition function. Concurrently, we assume that skills decay at a time-homogeneous rate $\mu$, and improve based on the endogenous effort level $e^\star(s_k,a)$, which is influenced by both the skill level $s_k$ and AI assistance level $a$. Notice that our basic model of Section~\ref{sec:basic} coincides with the birth-death chain with multiple states, if all states have the same skill level, $s_1=s_2=\cdots=s_N$.

For fixed production function $p(x)$, cost function $c(e)=\gamma e$, set of states $S$, transition rate $\lambda(e)$ and $\mu$, let $\pi_k(a)$ be the steady-state probability of an agent being at skill state $s_k$, given the AI assistance level~$a$.\footnote{We suppress the dependence of $\boldsymbol{\pi}$ on $S, \lambda,\mu$ here. In Section~\ref{sec:AI_literacy}, we make the dependence explicit.} The following proposition characterizes the steady-state distribution, effort, and productivity. The proof follows standard birth-death process results (see \cite{grimmett2001probability}, Chap.~6.11) and is provided in Appendix~\ref{app:bdc_steady_state_s3}.
\begin{proposition}\label{prop:bdc_expression_s3}
    The steady-state distribution $\boldsymbol{\pi}(a)=(\pi_1(a),\pi_2(a), \cdots,\pi_N(a))$ is given by $$\pi_k(a)=\pi_1(a)\prod_{i=1}^{k-1}\frac{\lambda(e^\star(s_i,a))}{\mu} \quad\text{and}\quad  \pi_1(a)=\dfrac{1}{1+\sum_{k=2}^{N}\prod_{i=1}^{k-1}\frac{\lambda(e^\star(s_i,a))}{\mu}}.$$
    The steady-state effort and productivity are given by
    \begin{align*}
    & \mathcal{E}(a) := \sum_{k=1}^{N} \pi_k(a) \cdot e^\star(s_k,a) \quad\text{and}\quad \mathcal{P}(a) :=\sum_{k=1}^{N} \pi_k(a) \cdot p^\star(s_k,a) \quad \text{respectively}.
\end{align*}
\end{proposition}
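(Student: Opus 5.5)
We treat the skill process as a finite-state birth-death chain on $\{s_1,\dots,s_N\}$ with a fixed assistance level $a$. The first observation is that all transition rates are constants once $a$ is fixed. By Proposition~\ref{prop:effort_productivity_expressions}, $e^\star(s_k,a)=(x^\star-s_k-a)^+$ is a well-defined deterministic number for each state. Hence the upward rate $\lambda_k:=\lambda(e^\star(s_k,a))$ and the downward rate $\mu$ are time-homogeneous. Since $\lambda(\cdot)>0$ and $\mu>0$, every rate between adjacent states is strictly positive. The chain is therefore irreducible on a finite state space, so it has a unique stationary distribution, which is also the limiting distribution. We will use the convention $\mathcal{K}_{s_1s_0}=\mathcal{K}_{s_Ns_{N+1}}=0$ at the boundaries.

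Next we derive the product form via detailed balance. We will show that any distribution satisfying
\[
\pi_k(a)\,\lambda_k=\pi_{k+1}(a)\,\mu,\qquad k=1,\dots,N-1,
\]
also satisfies the global balance equations. For an interior state $k$, inflow is $\pi_{k-1}\lambda_{k-1}+\pi_{k+1}\mu$ and outflow is $\pi_k(\lambda_k+\mu)$. These are equal because of the two detailed-balance identities at the edges $(k-1,k)$ and $(k,k+1)$. The boundary states $k=1$ and $k=N$ each involve only one edge and are immediate.

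Alternatively, we can argue directly from global balance. Summing the balance equations for states $1,\dots,k$ shows that the net probability flux across the cut between $s_k$ and $s_{k+1}$ is zero, which gives the same recursion. Iterating the recursion yields
\[
\pi_k(a)=\pi_1(a)\prod_{i=1}^{k-1}\frac{\lambda_i}{\mu}.
\]
Normalizing with $\sum_k\pi_k(a)=1$ then gives the stated formula for $\pi_1(a)$. The denominator is at least $1$, so $\pi_1(a)$ is well defined and positive.

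Finally, the steady-state effort $\mathcal{E}(a)$ and productivity $\mathcal{P}(a)$ are by definition the expectations of $e^\star(s_k,a)$ and $p^\star(s_k,a)$ under $\boldsymbol{\pi}(a)$. This is justified by the ergodic theorem for irreducible finite chains: long-run time averages equal stationary averages. The expressions therefore follow directly.

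No step is a real obstacle here, since this is the standard birth-death result (\cite{grimmett2001probability}, Chap.~6.11). The only point that needs care is irreducibility, which rests on the strict positivity of both $\lambda$ and $\mu$. One must also note whether the model is read in continuous time or as a discrete-time chain. In the discrete-time reading the rates must be small enough that $\lambda_k+\mu\le 1$, with the remainder assigned as a self-loop. Self-loops do not change the balance equations, so the stationary distribution is the same in either interpretation.
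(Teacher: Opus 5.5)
Your proposal is correct and takes the same route as the paper. The paper maps the skill chain onto the standard birth--death result (Lemma~\ref{lem:bdc_std_result}, from \cite{grimmett2001probability}), with birth rates $\lambda(e^\star(s_k,a))$ and constant death rate $\mu$, and then takes expectations under $\boldsymbol{\pi}(a)$ to obtain $\mathcal{E}(a)$ and $\mathcal{P}(a)$; you re-derive that cited lemma via detailed balance and the cut argument, and add the irreducibility/uniqueness check and the discrete- versus continuous-time remark.
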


\subsection{Main result: emergence of productivity gain and decline regions}\label{sec:bdc_result}
Our main result in this section shows that AI may reduce productivity when the skill accumulation is more sensitive to effort than productivity is to AI assistance and vice versa. To formalize this, we first define a key notion that compares the two sensitivities. Below, we define $[n]:=\{1,2,\cdots,n\}$.
\begin{definition}\label{def:sensitivity_general}
    For a given index $m\in[N-1]$, the sensitivity gap between development and productivity is 
    $$\Delta_m= \underbrace{\sum_{j=1}^m\frac{\lambda'(s_m-s_j)}{\lambda(s_m-s_j)}}_{\substack{\text{normalized sensitivity} \\ \text{of transition to effort}}}- \underbrace{\frac{p'(x^\star + s_{m+1}-s_m)}{p(x^\star + s_{m+1}-s_m) -p(x^\star)}}_{\substack{\text{normalized sensitivity} \\ \text{of productivity to AI assistance}}}.$$
\end{definition}

Intuitively, a positive sensitivity gap represents an effort-critical scenario, where a reduction in effort leads to a significant decline in skill development, compared to the productivity gain. This guarantees that the direct productivity gain of AI assistance is dominated by the loss from deskilling.
The positive sensitivity gap condition is satisfied by several examples, such as the following function classes (proof in Appendix~\ref{app:pf_positive_sensitivity_gap}). 
\begin{example}\label{ex:positive_sensitivity_gap}
    Given $\lambda$, $\gamma$ and the first m states $s_1,\cdots ,s_m$, for any concave production function $p$ satisfying either 
    1) Unboundedness: $\lim_{x\to \infty}p(x)\to \infty$, or 
    2) Vanishing derivative: $p'(x)>0$ and $\lim_{x\to \infty}p'(x)\to 0$, 
    there exists a threshold $\bar{s}>s_m$ such that $\Delta_m > 0$ if and only if $s_{m+1}>\bar{s}$.
\end{example}
We define the \emph{adjacent interval} ${\cal I}_m = ((x^\star-s_{m+1})^+, (x^\star-s_m)^+]$ for $m\in [N-1]$, with ${\cal I}_0 = [(x^\star-s_1)^+,+\infty)$ and ${\cal I}_N = [0,(x^\star-s_N)^+]$, forming a partition of $\mathbb R^+ = \cup_{m=0}^{N} {\cal I}_m$. Technically, each sub-interval ${\cal I}_m$ corresponds to a smooth region where the productivity exhibits a unimodal pattern, as we show below. Our main result is a characterization of the productivity increase and decline regions within each sub-interval; the latter emerges precisely when the sensitivity gap is positive. We focus on the limiting behavior of the decay rate $\mu$ but Proposition~\ref{prop:bdc_prod_n=2} quantifies the exact decay-rate magnitude required for productivity decline when $N=2$.

\begin{theorem}\label{thm:bdc_productivity}
    Consider any concave production function $p(x)$, linear cost function $c(e)=\gamma e$, concave transition function $\lambda(e)$, and a set of states $S=(s_1,s_2,\cdots,s_N)$. 
    There exists $\bar{\mu}>0$ such that for all $\mu>\bar{\mu}$ and ${\cal I}_m\neq \emptyset$, 
    \begin{enumerate}
    \item If $m\in[N-1]$ and $\Delta_m>0$, productivity $\mathcal{P}(a)$ is increasing-then-decreasing on ${\cal I}_m$. 
    \item If $m\in \{0,N\}$ or $\Delta_m<0$, productivity $\mathcal{P}(a)$ is (weakly) increasing on ${\cal I}_m$. 
    \end{enumerate}
\end{theorem}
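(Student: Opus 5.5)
On ${\cal I}_m$ the effort regime is fixed, so the plan is to write $\mathcal{P}(a)$ explicitly there, extract its leading term as $\mu\to\infty$, and show that this term is log-concave in $a$, with its slope at the right endpoint of sign $-\Delta_m$.

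\textbf{Step 1: effort and productivity on ${\cal I}_m$.} Fix $m$ with ${\cal I}_m\neq\emptyset$ and take $a\in{\cal I}_m$. By Proposition~\ref{prop:effort_productivity_expressions}, states $j\le m$ satisfy $s_j+a\le x^\star$. For these states $e^\star(s_j,a)=x^\star-s_j-a$ and $p^\star(s_j,a)=p(x^\star)$. States $j\ge m+1$ satisfy $s_j+a>x^\star$, so $e^\star=0$ and $p^\star=p(s_j+a)$.

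\textbf{Step 2: the boundary cases.} For $m=0$ every effort is $0$. The rates $\lambda(0)$ are then constant in $a$, so $\boldsymbol\pi$ is constant in $a$ and $\mathcal P(a)=\sum_k\pi_k p(s_k+a)$ is weakly increasing. For $m=N$ every state has $p^\star=p(x^\star)$, so $\mathcal P\equiv p(x^\star)$, which is weakly increasing.

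\textbf{Step 3: reduction to a leading term for $m\in[N-1]$.} Since $\sum_k\pi_k=1$, Proposition~\ref{prop:bdc_expression_s3} gives
$$\mathcal P(a)=p(x^\star)+\sum_{k\ge m+1}\pi_k(a)\big(p(s_k+a)-p(x^\star)\big),\qquad \pi_k(a)=\pi_1(a)\,\mu^{-(k-1)}\prod_{i<k}\lambda(e^\star(s_i,a)).$$
As $\mu\to\infty$ we have $\pi_1=1+O(1/\mu)$, uniformly in $a$, together with its $a$-derivative. Hence
$$\mathcal P(a)-p(x^\star)=\mu^{-m}\big[F(a)+O(1/\mu)\big],\qquad F(a):=\prod_{j=1}^m\lambda(x^\star-s_j-a)\,\big(p(s_{m+1}+a)-p(x^\star)\big),$$
where the $O(1/\mu)$ term is uniform in $a$ and so is its derivative. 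Here $\lambda>0$ and the arguments lie in a compact set, so these bounds hold on the compact closure of ${\cal I}_m$.

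\textbf{Step 4: shape of the leading term.} Write $g=\log F$. Each $\log\lambda(x^\star-s_j-a)$ is concave in $a$, because $\lambda$ is concave and positive. The term $\log(p(s_{m+1}+a)-p(x^\star))$ is concave wherever it is positive, because it is the log of a concave, increasing, positive function. So $g$ is concave and $g'$ is nonincreasing.

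At the left endpoint $a\downarrow x^\star-s_{m+1}$, the factor $p(s_{m+1}+a)-p(x^\star)\to0^+$. So $g\to-\infty$ and $F$ is increasing near that endpoint; in fact $F'>0$ there, driven by the $p'$ term. I expect this needs $p'>0$ just above $x^\star$. If $p$ is flat there, the increasing part may be only weak.

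At the right endpoint $a=x^\star-s_m$, the $\lambda$-arguments are $s_m-s_j$ and the $p$-argument is $x^\star+s_{m+1}-s_m$. Evaluating there gives
$$g'(x^\star-s_m)=-\Delta_m.$$
So if $\Delta_m<0$, then $g'>0$ on the whole interval and $F$ is increasing. If $\Delta_m>0$, then $g'$ changes sign exactly once (up to flat pieces), and $F$ is increasing-then-decreasing.

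\textbf{Step 5 (main obstacle): transferring the shape to $\mathcal P$.} Since $\mu^m(\mathcal P-p(x^\star))'=F'+O(1/\mu)$, the sign pattern transfers wherever $|F'|$ is bounded away from zero. Two places need care.

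First, near the left endpoint. I would check that the next-order terms, of size $O(1/\mu)$ relative to $F$, carry the factors $p(s_k+a)-p(x^\star)$ with $k\ge m+2$. These factors and their derivatives are nonnegative, so they only reinforce the increase.

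Second, near the interior zero $a_0$ of $F'$ when $\Delta_m>0$. To rule out spurious extra oscillations, I would argue that $\mu^m(\mathcal P-p(x^\star))$ is itself log-concave up to a small perturbation. Concretely, I would show that the ratio of the corrections to $F$ has derivative $O(1/\mu)$. Then the log-derivative of $\mathcal P-p(x^\star)$ equals $g'+O(1/\mu)$, and $g'$ is strictly decreasing near $a_0$ if $\lambda$ or $p$ is strictly concave there.

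The first thing I would try for the weakly concave case is to use this log-derivative form and argue monotone crossing directly. Choosing $\bar\mu$ large enough to make all the $O(1/\mu)$ terms smaller than the margins $|\Delta_m|$ and the left-endpoint slope completes the argument.
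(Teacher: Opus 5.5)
Your Steps 1--4 are sound. The endpoint value $g'(x^\star-s_m)=-\Delta_m$ agrees in sign with the paper's lowest-order coefficient $-C_m\Delta_m$. The case $\Delta_m<0$ also goes through your perturbative route. There $g'\ge-\Delta_m>0$, so $F'=Fg'>0$ wherever $F>0$. Where $F$ vanishes (the left end, when $x^\star\ge s_{m+1}$), $F'=\prod_{j\le m}\lambda(s_{m+1}-s_j)\,p'(x^\star)>0$. Your worry about $p'$ there is moot: $p'(x^\star)=\gamma$ because $x^\star>s_m\ge 0$ is an interior maximizer. So $F'$ is bounded below by a positive constant on the closure of ${\cal I}_m$, and a uniform $O(1/\mu)$ error cannot flip its sign.

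The genuine gap is Step 5 when $\Delta_m>0$. Knowing that the log-derivative of $\mathcal{P}-p(x^\star)$ equals $g'+O(1/\mu)$ uniformly only fixes the sign of $\mathcal{P}'$ outside a neighborhood of the zero set of $g'$. Inside that neighborhood, a uniformly small perturbation of a nonincreasing function can vanish several times, so ``increasing-then-decreasing'' does not follow. Closing this perturbatively would need $C^1$ control of the correction together with a uniform bound $g''\le -c<0$ near $a_0$. Neither is available: $\lambda$ is only once differentiable, strict concavity does not give a uniformly negative second derivative, and you leave the weakly concave case open. Separately, your remark that the $k\ge m+2$ corrections only reinforce the increase near the left end is not right as stated, since those terms also carry the decreasing factor $\prod_{i\le m}\lambda(x^\star-s_i-a)$. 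It is unnecessary anyway, because $F'$ has a strictly positive limit at that endpoint.

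The missing idea is that no perturbation is needed: the single-crossing property holds exactly for every $\mu>0$. Your closing suggestion of arguing monotone crossing from the log-derivative works with no error term. Divide the numerator and denominator of your formula by $\prod_{i\le m}\lambda(x^\star-s_i-a)$. This gives $\mathcal{P}(a)-p(x^\star)=U(a)/V(a)$ with $U=\sum_{k\ge m+1}\frac{B_{k,m}}{\mu^{k-1}}\bigl(p(s_k+a)-p(x^\star)\bigr)$ and $V=\sum_{k\le m}\frac{A_{k,m}(a)}{\mu^{k-1}}+\sum_{k\ge m+1}\frac{B_{k,m}}{\mu^{k-1}}$.

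\begin{itemize}
\item $U$ is concave, nonnegative and nondecreasing.
\item Each $A_{k,m}=1/\prod_{i=k}^m\lambda(x^\star-s_i-a)$ is log-convex, by the same log-concavity of $\lambda$ you used for $F$.
\item Sums of log-convex functions (constants included) are log-convex, so $V'/V$ is nonnegative and nondecreasing. This is exactly the paper's Lemma~\ref{lem:thm_productivity_derivative_monotonicity}, proved via Cauchy--Schwarz.
\end{itemize}

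Hence $\mathcal{P}'V=U'-U\,V'/V$ is nonincreasing on ${\cal I}_m$, so $\mathcal{P}'$ changes sign at most once, from $+$ to $-$ (Lemma~\ref{lem:bdc_productivity_unimodal}). Put differently, $\mathcal{P}-p(x^\star)$ itself, not only its leading term $F$, is log-concave on ${\cal I}_m$. With this in hand, large $\mu$ is needed only to fix the sign of $\mathcal{P}'_-$ at the right endpoint, which your Steps 3--4 already deliver. No analysis near $a_0$ and no strict concavity are required.
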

Theorem~\ref{thm:bdc_productivity} reveals a possibly non-monotonic relationship between productivity and AI assistance.
When the sensitivity of skill development to effort is larger than that of productivity to AI assistance, a higher AI assistance level can harm productivity; on the other hand, when the sensitivity of skill development to effort is smaller than that of productivity to AI assistance, productivity increases constantly. The sensitivity gap $\Delta_m$ determines whether AI can negatively influence the productivity within the $m$-th sub-interval: a positive (negative) value represents the cases when AI is potentially harmful (always beneficial). Intuitively, for each sub-interval, AI is only beneficial to productivity until a certain point where AI becomes detrimental. This is to the exclusion of the boundary cases $m=0$ and $N$, where productivity only increases in AI level because either effort or productivity plateaus. The proof is provided in Section~\ref{sec:bdc_degradation}.

We complement Theorem~\ref{thm:bdc_productivity} by establishing that the productivity ratio between higher and lower AI assistance levels can be arbitrarily small; the proof is deferred to Appendix~\ref{app:proof_bcd_bad_epsilon_general}.
\begin{proposition}\label{prop:bcd_bad_epsilon_general}
    For any $\varepsilon>0$ and any given $N\geq 2$ skill states $S=(s_1,s_2,\cdots,s_N)$ with $0=s_1<s_2<\cdots<s_N$, there exists a concave production function $p$, a linear cost function $c$, a concave transition function $\lambda$, and a decay rate $\mu$, such that the ratio of productivity between a higher and a lower AI assistance level is
    $$\inf_{a_\ell <a_h}\dfrac{\mathcal{P}(a_h)}{\mathcal{P}(a_\ell )}\leq \varepsilon.$$ 
\end{proposition}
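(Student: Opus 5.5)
We prove Proposition~\ref{prop:bcd_bad_epsilon_general} by an explicit construction that pits two AI levels against each other. At the lower level $a_\ell=0$ the agent in the bottom state $s_1=0$ exerts positive effort, which pushes the steady state up to $s_2$. At the higher level $a_h=x^\star$ every state exerts zero effort, so the chain collapses onto $s_1$. The comparison then reduces to the ratio $p(x^\star)/p(s_2)$, and we make this ratio small by placing the critical level $x^\star=\delta$ very close to $0$. We only need positive effort at $s_1$, because $s_1<x^\star=\delta\le s_2\le\cdots\le s_N$.

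\textbf{Construction.} Fix $\gamma>0$ and a small $\delta\in(0,s_2)$. Let $p$ be a smoothed version of the piecewise-linear function with slope $2\gamma$ on $[0,\delta]$ and slope $\gamma/2$ afterwards, with $p(0)=0$. The smoothing is done near $\delta$, so $p$ is concave, increasing, twice differentiable, and satisfies $\limsup p(x)/x=\gamma/2<\gamma$. The same smoothing keeps $p(\delta)\le 2\gamma\delta$ and $p(s_2)\ge \tfrac{\gamma}{2}(s_2-\delta)$. We choose it so that the critical level in \eqref{eq:critical_level} is $x^\star=\delta$, or within $o(\delta)$ of it, which suffices. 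For the transitions, set $\mu=1$ and take a differentiable concave increasing $\lambda$ with $\lambda(0)=\eta$ and $\lambda(e)\ge L$ for $e\ge\delta/2$; for instance, smooth $\eta+(L-\eta)\min(2e/\delta,1)$. Here $\eta\ll 1\ll L$.

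\textbf{Steady-state computation via Proposition~\ref{prop:bdc_expression_s3}.} We use Proposition~\ref{prop:effort_productivity_expressions} for efforts and productivities in each regime.
\begin{enumerate}
\item[(i)] At $a_\ell=0$, we have $e^\star(s_1,0)=\delta$ and $e^\star(s_k,0)=0$ for $k\ge2$. The unnormalized weights are therefore $1,\,L,\,L\eta,\,L\eta^2,\dots$, and $\pi_2(0)\ge L/(1+L+L\eta N)\to 1$. Hence $\mathcal P(0)\ge \pi_2(0)\,p(s_2)\ge \tfrac12\cdot\tfrac{\gamma}{2}(s_2-\delta)$ for $L$ large and $\eta$ small.
\item[(ii)] At $a_h=\delta$, every effort is zero, all ratios equal $\eta$, and $\pi_1(\delta)\ge 1/(1+N\eta)$. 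Productivity is $p^\star(s_1,\delta)=p(\delta)\le 2\gamma\delta$ at $s_1$, and at most $p(s_N+\delta)$ elsewhere. Thus $\mathcal P(\delta)\le 2\gamma\delta+N\eta\, p(s_N+\delta)$.
\end{enumerate}

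\textbf{Conclusion and main obstacle.} Combining (i) and (ii),
\[
\frac{\mathcal P(\delta)}{\mathcal P(0)}\le \frac{2\gamma\delta+N\eta\,p(s_N+1)}{\tfrac{\gamma}{4}(s_2-\delta)}.
\]
Choosing $\delta$ small, then $\eta$ small, then $L$ large makes the right-hand side at most $\varepsilon$, which bounds the infimum. The delicate point is the sandwich $p(x^\star)\le\mathcal P(a)\le p(s_N+a)$, which holds for every $a$. Together with the concavity bound $p(y)/p(x)\le y/x$ for $y\ge x$, it shows that a small ratio forces $x^\star$ to be small relative to the top state reached at $a_\ell$. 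This rules out schemes that require positive effort at every state up to $s_{N-1}$, and it is why the construction uses only the jump $s_1\to s_2$ and relies on $s_1=0<s_2$. The remaining work is routine: smoothing $p$ and $\lambda$ so that they are differentiable and concave while $x^\star$ stays at (or within $o(\delta)$ of) $\delta$ and the bounds on $p(\delta)$ and $p(s_2)$ survive.
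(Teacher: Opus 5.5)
Your construction is correct and is essentially the paper's. The paper also compares $a_h=x^\star$, where effort is zero everywhere and mass collapses onto $s_1$ as $\lambda(0)\to 0$, with $a_\ell=x^\star-z$; its optimization over $z$ lands at $x^\star=z$, i.e.\ $a_\ell=0$, with a small critical level, $\lambda(x^\star)$ large, and a kinked, nearly linear $p$, so the ratio behaves like $x^\star/s_2$. Your version replaces the paper's limiting lemma in $\lambda(0)$ with explicit non-asymptotic bounds and also handles smoothness of $p$, which the paper's piecewise-linear $p$ ignores. One small fix: set $x^\star=\delta$ exactly (or take $a_h=x^\star$) instead of allowing an $o(\delta)$ offset. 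Otherwise a residual effort $x^\star-\delta>0$ at $s_1$ under $a_h=\delta$ would be amplified by the steep $\lambda$ once $L$ is taken large after $p$ is fixed.
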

This result demonstrates the existence of arbitrarily large productivity decline instances for any prescribed set of skill states. 

\subsection{Technical crux: skill degradation and sensitivity gap}\label{sec:bdc_degradation}
Here, we unpack the skill degradation mechanism behind the potential productivity decline. We start with analyzing how AI affects the distribution of skills across different states. We examine how the skill distribution changes under different AI levels $a$: because AI assistance and effort are perfect substitutes in our model, AI assistance mechanically leads to effort reduction and thus skill degradation.
\begin{proposition}\label{prop:bdc_skill_dist}
    Consider any concave production function $p(x)$, linear cost function $c(e)=\gamma e$, concave transition function $\lambda(e)$, a set of states $S=(s_1,s_2,\cdots,s_N)$, and two levels of AI assistance $a_{\ell}$ and $a_h$ with $0 \leq a_\ell < a_h$. Then the steady-state distribution under a lower AI assistance level $a_\ell$ is first-order stochastically greater than that under a higher AI assistance level $a_h$:
    $$\sum_{i=1}^k \pi_i(a_\ell) \leq \sum_{i=1}^k\pi_i(a_h) \quad \text{for all} \quad  k\in [N].$$
\end{proposition}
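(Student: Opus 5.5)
We compare the two steady-state distributions through their closed forms from Proposition~\ref{prop:bdc_expression_s3}. Write $r_i(a):=\lambda(e^\star(s_i,a))/\mu$ for the ratio of upward to downward rates at state $s_i$. Then $\pi_k(a)\propto w_k(a):=\prod_{i=1}^{k-1} r_i(a)$, with $w_1(a)=1$.

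\textbf{Step 1: the ratios fall as AI assistance rises.} By Proposition~\ref{prop:effort_productivity_expressions}, $e^\star(s_i,a)=(x^\star-s_i-a)^+$, which is weakly decreasing in $a$. Since $\lambda$ is increasing, $r_i(a_h)\le r_i(a_\ell)$ for every $i$. The ratio $\rho_i:=r_i(a_h)/r_i(a_\ell)$ therefore lies in $(0,1]$, where positivity uses $\lambda>0$.

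\textbf{Step 2: a monotone likelihood ratio.} Consider the ratio of unnormalized weights
\[
\frac{w_k(a_h)}{w_k(a_\ell)}=\prod_{i=1}^{k-1}\rho_i .
\]
It is weakly decreasing in $k$, because each added factor $\rho_k$ is at most one. After normalization, $\pi_k(a_h)/\pi_k(a_\ell)$ equals this product times the positive constant $\pi_1(a_h)/\pi_1(a_\ell)$. So $\pi(a_h)/\pi(a_\ell)$ is nonincreasing in $k$: the distribution $\pi(a_\ell)$ dominates $\pi(a_h)$ in the likelihood-ratio order.

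\textbf{Step 3: from likelihood ratio to first-order dominance.} Let $L_k=\pi_k(a_\ell)$ and $H_k=\pi_k(a_h)$, both summing to one, with $H_k/L_k$ nonincreasing. There is then an index $k_0$ such that $H_k\ge L_k$ for $k\le k_0$ and $H_k\le L_k$ for $k>k_0$; a ratio above one cannot occur after a ratio below one, since both sequences have the same total mass and a nonincreasing ratio. It follows that $\sum_{i\le k}(H_i-L_i)$ is nondecreasing up to $k_0$ and nonincreasing afterwards. This partial sum starts at $H_1-L_1\ge 0$ and ends at $0$ when $k=N$, so it is nonnegative for all $k$. That is exactly the claimed inequality $\sum_{i=1}^k\pi_i(a_\ell)\le\sum_{i=1}^k\pi_i(a_h)$.

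Alternatively, a direct route avoids the crossing argument. Write $\sum_{i\le k}\pi_i(a)=A_k(a)/(A_k(a)+B_k(a))$, where $A_k=\sum_{j\le k}w_j$ and $B_k=\sum_{j>k}w_j$. The claim is then equivalent to
\[
B_k(a_h)/A_k(a_h)\le B_k(a_\ell)/A_k(a_\ell).
\]
This holds because each $w_j$ with $j>k$ factors as $w_{k}\prod_{i=k}^{j-1}r_i$, and every ratio is weakly smaller under $a_h$, while the weights $w_{j'}$ with $j'\le k$ are relatively larger under $a_h$.

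The main obstacle is Step 3, converting the monotone ratio into the cumulative inequality. It is only a standard single-crossing lemma, and I would state it carefully so that ties where $\rho_i=1$ are handled. The concavity of $p$ and $\lambda$ is not needed beyond its use in Proposition~\ref{prop:effort_productivity_expressions}.
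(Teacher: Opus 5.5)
Your proof is correct, but it takes a different route from the paper.

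The paper isolates an algebraic lemma (Lemma~\ref{lem:ratio_ineq}). For nonnegative $x_i\ge y_i$, the normalized partial sums of $\prod_{i\le m}x_i$ are dominated by those of $\prod_{i\le m}y_i$. The paper proves this by cross-multiplying and expanding the double sum over $(m,n)$. The common block $n\le k$ cancels, and the remaining terms $\prod_{i\le m}(x_iy_i)\prod_{j=m+1}^{n}x_j$ and $\prod_{i\le m}(x_iy_i)\prod_{j=m+1}^{n}y_j$ are compared term by term.

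You instead note that $\pi_k(a_h)/\pi_k(a_\ell)$ is proportional to $\prod_{i<k}\rho_i$, hence nonincreasing in $k$. You then use the standard fact that likelihood-ratio dominance implies first-order dominance. What each route buys:
\begin{itemize}
\item Your route proves a strictly stronger conclusion, the likelihood-ratio order. It also makes the mechanism transparent: every upward rate weakly falls, so mass shifts down uniformly in the ratio sense.
\item The paper's lemma is pure algebra and never divides, so it tolerates zero ratios. Since $\lambda>0$ here, that generality is not needed.
\end{itemize}
Your ``direct route'' via $B_k/A_k$ is essentially the paper's comparison, normalized by $w_k$.

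One small imprecision in Step~3. That a ratio $\ge 1$ cannot follow a ratio $<1$ is just monotonicity; equal total mass is what you actually need to get $H_1\ge L_1$. Cleaner still, avoid that step entirely:
\begin{itemize}
\item For $k\le k_0$, every term $H_i-L_i$ with $i\le k$ is nonnegative.
\item For $k>k_0$, write $\sum_{i\le k}(H_i-L_i)=-\sum_{i>k}(H_i-L_i)$, which is $\ge 0$ termwise.
\end{itemize}
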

Proposition~\ref{prop:bdc_skill_dist} demonstrates that AI leads to a downward shift in the steady-state skill distribution. It provides an intuition for why AI assistance may not directly translate to an overall productivity gain; in Section~\ref{sec:bdc_discussion}, we connect this result to the uneven distributional effect that is central to Simpson's paradox. The proof of Proposition~\ref{prop:bdc_skill_dist} is deferred to Appendix~\ref{app:proof_bdc_skill_dist}.

\paragraph{Sensitivity condition and memory decay} Proposition~\ref{prop:bdc_skill_dist} provides insights into the potential productivity decline; we now delve into the sufficient conditions on the limiting behavior of the decay rate $\mu$. Technically, a productivity decline region emerges under two conditions. The first condition is a positive sensitivity gap, which quantifies the dominance of the loss from deskilling over AI gains beyond Proposition~\ref{prop:bdc_skill_dist}. The second condition is a sufficiently large memory decay rate. In particular, when $\mu$ approaches zero, skill converges to the highest state with high probability, thereby recovering the basic model in Section~\ref{sec:basic} and excluding the possibility of a productivity decline.
Below, we provide the proof idea of Theorem~\ref{thm:bdc_productivity} to illustrate how these conditions matter. We begin with a lemma that characterizes the derivative. 
Given that $\lambda(\cdot)>0$, we define 
$$A_{k,m}(a):=\frac{1}{\prod_{i=k}^{m}\lambda(x^\star -s_i-a)}, \forall k\in[m] \quad \text{and} \quad B_{k,m}=\lambda(0)^{k-m-1} , \forall k\in[N]\setminus [m].$$
\begin{lemma}\label{lem:thm_productivity_derivative}
Given $m\in[N-1]$, for any $a\in (x^\star -s_{m+1}, x^\star -s_{m}]$, the derivative of productivity is
\begin{align*}
    \mathcal{P}'(a) 
    = \frac{1}{\big(\sum_{k=1}^m \frac{A_{k,m}(a)}{\mu^{k-1}} +\sum_{k=m+1}^{N} \frac{B_{k,m}}{\mu^{k-1}}\big)^2} \cdot
    &\bigg[\Big( \sum_{k=m+1}^{N} \frac{B_{k,m}}{\mu^{k-1}} p'(s_k+a)\Big) \Big(\sum_{k=1}^m \frac{A_{k,m}(a)}{\mu^{k-1}} +\sum_{k=m+1}^{N} \frac{B_{k,m}}{\mu^{k-1}}\Big) \\
    &- \Big( \sum_{k=m+1}^{N} \frac{B_{k,m}}{\mu^{k-1}}(p(s_k+a)-p(x^\star))\Big) \sum_{k=1}^m \frac{A_{k,m}'(a)}{\mu^{k-1}}\bigg].
\end{align*}
\end{lemma}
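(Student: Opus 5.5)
The plan is to write $\mathcal{P}(a)$ on the interval $(x^\star-s_{m+1},x^\star-s_m]$ as an explicit ratio $N(a)/D(a)$ and then apply the quotient rule.

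\textbf{Step 1: effort and productivity on the interval.} By Proposition~\ref{prop:effort_productivity_expressions}, for $a$ in this interval we have $s_k+a\le x^\star$ for every $k\le m$. For these states $e^\star(s_k,a)=x^\star-s_k-a$ and $p^\star(s_k,a)=p(x^\star)$. For every $k\ge m+1$ we have $s_k+a>x^\star$, so $e^\star(s_k,a)=0$ and $p^\star(s_k,a)=p(s_k+a)$. Consequently, the upward rates out of states $1,\dots,m$ are $\lambda(x^\star-s_i-a)$, which depend on $a$. The upward rates out of states $m+1,\dots,N-1$ all equal $\lambda(0)$.

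\textbf{Step 2: unnormalized weights.} Proposition~\ref{prop:bdc_expression_s3} gives $\pi_k\propto \prod_{i=1}^{k-1}\lambda(e^\star(s_i,a))/\mu^{k-1}$. I divide every weight by the common factor $\prod_{i=1}^{m}\lambda(x^\star-s_i-a)/\mu^{m}$.

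For $k\le m$, the weight becomes
$$\mu^{m-k+1}\Big/\prod_{i=k}^{m}\lambda(x^\star-s_i-a)=\mu^{m}\,A_{k,m}(a)/\mu^{k-1}.$$

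For $k\ge m+1$, the weight becomes $\mu^{m}\lambda(0)^{k-m-1}/\mu^{k-1}=\mu^m B_{k,m}/\mu^{k-1}$.

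The common factor $\mu^m$ cancels in the ratio. Hence $\pi_k$ is proportional to $w_k:=A_{k,m}(a)/\mu^{k-1}$ for $k\le m$, and to $w_k:=B_{k,m}/\mu^{k-1}$ for $k>m$, with $D(a)=\sum_k w_k$.

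\textbf{Step 3: subtract $p(x^\star)$ and differentiate.} Since $\sum_k \pi_k=1$, I can write
$$\mathcal{P}(a)=p(x^\star)+\frac{\sum_{k>m} w_k\,(p(s_k+a)-p(x^\star))}{D(a)},$$
because the terms with $k\le m$ contribute exactly $p(x^\star)$. This is the key simplification.

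In the numerator, the $B_{k,m}$ do not depend on $a$, so its derivative is $\sum_{k>m}w_k\,p'(s_k+a)$. In the denominator, only the first sum depends on $a$, so $D'(a)=\sum_{k\le m}A'_{k,m}(a)/\mu^{k-1}$.

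The quotient rule $(N'D-ND')/D^2$ then gives exactly the displayed formula.

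\textbf{Regularity.} Differentiability of $A_{k,m}$ follows from the differentiability of $\lambda$ together with $\lambda>0$. At the right endpoint $a=x^\star-s_m$, the argument $x^\star-s_m-a$ equals $0$, so the derivative there is interpreted one-sided (left), which is consistent with the formula. No genuine obstacle arises; the only care needed is in the index bookkeeping of Step 2 and in noticing the $p(x^\star)$ subtraction trick.
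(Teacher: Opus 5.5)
Your proposal is correct and follows the paper's proof essentially step for step: the same case split via Proposition~\ref{prop:effort_productivity_expressions}, the same normalization by $\prod_{i=1}^m\lambda(x^\star-s_i-a)$ (your extra $\mu^m$ factor cancels harmlessly), the same subtraction of $p(x^\star)$ to isolate the $k>m$ terms, and then the quotient rule. Your remark that the formula at the right endpoint must be read as a left derivative also matches how the paper later uses it, via $\mathcal{P}'_{-}$.
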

Lemma~\ref{lem:thm_productivity_derivative} allows us to characterize the existence of a decreasing region, as stated next. Below, we define $\mathcal{P}'_{-}(a):=\lim_{h \to 0^-} \frac{\mathcal{P}(a+h)-\mathcal{P}(a)}{h}$ and $\mathcal{P}'_{+}(a):=\lim_{h \to 0^+} \frac{\mathcal{P}(a+h)-\mathcal{P}(a)}{h}$ (if the limit exists) as the left-hand and right-hand derivative, respectively.
\begin{lemma}\label{lem:bdc_productivity_unimodal}
    For $m\in [N-1]$, the productivity $\mathcal{P}(a)$ is either always increasing or increasing-then-decreasing on an adjacent interval ${\cal I}_m$. In particular, a decreasing region of productivity $\mathcal{P}(a)$ exists within ${\cal I}_m$ if and only if $\mathcal{P}'_{-}(a)<0$ at the right endpoint of ${\cal I}_m$. 
\end{lemma}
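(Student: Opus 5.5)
On ${\cal I}_m=(x^\star-s_{m+1},x^\star-s_m]$ (with $m\in[N-1]$), I would first write the steady-state productivity as a ratio and reduce the sign of $\mathcal{P}'(a)$ to the sign of a difference of two logarithmic derivatives. That difference turns out to be monotone, which gives the unimodality.

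\textbf{Step 1: the ratio form.} By Proposition~\ref{prop:effort_productivity_expressions}, for $a\in{\cal I}_m$ the low states $k\le m$ have effort $x^\star-s_k-a>0$ and productivity $p(x^\star)$. The high states $k>m$ have zero effort, upward rate $\lambda(0)$ and productivity $p(s_k+a)$. Multiplying the expressions of Proposition~\ref{prop:bdc_expression_s3} through by $1/\prod_{i\le m}\lambda(x^\star-s_i-a)$ (times powers of $\mu$), as in Lemma~\ref{lem:thm_productivity_derivative}, gives
$$\mathcal{P}(a)=p(x^\star)+\frac{N(a)}{D(a)},$$
with
$$N(a)=\sum_{k>m}\frac{B_{k,m}}{\mu^{k-1}}\big(p(s_k+a)-p(x^\star)\big),\qquad D(a)=\sum_{k\le m}\frac{A_{k,m}(a)}{\mu^{k-1}}+\sum_{k>m}\frac{B_{k,m}}{\mu^{k-1}}.$$
Lemma~\ref{lem:thm_productivity_derivative} then says that $\mathcal{P}'(a)$ has the sign of $g(a):=N'(a)D(a)-N(a)D'(a)$. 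Wherever $N(a)>0$, this is the sign of
$$h(a):=\frac{N'(a)}{N(a)}-\frac{D'(a)}{D(a)}.$$

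\textbf{Step 2: monotonicity of the two log-derivatives.}
\begin{itemize}
\item Since $s_k\ge s_{m+1}$ and $a>x^\star-s_{m+1}$, every summand $p(s_k+a)-p(x^\star)$ is nonnegative and concave in $a$. Hence $N$ is nonnegative and concave. On $\{N>0\}$, $\log N$ is concave, so $N'/N$ is nonincreasing.
\item Since $\lambda>0$ is concave, $\log\lambda$ is concave. Therefore $a\mapsto-\log\lambda(x^\star-s_i-a)$ is convex, and each $A_{k,m}$ is log-convex.
\item The $B$-terms form a positive constant, which is trivially log-convex. Sums of log-convex functions are log-convex, so $D$ is log-convex and $D'/D$ is nondecreasing.
\end{itemize}
Consequently $h$ is nonincreasing on $\{N>0\}$. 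So $h$, and hence $\mathcal{P}'$, can change sign at most once, and only from positive to negative. On any subinterval where $N\equiv0$ we have $g=N'D\ge0$, so $\mathcal{P}$ is weakly increasing there. Since $N$ is nondecreasing (each $p(s_k+\cdot)$ is), such a subinterval can only sit at the left end of ${\cal I}_m$, which is consistent with the increasing-first pattern.

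\textbf{Step 3: the endpoint criterion.} All functions involved are differentiable on the closed interval once we take one-sided limits, so $\mathcal{P}'_-$ exists at the right endpoint $x^\star-s_m$ and equals the limit of $\mathcal{P}'$. By the single-crossing property of Step~2:
\begin{itemize}
\item If a decreasing region exists, then $g<0$ on a terminal segment of ${\cal I}_m$. Because $h$ is nonincreasing, $g$ then stays negative up to the right endpoint, which gives $\mathcal{P}'_-(x^\star-s_m)<0$.
\item Conversely, $\mathcal{P}'_-(x^\star-s_m)<0$ forces $g<0$ on a left neighborhood of the right endpoint, by continuity.
\end{itemize}

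\textbf{Main obstacle.} I expect the delicate part to be the left end of ${\cal I}_m$, namely guaranteeing an initial increasing phase so that the dichotomy ``always increasing / increasing-then-decreasing'' is exact.
\begin{itemize}
\item When $N(a)\to0$ as $a\downarrow x^\star-s_{m+1}$ with $N'>0$, we get $N'/N\to\infty$, so $h>0$ near the left endpoint and the increasing phase exists.
\item When $N$ stays positive at the left endpoint, for instance if $s_{m+2}>s_{m+1}$, this limit argument fails, and $\mathcal{P}$ could a priori be decreasing on all of ${\cal I}_m$.
\end{itemize}
In that case I would either read ``increasing-then-decreasing'' as allowing a degenerate increasing part, or try to show that $h$ is positive near the left endpoint by using that the $k=m+1$ summand of $N$ vanishes there.
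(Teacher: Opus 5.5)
Your proof is correct, and it organizes the single-crossing argument differently from the paper. The paper never divides by $N$. It writes $\mathcal{P}'(a)D(a)=N'(a)-N(a)\cdot\frac{D'(a)}{D(a)}$ and uses three facts: $N'$ is nonincreasing (concavity of $p$), $N$ is nonnegative and nondecreasing, and $D'/D$ is nonnegative and nondecreasing. The right-hand side is then a nonincreasing function minus a product of two nonnegative nondecreasing functions. Monotonicity of $D'/D$ is Lemma~\ref{lem:thm_productivity_derivative_monotonicity}. It is proved by computing second derivatives, using $\lambda''\le 0$ to get $\bar A_{k,m}''\ge\bar A_{k,m}\big(\sum_{j\ge k}\lambda'/\lambda\big)^2$, and then applying Cauchy--Schwarz. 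You instead compare two logarithmic derivatives. You use log-concavity of the concave numerator and log-convexity of $D$, the latter from closure of log-convex functions under sums, which is that Cauchy--Schwarz step packaged abstractly. What your route buys: it needs only first derivatives of $\lambda$ (the standing assumptions do not provide $\lambda''$) and never uses $D'\ge 0$. What it costs: you need the case split on $\{N=0\}$, which is in fact either all of ${\cal I}_m$ or empty, depending on whether $p$ is flat just to the right of $x^\star$. The paper's factorization absorbs $N=0$ automatically, at the price of using monotonicity of $p$ and $\lambda$.

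On your ``main obstacle'': the paper's proof gives exactly what yours gives, namely at most one sign change of $\mathcal{P}'$, from positive to negative. It says nothing further about the left end, so ``increasing-then-decreasing'' has to be read with a possibly empty increasing phase, which is your first option. Your second option fails in general, because when $s_{m+2}>s_{m+1}$ the summands with $k\ge m+2$ can keep the numerator bounded away from zero at the left endpoint. Take three states $(0,1,41)$, $\gamma=1$, a concave $p$ with $x^\star=2$ and slope close to $1$ on $[2,42]$, $\lambda(e)=1+e$ and $\mu=1$. As $a\downarrow 1$ you get denominator $2.5$, its derivative $1/4$, numerator $\approx 40$ and numerator derivative $\approx 2$. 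Your $g$ is then about $5-10<0$, so $\mathcal{P}$ decreases on all of ${\cal I}_1=(1,2]$. A genuine increasing phase appears only in regimes like Theorem~\ref{thm:bdc_productivity}. There, for large $\mu$ and $x^\star\ge s_{m+1}$, the term $p'(x^\star)A_{1,m}(x^\star-s_{m+1})/\mu^{m}$ dominates $g$ at the left endpoint.
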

The proofs of Lemma~\ref{lem:thm_productivity_derivative} and Lemma~\ref{lem:bdc_productivity_unimodal} are deferred to Appendix~\ref{app:thm_productivity_derivative} and Appendix~\ref{app:proof_bdc_productivity}, respectively.
Equipped with these lemmas, we now prove Theorem~\ref{thm:bdc_productivity}. 

\begin{proof}[Proof of Theorem~\ref{thm:bdc_productivity}]
Recall the steady-state productivity is given by Proposition~\ref{prop:bdc_expression_s3}. We start from the cases $m=N$ and $m=0$, where we want to show that the steady-state productivity (weakly) increases. Proposition~\ref{prop:effort_productivity_expressions} implies that
\begin{itemize}
    \item If $a \in {\cal I}_N$, the productivity at each state remains  $p(x^\star)$ and the productivity $\mathcal{P}(a)=p(x^\star)$ is a constant.
    \item If $a \in {\cal I}_0$, the utility-maximizing effort is $0$ at each state and the productivity $ \mathcal{P}(a)$ is increasing in~$a$.
\end{itemize}
We then focus on intermediate cases where $a \in {\cal I}_m$, where $m\in [N-1]$ and ${\cal I}_m$ is non-empty. As the denominator of the derivative is always positive, Lemma~\ref{lem:thm_productivity_derivative} implies that the sign of the derivative at the right endpoint of ${\cal I}_m$ is determined by the numerator $X_m$, where
\begin{align*}    
    X_m &= \Big( \sum_{k=m+1}^{N} \frac{B_{k,m}}{\mu^{k-1}} p'(x^\star+s_k-s_m) \Big) \cdot\Big(\sum_{k=1}^m \frac{A_{k,m}(x^\star-s_m)}{\mu^{k-1}} +\sum_{k=m+1}^{N} \frac{B_{k,m}}{\mu^{k-1}}\Big) \\
    &- \Big( \sum_{k=m+1}^{N} \frac{B_{k,m}}{\mu^{k-1}}(p(x^\star+s_k-s_m)-p(x^\star))\Big) \cdot \Big( \sum_{k=1}^m \frac{A_{k,m}'(x^\star-s_m)}{\mu^{k-1}}\Big).
\end{align*}
Lemma~\ref{lem:bdc_productivity_unimodal} implies that the existence of a productivity decreasing region depends solely on the sign of $\mathcal{P}'_{-}(x^\star-s_m)$. The numerator $X_m$ can be viewed as a polynomial function of $\frac{1}{\mu}$ and expressed as $f(\frac{1}{\mu})$. Substituting the expression of $A_{k,m}$ and $B_{k,m}$, the coefficient of the lowest-order term $(\frac{1}{\mu})^{m}$ is 
\begin{align*}
    & B_{m+1,m} \cdot p'(x^\star+s_{m+1}-s_m) \cdot A_{1,m}(x^\star-s_m) - B_{m+1,m} \cdot(p(x^\star+s_{m+1}-s_m)-p(x^\star)) \cdot A_{1,m}'(x^\star-s_m)\\
    =& \frac{1}{\prod_{i=1}^{m}\lambda(s_m -s_i)} \bigg( p'(x^\star+s_{m+1}-s_m)-\Big(p(x^\star+s_{m+1}-s_m)-p(x^\star)\Big)\sum_{j=1}^m\frac{\lambda'(s_m-s_j)}{\lambda(s_m -s_j)}\bigg)=-C_m\Delta_m,
\end{align*}
where $C_m=\frac{p(x^\star+s_{m+1}-s_m)-p(x^\star)}{\prod_{i=1}^{m}\lambda(s_m -s_i)}$ is a positive value. By the continuity of $f$, there exists a value $\bar{\mu}_m$ such that when $\mu>\bar{\mu}_m$, it holds that the sign of $f(\frac{1}{\mu})$ coincides with the sign of $-\Delta_m$. Thus, we have the following case distinction:
1) $\Delta_m>0$: when $\mu>\bar{\mu}_m$, it holds that $\mathcal{P}'_{-}(x^\star-s_m)<0$ and there exists a decreasing region on ${\cal I}_m$;
2) $\Delta_m<0$: when $\mu>\bar{\mu}_m$, it holds that $\mathcal{P}'_{-}(x^\star-s_m)>0$ and productivity $\mathcal{P}(a)$ is always increasing on ${\cal I}_m$. Finally, taking $\bar{\mu}:=\max_{m\in[N-1]} \bar{\mu}_m$ completes the proof.
\end{proof}

\paragraph{Complete characterization for the two-state case}
The effect of memory decay can be made more explicit in the special case of $N=2$ states ($s_1<s_2$). Here, we derive a sharper version of Theorem~\ref{thm:bdc_productivity} specialized for $N=2$. To do so, we define the following threshold on the memory decay rate:
\begin{equation*}
    \bar{\mu}_2:= \dfrac{\lambda^2(0)p'(x^\star+s_2-s_1)}{\lambda'(0) \big(p(x^\star+s_2-s_1)-p(x^\star)\big)-\lambda(0)p'(x^\star+s_2-s_1)}.
\end{equation*}
\begin{proposition}\label{prop:bdc_prod_n=2}
   Consider any concave production function $p(x)$, linear cost function $c(e)=\gamma e$, concave transition function $\lambda(e)$, and two states $s_1<s_2$. 
    \begin{enumerate}
        \item If $\Delta_1>0$ and $\mu > \bar{\mu}_2$, productivity $\mathcal{P}(a)$ is increasing-then-decreasing on ${\cal I}_1$, and (weakly) increasing on ${\cal I}_0$ and ${\cal I}_2$.
        \item If $\Delta_1\leq 0$ or $\mu \leq \bar{\mu}_2$, productivity $\mathcal{P}(a)$ is (weakly) increasing on $[0,\infty)$.
    \end{enumerate}
\end{proposition}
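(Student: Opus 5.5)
On ${\cal I}_1$ the stationary distribution and productivity have closed forms, so the plan is to differentiate directly and then apply the endpoint criterion of Lemma~\ref{lem:bdc_productivity_unimodal}.

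\textbf{Reduction to ${\cal I}_1$.} The intervals ${\cal I}_0$ and ${\cal I}_2$ are handled exactly as in the proof of Theorem~\ref{thm:bdc_productivity}, since that part did not use $\mu$. On ${\cal I}_2$ both states exert effort bringing input to $x^\star$, so $\mathcal{P}(a)=p(x^\star)$ is constant. On ${\cal I}_0$ both states exert zero effort, so $\mathcal{P}$ is increasing because $p$ is increasing. Since $\mathcal{P}$ is continuous in $a$ (as $e^\star$ and $p^\star$ are), the whole question reduces to monotonicity on ${\cal I}_1=((x^\star-s_2)^+,(x^\star-s_1)^+]$, assumed nonempty.

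\textbf{Closed form and derivative on ${\cal I}_1$.} By Proposition~\ref{prop:effort_productivity_expressions}, for $a\in{\cal I}_1$ the effort in state $s_1$ is $e_1(a)=x^\star-s_1-a>0$ and the effort in state $s_2$ is $0$. By Proposition~\ref{prop:bdc_expression_s3},
\begin{equation*}
\pi_2(a)=\frac{\lambda(e_1(a))}{\mu+\lambda(e_1(a))}, \qquad \mathcal{P}(a)=p(x^\star)+\pi_2(a)\big(p(s_2+a)-p(x^\star)\big).
\end{equation*}
Differentiating, with $\lambda$ and $\lambda'$ evaluated at $e_1(a)$ and using $\frac{d}{da}\pi_2=-\lambda'\mu/(\mu+\lambda)^2$, gives
\begin{equation*}
\mathcal{P}'(a)=\frac{1}{(\mu+\lambda)^2}\Big[\lambda(\mu+\lambda)\,p'(s_2+a)-\mu\lambda'\,\big(p(s_2+a)-p(x^\star)\big)\Big].
\end{equation*}
This is the $N=2$, $m=1$ instance of Lemma~\ref{lem:thm_productivity_derivative}.

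\textbf{Endpoint sign analysis.} By Lemma~\ref{lem:bdc_productivity_unimodal}, $\mathcal{P}$ is either increasing or increasing-then-decreasing on ${\cal I}_1$. A decreasing region exists iff $\mathcal{P}'_-(x^\star-s_1)<0$. At that endpoint $e_1=0$. Write
\begin{equation*}
D:=p(x^\star+s_2-s_1)-p(x^\star)\quad\text{and}\quad q:=p'(x^\star+s_2-s_1).
\end{equation*}
The condition then becomes
\begin{equation*}
\mu\big(\lambda'(0)D-\lambda(0)q\big)>\lambda(0)^2 q.
\end{equation*}
Since $D>0$ whenever $p$ is not flat on that range, $\lambda'(0)D-\lambda(0)q$ has the sign of $\Delta_1=\lambda'(0)/\lambda(0)-q/D$.

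If $\Delta_1>0$, dividing by the positive factor gives exactly $\mu>\bar{\mu}_2$. This yields part 1: increasing-then-decreasing on ${\cal I}_1$, and weakly increasing on ${\cal I}_0$ and ${\cal I}_2$.

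If $\Delta_1\le 0$, the left side is $\le 0\le$ the right side, so no decreasing region exists. If $\Delta_1>0$ but $\mu\le\bar{\mu}_2$, the inequality fails, so $\mathcal{P}'_-\ge 0$ and again no decreasing region exists. In both situations $\mathcal{P}$ is weakly increasing on ${\cal I}_1$. Gluing the three intervals by continuity then gives part 2: $\mathcal{P}$ is weakly increasing on $[0,\infty)$.

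\textbf{Main obstacle.} The only delicate point is the degenerate case, together with the equality case $\mu=\bar{\mu}_2$ where $\mathcal{P}'_-=0$.
\begin{itemize}
\item The degenerate case is $D=0$ or $q=0$, where $\Delta_1$ or $\bar{\mu}_2$ is ill-defined. If $D=0$, then $p$ is constant beyond $x^\star$ and $\mathcal{P}$ is constant on ${\cal I}_1$, which is consistent with part 2.
\item For the equality case, I would rely on the ``if and only if'' in Lemma~\ref{lem:bdc_productivity_unimodal}, which requires a strictly negative endpoint derivative. So a zero derivative at the endpoint places us in the weakly increasing case.
\end{itemize}
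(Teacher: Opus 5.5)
Your proof is correct and follows the paper's argument. It uses the same closed form $\mathcal{P}(a)=p(x^\star)+\pi_2(a)\big(p(s_2+a)-p(x^\star)\big)$ on ${\cal I}_1$, the same derivative, and the same right-endpoint inequality $\lambda(0)^2p'(x^\star+s_2-s_1)<\mu\big(\lambda'(0)D-\lambda(0)p'(x^\star+s_2-s_1)\big)$, split into $\Delta_1>0$ and $\mu>\bar{\mu}_2$. The only difference is that the paper proves single-crossing directly, since the numerator of $\mathcal{P}'$ is decreasing in $a$ by concavity of $p$ and $\lambda$, and it checks $\mathcal{P}'_{+}(x^\star-s_2)>0$ explicitly so that the increasing phase is nonempty, whereas you take both from Lemma~\ref{lem:bdc_productivity_unimodal}.
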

Proposition~\ref{prop:bdc_prod_n=2} gives a full characterization of productivity in the case of $N=2$, the productivity exhibits an N-shaped pattern---an initial increase followed by a decline and a subsequent increase---given that the sensitivity of skill development to effort and the memory decay rate are large enough ($\Delta_1>0$ and $\mu>\bar{\mu}_2$). Conversely, we prove that if these conditions are not met, productivity always increases with AI assistance. The proof is deferred to Appendix~\ref{app:proof_bdc_productivity_n=2}.
Compared to the general case, the expression of productivity in the two-state case exhibits a simpler dependence on $\mu$. Proposition~\ref{prop:bdc_prod_n=2} gives the sufficient and necessary conditions considering sensitivity and memory decay jointly. In contrast, in the general setting, we require $\mu$ to be sufficiently large.

\subsection{Discussion}\label{sec:bdc_discussion}
\paragraph{Connection to Simpson's paradox.}
Simpson’s Paradox \citep{simpson1951interpretation} describes a phenomenon where a trend or relationship observed within subgroups reverses when these subgroups are aggregated. Even if a treatment appears to be effective in every subgroup, its effectiveness in the overall population is not guaranteed due to the uneven distribution effects.
In our setting, one might expect AI to increase the overall productivity since AI directly increases productivity within each skill state. However, skill levels are themselves endogenous. Proposition~\ref{prop:bdc_skill_dist} shows that AI leads to a downward shift in the steady-state skill distribution. As a larger proportion concentrates on the low-skill states, the overall productivity can decrease if the negative effect of skill degradation exceeds the direct productivity gains from AI assistance.
For example, when $N=2$, the overall productivity is expressed as $\mathcal{P}(a) =\pi_1(a) p^\star(s_1,a) + \pi_2(a)p^\star(s_2,a)$. While both state-level productivity functions $p^\star(s_i,a)$ (weakly) increase in $a$, a larger steady-state mass $\pi_1(a)$ on the lower state $s_1$ can lead to a decrease in the overall productivity.

\paragraph{The paradox disappears under exogenous transition rates.} 
Theorem~\ref{thm:bdc_productivity} identifies a productivity paradox. It determines in which conditions increased AI assistance can degrade productivity. Even if AI is a perfect substitute for effort within each state, our model internalizes the impact of AI on effort, and thus, on skill development. This phenomenon broadly emerges in settings where skill evolves endogenously as a function of effort and the sensitivity to effort is sufficiently strong.
To isolate the effect of endogenous effort in skill development, we contrast with the setting where transition rates are fully exogenous. If the transition rates depend only on the state (i.e., $\lambda_k$ depends on $s_k$) rather than on effort, then a similar version of Proposition~\ref{prop:effort_productivity_expressions} and Corollary~\ref{cor:effort_productivity} still holds, i.e., productivity is increasing in $a$; see Appendix~\ref{app:exogenous_transition}. 

\section{The Impact of AI Unreliability: Effort Cannibalization}\label{sec:AI_unreliability}
AI systems are rarely fully reliable. For example, LLMs have a so-called ``jagged frontier'' \citep{dell2023navigating}: they can excel at certain complex tasks while being inaccurate or erroneous on more basic ones, making their degree of usability across tasks difficult to discern. Individuals using AI assistance could passively accept the outputs or information it provides, without exerting adequate effort to critically judge it. When we account for unreliability and myopic user behavior, the monotonic relationship between AI assistance and productivity in Corollary~\ref{cor:effort_productivity} breaks down again. This second productivity paradox arises when the marginal benefit of increased AI proficiency is offset by a cannibalization of effort in the face of unreliable AI outputs.

\subsection{Model of unreliable AI assistance}
We model the AI assistance level $a$ as a random variable, to capture uncertainty in outputs. We introduce a reliability level $q\in[0,1]$, such that the AI functions desirably with a proficiency $\bar{a}$ with probability $q$, i.e., $\mathbb{P}(a=\bar{a})=q$, and fails to assist with probability $1-q$, i.e., $\mathbb{P}(a=0)=1-q$. This notion of unreliability abstracts away various practical sources of randomness in LLM outputs---random sampling of a prompt from a distribution of tasks and user instructions, and random generation of output tokens influenced by how the LLM is configured (e.g., temperature).

We now extend our basic model to incorporate AI unreliability. Suppose that the agent has a prior belief on the AI assistance level and commits an effort level before the AI's output is realized. Another interpretation is that the agent has no means of verifying the output quality, and therefore optimizes the effort ex-ante based on its expectation for the proficiency of the AI system. For example, a student using AI to translate a report decides how quickly to proofread in advance based on the general prior belief about the tool's accuracy. Similarly, a developer using AI to generate code decides how much to review and test in advance.
Formally, the expected utility of the agent is $u(e,s,\bar{a},q):=q\cdot p(s+e+\bar{a}) + (1-q)\cdot p(s+e)-\gamma e$, and the utility-maximizing effort becomes:
$$e^\star(s,\bar{a},q):=\arg\max_{e\geq 0}u(e,s,\bar{a},q).$$
The resulting productivity at the utility-maximizing effort is defined as
$$p^\star(s,\bar{a},q):= q\cdot p(s+e^\star(s,\bar{a},q)+\bar{a}) + (1-q)\cdot p(s+e^\star(s,\bar{a},q)).$$
Note that when $q=1$, this model reduces to our basic model, where $e^\star(s,\bar{a},1)=e^\star(s,\bar{a})$ and $p^\star(s,\bar{a},1)=p^\star(s,\bar{a})$, as characterized in Proposition~\ref{prop:effort_productivity_expressions}.

\subsection{Main result: emergence of productivity decline region}
Our main result in this section shows that, for production processes with accelerating diminishing returns, a notion that we formalize in Definition~\ref{def:risk_averse}, an increase in AI proficiency can reduce productivity in the presence of AI unreliability. This occurs when (unreliable) AI substantially cannibalizes human effort, outweighing its direct productivity gain. In contrast, for production processes with decelerating diminishing returns, productivity increases with AI proficiency. 
To formalize this, we define the rate of diminishing marginal returns for arbitrary production functions. This notion is defined using the terminology of risk aversion.

\begin{definition}\label{def:risk_averse}
    Given a twice differentiable production function $p$, for an input level $x$, its absolute risk aversion (ARA) is $A(x):=-\frac{p''(x)}{p'(x)}$. We say that $p$ satisfies increasing (decreasing) absolute risk aversion, or IARA (DARA) for short, if $A(x)$ is increasing (decreasing).
\end{definition}

Here we adopt terminology from standard economic literature \cite[Chap.~11]{varian1992microeconomic}. For a concave production function $p$, the absolute risk aversion $A(x)\geq 0$ measures the rate at which marginal return decreases. An increasing value of $A(x)$ means that the marginal return decays more rapidly as the input $x$ increases---IARA is a ``strong form'' of diminishing returns; while a decreasing value of $A(x)$, or DARA, corresponds to weakening marginal return decay.  
Although DARA appears more frequently in utility theory, there are also commonly used production functions in practice that (partially) satisfy IARA, as discussed in Section~\ref{sec:unreliability_discussion}. 

Intuitively, the IARA property creates situations where the indirect cannibalization of effort by AI assistance outweighs the direct productivity gain. A productivity paradox ensues for the corresponding production functions, as stated in the next theorem, whose proof is provided in Section~\ref{sec:unreliable_technical}.

\begin{theorem}\label{thm:unreliable_concave3}
    Consider any concave production function $p(x)$ and linear cost function $c(e)=\gamma e$. Suppose the skill level is upper bounded by the critical level, i.e., $s< x^\star$ and the reliability $q$ satisfies $0<q<1$. 
    \begin{enumerate}
        \item If $p$ satisfies IARA, productivity $p^\star(s, \bar{a}, q)$ is initially decreasing then possibly increasing in $\bar{a}$.
        \item If $p$ satisfies DARA, productivity $p^\star(s, \bar{a}, q)$ is always increasing in $\bar{a}$. 
    \end{enumerate}
\end{theorem}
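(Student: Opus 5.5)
The plan is to reduce everything to the first-order condition for the ex-ante effort and differentiate productivity along the resulting optimal-input path. Write $x := s+e^\star(s,\bar a,q)$ for the agent's own input and $y := x+\bar a$ for the input in the successful-AI branch. The expected utility $q\,p(s+e+\bar a)+(1-q)\,p(s+e)-\gamma e$ is concave in $e$. So, as in Proposition~\ref{prop:effort_productivity_expressions}, the optimum is either interior, characterized by
\begin{equation*}
F(x,\bar a) := q\,p'(x+\bar a)+(1-q)\,p'(x) = \gamma ,
\end{equation*}
or the corner $e^\star=0$, which occurs when $F(s,\bar a)\le\gamma$. At $\bar a=0$ the condition reads $p'(x)=\gamma$, so $x=x^\star>s$ and effort is strictly positive. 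Since $F$ is nonincreasing in $\bar a$, effort weakly decreases in $\bar a$. Hence there is a threshold $\hat a\in(0,\infty]$ with an interior optimum on $[0,\hat a)$ and zero effort on $[\hat a,\infty)$.

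\emph{Corner region $\bar a\ge\hat a$.} Here $p^\star(s,\bar a,q)=q\,p(s+\bar a)+(1-q)\,p(s)$, which is weakly increasing in $\bar a$ because $p$ is. This gives the ``possibly increasing'' tail in the IARA case and the tail in the DARA case.

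\emph{Interior region $\bar a<\hat a$.} This is the key computation. Set $D := q\,p''(y)+(1-q)\,p''(x)\le 0$. Assuming first that $D<0$, the implicit function theorem gives
\begin{equation*}
x'(\bar a) = -\frac{q\,p''(y)}{D}.
\end{equation*}
By the envelope-type identity,
\begin{equation*}
\frac{d}{d\bar a}\Big[q\,p(y)+(1-q)\,p(x)\Big] = q\,p'(y)\big(1+x'\big)+(1-q)\,p'(x)\,x' = q\,p'(y)+\gamma\,x' .
\end{equation*}
Substituting $x'$ and then replacing $\gamma$ by $q\,p'(y)+(1-q)\,p'(x)$, the terms in $q^2$ cancel. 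The derivative collapses to
\begin{equation*}
\frac{dp^\star}{d\bar a} = \frac{q(1-q)}{D}\Big[p'(y)\,p''(x)-p'(x)\,p''(y)\Big] = \frac{q(1-q)\,p'(x)\,p'(y)}{D}\,\big(A(y)-A(x)\big).
\end{equation*}
Since $0<q<1$, $D<0$, $p'>0$ on the relevant range (recall $p'(x)\ge\gamma>0$ around the interior optimum), and $y>x$ whenever $\bar a>0$, the sign of the derivative is the sign of $A(x)-A(y)$.

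Under IARA, $A(y)>A(x)$, so productivity strictly decreases on $(0,\hat a)$ and then weakly increases on $[\hat a,\infty)$. This is exactly ``initially decreasing then possibly increasing''; the increase is absent when $\hat a=\infty$. Under DARA the sign flips, so productivity increases on the interior region, and together with the corner region it is increasing throughout. Continuity of $p^\star$ at $\hat a$ follows from continuity of the maximizer, which glues the two regions.

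The main obstacle I expect is regularity. First, the case $D=0$: $p$ may be linear on a segment, the maximizer may be set-valued, and the paper's largest-maximizer convention applies. Second, possible non-strictness of IARA/DARA. I would first treat strictly concave $p$ with strictly monotone $A$ as above. For the weak cases I would argue by comparing optimal inputs at two values $\bar a_1<\bar a_2$ directly, or by approximating $p$ by strictly concave functions and passing to the limit. I would also check that the statement's ``initially decreasing'' is genuinely strict near $\bar a=0^+$. This follows because $x$ starts at $x^\star$ with $p'(x^\star)=\gamma>0$, so effort is strictly interior on a neighborhood of $0$ and the derivative formula applies there.
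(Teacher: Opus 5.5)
Your proposal is correct and follows essentially the same route as the paper. The paper isolates your interior-region computation as a sign lemma, obtained by implicitly differentiating the first-order condition to get $\frac{q(1-q)p'(x)p'(y)}{D}\,(A(y)-A(x))$. It then splits at the same threshold $\tau$ where $q\,p'(s+\bar a)+(1-q)\,p'(s)$ crosses $\gamma$, with the corner region giving derivative $q\,p'(s+\bar a)>0$.

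Your worry about $D=0$ does not arise for the theorem as stated. Strictly monotone $A$ with $A\ge 0$ allows $p''=0$ at no more than one point, so $D<0$ whenever $\bar a>0$. The paper treats the genuinely degenerate case (relaxed IARA/DARA) separately in the appendix.
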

Theorem~\ref{thm:unreliable_concave3} shows that under unreliability, for the large family of production functions satisfying IARA, improving AI proficiency below a certain threshold may hurt productivity; beyond this threshold, productivity increases in AI proficiency. This is reminiscent of but distinct from the J-curve in macroeconomics (see Section~\ref{sec:unreliability_discussion} for a comparison). Framing our production properties within expected utility theory gives further intuition on the implications of Theorem~\ref{thm:unreliable_concave3}; in Section~\ref{sec:unreliability_discussion}, we observe that an IARA production function amounts to a negative interaction between effort and a notion of {\em certainty-equivalent} AI assistance; the negative interaction between these factors explains the risk of productivity loss. 

As a complement to Theorem~\ref{thm:unreliable_concave3}, we show that the productivity ratio between higher and lower AI assistance levels can be arbitrarily small, implying an arbitrarily large productivity decline. The proof is provided in Section~\ref{sec:unreliable_technical}.
\begin{proposition}\label{prop:exante_linear_bad}
    For any $\varepsilon>0$, there exists a concave production function $p$, a linear cost function $c$, a skill level $s$, and a reliability $q$, 
    such that the ratio of productivity between a higher and a lower proficiency is
    $$\inf_{a_\ell <a_h}  \dfrac{p^\star(s,a_h, q)}{p^\star(s, a_\ell, q)} \leq \varepsilon.$$
\end{proposition}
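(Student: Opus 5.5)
We prove the statement by an explicit construction. We take $p$ to be a smoothed version of the ``ramp'' $x\mapsto\min(x,1)$, which is IARA-like because its curvature is concentrated near the kink. We compare the deterministic-free case $a_\ell=0$ with a large proficiency $a_h$. The mechanism is the cannibalization described in Theorem~\ref{thm:unreliable_concave3}. With $a_\ell=0$, the agent pushes total input up to the critical level and earns roughly $1$. With large $a_h$ and small reliability $q$, the expected marginal return of effort drops below $\gamma$, so the agent exerts no effort and earns only about $q$.

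\textbf{Step 1 (construction).} Fix $\delta\in(0,1/4)$ and let $p_\delta$ be a $C^2$, nondecreasing, concave function. It equals $x$ on $[0,1-\delta]$ and equals $1$ on $[1+\delta,\infty)$. On $[1-\delta,1+\delta]$ it is obtained by integrating twice a smooth nonpositive bump for $p''$ with total mass $-1$. Then $p_\delta\ge 0$, $p_\delta'(0)=1$, $p_\delta\le 1$, and $p_\delta\ge 1-\delta$ on $[1,\infty)$. Also $\limsup_{x\to\infty}p_\delta(x)/x=0<\gamma$ for every $\gamma>0$. Choose $q=\varepsilon/2$ (for $\varepsilon<1$), any $\gamma\in(1-q,1)$, and a skill level $s\in(0,\delta)$.

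\textbf{Step 2 (lower proficiency).} For $a_\ell=0$ the unreliability is irrelevant, and the model reduces to the basic model. Proposition~\ref{prop:effort_productivity_expressions} therefore gives $p^\star(s,0,q)=\max(p(x^\star),p(s))=p(x^\star)$. Since $\gamma<1$ and $p'_\delta$ decreases continuously from $1$ to $0$ on $[1-\delta,1+\delta]$, we have $x^\star\in[1-\delta,1+\delta]$. Hence $p(x^\star)\ge p(1-\delta)=1-\delta$.

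\textbf{Step 3 (higher proficiency).} Take $a_h\ge 1+\delta$. The expected utility $u(e)=q\,p(s+e+a_h)+(1-q)p(s+e)-\gamma e$ is concave in $e$. Its right derivative at $e=0$ is $q\cdot 0+(1-q)p'(s)=1-q<\gamma$, so $e^\star=0$. Productivity is then $q\cdot 1+(1-q)p(s)=q+(1-q)s\le q+\delta$. The ratio is therefore at most $(q+\delta)/(1-\delta)$, and choosing $\delta\le\varepsilon/4$ makes it at most $\varepsilon$.

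\textbf{Main obstacle.} The only delicate point is Step 1: building a genuinely $C^2$ concave smoothing while keeping $p'=1$ on a neighborhood of $s$ and $p$ constant for large inputs. A naive smooth choice such as $p(x)=(1-e^{-kx})/k$ fails. It is CARA, and the requirement $1-q<\gamma$ then forces the ratio $q/(1-\gamma)$ to exceed $1$. This is why the curvature must be concentrated near a single point. The choice of a compactly supported bump for $p''$ resolves this cleanly.
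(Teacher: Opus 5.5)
Your proof is correct and follows essentially the paper's route. The paper takes the unsmoothed ramp $p(x)=\min(1,\beta x)$ with $s=0$, $a_\ell=0$, $a_h=1/\beta$ and $q>(\beta-\gamma)/\beta$, reads off $p^\star(0,0,q)=1$ and $p^\star(0,1/\beta,q)=q$ from Lemma~\ref{lem:unreliable_linear}, and sets $q=\varepsilon$; this is your construction with $\beta=1$ before smoothing, while your $C^2$ smoothing (take the bump symmetric about $1$ so the plateau is exactly $1$) and choice $s>0$ make the instance respect the standing twice-differentiability and positive-skill assumptions, much as the paper's appendix perturbation in Proposition~\ref{prop:exante_relax_bad} does.
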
 

\subsection{Technical crux: uncertainty-driven effort choice}\label{sec:unreliable_technical}
Here, we examine the mechanism behind potential productivity decline under unreliability and illustrate how absolute risk aversion matters.
Intuitively, the impact of AI can be decomposed into two opposite effects: implicit effort reduction and direct productivity boost. When AI is unreliable, the cannibalization of human effort can dominate the actual AI contribution. Technically, to understand how the productivity result relies on the absolute risk aversion coefficient, we outline the proof idea as follows. We begin with a lemma that characterizes the sign of the derivative of the productivity function. Below, we define $\operatorname{sgn}(\cdot)$ be the sign function.
\begin{lemma}\label{lem:unreliable_productivity_derivative_origin_sign}
    Consider any concave production function $p(x)$ that satisfies either the IARA or DARA condition, and linear cost function $c(e)=\gamma e$. Fix $s< x^\star$, $\bar{a}>0$, $0<q<1$, and write $e^\star:=e^\star (s,\bar{a},q)$. The sign of the derivative is
    \begin{equation*}
        \operatorname{sgn} \Big(\frac{\partial p^\star}{\partial \bar{a}} (s,\bar{a},q)\Big)=
        \begin{cases}
            \operatorname{sgn}\Big(-\big(A(s+e^\star +\bar{a}) - A(s+e^\star)\big)\Big), & \text{if } q p'(s+\bar{a})+ (1-q) p'(s) > \gamma,\\
            1, & \text{if } q p'(s+\bar{a})+ (1-q) p'(s) < \gamma.
        \end{cases}
    \end{equation*}
\end{lemma}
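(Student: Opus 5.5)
Since $p$ is concave, $u(e,s,\bar a,q)$ is concave in $e$. Its right derivative at $e=0$ is $g(0):=q p'(s+\bar a)+(1-q)p'(s)-\gamma$, where
\[
g(e):=q p'(s+e+\bar a)+(1-q)p'(s+e)-\gamma .
\]
I would split into the two cases of the statement according to the sign of $g(0)$.

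\emph{Case $g(0)<\gamma$ in the original notation, i.e.\ $g(0)<0$.} Here $g(e)<0$ for all $e\ge 0$ because $g$ is nonincreasing. Hence $e^\star=0$ in a neighbourhood of $\bar a$, since the strict inequality persists under small perturbations of $\bar a$ by continuity of $p'$. Then $p^\star(s,\bar a,q)=q p(s+\bar a)+(1-q)p(s)$, and its derivative in $\bar a$ is $q p'(s+\bar a)$. This is positive provided $p'>0$ at that point. Under IARA or DARA, $A=-p''/p'$ is defined, which forces $p'>0$, so the sign is $1$.

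\emph{Case $g(0)>0$.} The optimal effort is interior and satisfies the first-order condition
\[
q p'(s+e^\star+\bar a)+(1-q)p'(s+e^\star)=\gamma .
\]
(Existence of a root follows from $\limsup p(x)/x<\gamma$, which eventually forces $g<0$.) Write $x_1=s+e^\star+\bar a$ and $x_0=s+e^\star$. When $g'(e^\star)<0$, the implicit function theorem gives
\[
\frac{de^\star}{d\bar a}=-\frac{q p''(x_1)}{q p''(x_1)+(1-q)p''(x_0)} .
\]
Differentiating the productivity gives
\[
\frac{\partial p^\star}{\partial \bar a}=q p'(x_1)+\gamma\,\frac{de^\star}{d\bar a},
\]
where I have used the first-order condition to replace $q p'(x_1)+(1-q)p'(x_0)$ by $\gamma$.

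Substituting and putting everything over the common denominator $D=q p''(x_1)+(1-q)p''(x_0)\le 0$, the numerator becomes
\[
q p'(x_1)D-\gamma q p''(x_1)=q(1-q)\big[p'(x_1)p''(x_0)-p'(x_0)p''(x_1)\big].
\]
This expression equals $q(1-q)p'(x_0)p'(x_1)\big(A(x_1)-A(x_0)\big)$. Dividing by $D<0$ flips the sign, so the sign of the derivative is $\operatorname{sgn}\big(-(A(x_1)-A(x_0))\big)$, as claimed.

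\emph{Main obstacle.} The delicate step is justifying differentiability, i.e.\ ensuring $D<0$ strictly rather than $D=0$. If $D=0$, then $p''=0$ at both points, so $A(x_1)=A(x_0)=0$. In that degenerate case I would argue directly that the argmax set is a segment on which $p'$ is locally constant, and take the largest maximizer. The derivative then reduces to $q p'(x_1)+\gamma\,\frac{de^\star}{d\bar a}$ with $\frac{de^\star}{d\bar a}$ computed from the shifting endpoint, and I would check the sign matches $0$ or interpret it as a boundary case. Alternatively, I would note that under strict IARA or DARA, $A$ is not identically $0$ on intervals, so $p''<0$ except on a measure-zero set and continuity handles the rest. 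Everything else is routine calculus.
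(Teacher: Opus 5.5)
Your proposal is correct and follows essentially the paper's route: a case split on the sign of $qp'(s+\bar a)+(1-q)p'(s)-\gamma$, the corner solution with derivative $qp'(s+\bar a)>0$, and implicit differentiation of the first-order condition giving $q(1-q)p'(x_0)p'(x_1)\big(A(x_1)-A(x_0)\big)/D$. The paper obtains this from a more general lemma for relaxed IARA/DARA and then specializes; your ``main obstacle'' is also simpler than you fear, since $D=0$ would force $A(x_1)=A(x_0)=0$ with $x_1\neq x_0$, contradicting strict monotonicity of $A$, so $D<0$ automatically.
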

Lemma~\ref{lem:unreliable_productivity_derivative_origin_sign} connects the sign of the derivative and the value of two absolute risk aversion terms, whose proof is deferred to Appendix~\ref{app:proof_sign_lemma}. Equipped with Lemma~\ref{lem:unreliable_productivity_derivative_origin_sign}, we now prove Theorem~\ref{thm:unreliable_concave3}.
\begin{proof}[Proof of Theorem~\ref{thm:unreliable_concave3}]
Since $p$ is concave and $s<x^\star$, there exists a unique threshold $\tau\in (0, \infty]$ such that the condition $q p'(s+\tau)+ (1-q) p'(s) > \gamma$ holds if and only if $\bar{a} < \tau$. 
\begin{enumerate}
\item If $p$ satisfies IARA, Lemma~\ref{lem:unreliable_productivity_derivative_origin_sign} implies that $\frac{\partial p^\star(s,\bar{a},q)}{\partial \bar{a}} < 0$ when $q p'(s+\bar{a})+ (1-q) p'(s) > \gamma$ (i.e., $\bar{a} < \tau$), and $\frac{\partial p^\star(s,\bar{a},q)}{\partial \bar{a}} > 0$ when $q p'(s+\bar{a})+ (1-q) p'(s) < \gamma$ (i.e., $\bar{a} > \tau$). Thus, the productivity is decreasing-then-increasing in $\bar{a}$, with the increasing region degenerate when $\tau=\infty$ (see Remark~\ref{rmk:possibly_decrease} for a discussion on the degenerate case).
\item If $p$ satisfies DARA, Lemma~\ref{lem:unreliable_productivity_derivative_origin_sign} implies that the derivative is always positive. Thus, the productivity $p^\star(s, \bar{a}, q)$ is always increasing in $\bar{a}$.
\end{enumerate}
Combining both cases, we complete the proof.
\end{proof}
\begin{remark}\label{rmk:possibly_decrease}
    As shown in the proof of Theorem~\ref{thm:unreliable_concave3}, the productivity is always decreasing when the production function satisfies IARA and $(1-q)p'(s)\geq \gamma$. In this case, the utility-maximizing effort is always positive and decreases in AI proficiency; the effort cannibalization outweighs the direct gain from unreliable AI assistance, leading to a productivity decline constantly.
\end{remark}
To establish the ratio in Proposition~\ref{prop:exante_linear_bad}, we investigate the effort and productivity for a specific piecewise linear class of production functions as stated next, whose proof is deferred to Appendix~\ref{app:proof_unreliable_linear}.
\begin{lemma}\label{lem:unreliable_linear}
    Consider production function $p(x)=\min(1, \beta x)$ and cost function $c(e)=\gamma e$, where $\beta>\gamma$. The utility-maximizing effort and the corresponding productivity are given by:
    \begin{align*}
    &e^\star (s,\bar{a},q)=
        \begin{cases}
            (\frac{1}{\beta}-s)^+, & \text{if } q \leq \frac{\beta-\gamma }{\beta},\\
            (\frac{1}{\beta}-s-\bar{a})^+, & \text{otherwise}.
        \end{cases}\\
        &p^\star (s,\bar{a},q)=
        \begin{cases}
            1, & \text{if } q \leq \frac{\beta-\gamma }{\beta},\\
            q + (1-q)\cdot \max\left( \min(1,\beta s),  1-\beta \bar{a}\right), & \text{otherwise}.
        \end{cases}
    \end{align*}
\end{lemma}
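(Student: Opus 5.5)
The plan is to compute the expected utility $u(e,s,\bar a,q)=q\min(1,\beta(s+e+\bar a))+(1-q)\min(1,\beta(s+e))-\gamma e$ explicitly as a function of $e\ge 0$. It is concave and piecewise linear in $e$, so I will locate its maximizer from the signs of its slopes, using the largest-maximizer convention of \eqref{eq:basic_effort_def}. Then I will substitute the maximizer back to obtain $p^\star$.

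\textbf{Step 1 (breakpoints and slopes).} The two kinks of $u$ in $e$ are at $e_1:=(\tfrac1\beta-s-\bar a)^+$, where the AI-success term saturates, and $e_2:=(\tfrac1\beta-s)^+$, where the AI-failure term saturates. Since $\bar a\ge 0$, we have $e_1\le e_2$. The slopes of $u$ are:
\begin{itemize}
\item on $[0,e_1)$: $q\beta+(1-q)\beta-\gamma=\beta-\gamma>0$;
\item on $(e_1,e_2)$: $(1-q)\beta-\gamma$;
\item on $(e_2,\infty)$: $-\gamma<0$.
\end{itemize}
Hence $u$ increases up to $e_1$ and decreases beyond $e_2$. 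The middle slope is nonnegative exactly when $q\le \frac{\beta-\gamma}{\beta}$.
\begin{itemize}
\item If $q\le \frac{\beta-\gamma}{\beta}$, the largest maximizer is $e_2$. At equality the middle slope is zero, so $u$ is flat on $[e_1,e_2]$ and the largest-maximizer convention still selects $e_2$.
\item Otherwise the middle slope is negative and the maximizer is $e_1$.
\end{itemize}
This gives the stated effort formula.

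\textbf{Step 2 (productivity).} First suppose $q\le\frac{\beta-\gamma}{\beta}$, so $e^\star=e_2$. Then $s+e^\star\ge 1/\beta$, both terms equal $1$, and $p^\star=1$.

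Now suppose $q>\frac{\beta-\gamma}{\beta}$, so $e^\star=e_1$. Then $s+e^\star+\bar a\ge 1/\beta$, so the success term equals $1$ and $p^\star=q+(1-q)\min(1,\beta(s+e_1))$. I split on whether $e_1$ is zero.
\begin{itemize}
\item If $e_1=0$, i.e.\ $s+\bar a\ge 1/\beta$, the failure term is $\min(1,\beta s)$. In this case $\beta s\ge 1-\beta\bar a$, and also $1\ge 1-\beta\bar a$, so $\min(1,\beta s)\ge 1-\beta\bar a$. The failure term therefore equals $\max(\min(1,\beta s),1-\beta\bar a)$.
\item If $e_1>0$, then $\beta(s+e_1)=1-\beta\bar a$. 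Here $\beta s<1-\beta\bar a$, so $\min(1,\beta s)<1-\beta\bar a$, and the failure term again equals the stated maximum.
\end{itemize}

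\textbf{Main obstacle.} No step is deep. The only delicate points are handling the tie $q=\frac{\beta-\gamma}{\beta}$ through the largest-maximizer convention, and verifying that the single $\max(\cdot,\cdot)$ expression matches both sub-cases $e_1=0$ and $e_1>0$. The degenerate cases where $e_1$ or $e_2$ is clipped at $0$ (for instance $s\ge 1/\beta$) are absorbed by the $(\cdot)^+$ notation and should be checked directly.
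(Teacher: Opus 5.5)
Your proposal is correct and follows essentially the same route as the paper, which also writes $u(\cdot,s,\bar a,q)$ as a three-piece linear function of $e$ with slopes $\beta-\gamma$, $(1-q)\beta-\gamma$, $-\gamma$, splits on the sign of the middle slope (i.e., $q$ versus $\frac{\beta-\gamma}{\beta}$), and substitutes the maximizer back. Your explicit handling of the tie $q=\frac{\beta-\gamma}{\beta}$ via the largest-maximizer convention, and your check of the $\max(\min(1,\beta s),1-\beta\bar a)$ identity in the sub-cases $e_1=0$ and $e_1>0$, supply details that the paper asserts without verification.
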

Lemma~\ref{lem:unreliable_linear} characterizes the effort and productivity for the piecewise linear class: when reliability $q$ is low, agents maintain full effort and achieve high productivity; when reliability $q$ is high, agents reduce their effort, leading to a decline in productivity. 
This piecewise linear instance establishes the arbitrarily small ratio in Proposition~\ref{prop:exante_linear_bad} and will also be instrumental for the results in Section~\ref{sec:AI_literacy}.
\begin{proof}[Proof of Proposition~\ref{prop:exante_linear_bad}]
Consider $s=0$, $p(x)=\min(1, \beta x)$ and $c(e)=\gamma e$, where $\beta >\gamma $. Let $q> \frac{\beta -\gamma }{\beta}$. Lemma~\ref{lem:unreliable_linear} implies that $p^\star (0,\frac{1}{\beta}, q)=q$ and $p^\star (0, 0, q)=1$. Thus, by selecting two proficiency levels $a_\ell=0$ and $a_h=\frac{1}{\beta}$, the productivity ratio is given by $\frac{p^\star (0,\frac{1}{\beta}, q)}{p^\star (0, 0, q)}=q$, which can be made equal to $\varepsilon$ by choosing $\beta, \gamma, q$ such that $\frac{\beta -\gamma }{\beta}=\frac{\varepsilon}{2}$ and $q=\varepsilon$. 
\end{proof} 
\begin{remark}\label{rmk:relaxed_IARA}
    Although the standard notion of IARA requires  strict monotonicity, a similar version of Theorem~\ref{thm:unreliable_concave3} holds for a relaxed notion of IARA, where the production function is capped. In particular, this allows for a class of perturbed piecewise linear production functions. We detail this extension in Appendix~\ref{app:proof_unreliable_relaxation}. Moreover, a similar arbitrarily small ratio holds with a slight perturbation of the piecewise linear instance that also satisfies the relaxed IARA condition; see Appendix~\ref{app:exante_linear_bad}.
\end{remark}

\subsection{Discussion}\label{sec:unreliability_discussion}
\paragraph{IARA versus DARA functions.} 
IARA and DARA describe different patterns of diminishing returns in the production function, distinguishing cases in which AI is potentially harmful or always beneficial.\footnote{Studies in agricultural economics also make the contrast between IARA and DARA \citep{leathers1991interactions}: for risk-increasing agricultural resources (e.g., fertilizer), an increase in the price decreases usage under DARA but increases it under IARA utilities.}
In utility theory, examples of DARA include power law and logarithmic functions, $p(x)=x^{\frac{1}{1+c}}$ and $p(x)=\log(c x+1)$, where $c>0$; examples of IARA include Gaussian function $p(x)=\int_0^x \exp(-t^2)dt$ \cite[Chap.~14]{abbas2018foundations}, expo-power function $p(x)=1-\exp(-b x^c)$ for $b>0, c>1$ \citep{saha1993expo}, and the relaxed IARA includes truncated quadratic function $p(x)=-c_2x^2+c_1 x$ for $x\leq \frac{c_1}{2c_2}$ \citep{vcerny2012computation}. 
In utility theory, DARA appears more common than IARA, yet production economics more naturally admits IARA. Among commonly used production functions, the translog \citep{berndt1973translog} and transcendental multiproduct function \citep{mundlak1964transcendental} are IARA under certain domains and parameter specifications (see Appendix~\ref{app:IARA_examples}). 

\paragraph{Connection to increasing hazard rate.} IARA can also be interpreted via an equivalence with increasing hazard rate (IHR). Consider the production function $p(x)=\mathbb E [\min(X,x)]$, where $X\geq 0$ is a random variable, with a cumulative distribution function $F(x)$ and a probability density function $f(x)$. The associated hazard rate is then $H(x)=\frac{f(x)}{1-F(x)}=-\frac{p''(x)}{p'(x)}$. Thus, under this specification, the IARA condition of $p(x)=\mathbb E [\min(X,x)]$ is mathematically equivalent to IHR of $X$, which holds for various distributions such as uniform, Gamma, and truncated Gaussian \citep{bagnoli2005log}. Under this lens, uniform distributions can be mapped to truncated quadratic functions, and Gaussian distributions to Gaussian functions.

\paragraph{Comparison with the J-curve in macroeconomics.} We identify either a decreasing or a decreasing-then-increasing V-shaped pattern in the relationship between productivity and AI proficiency, given production functions satisfying IARA.
The result in our setting should not be confused with the J-curve in technology adoption in macroeconomics \citep{brynjolfsson2021productivity}, which describes transient adaptation of economic factors to fixed productivity shocks. The J-curve depicts a temporal phenomenon where technology initially depresses productivity growth, while later productivity rises as early intangible investments begin to yield returns.
In our setting, we show that, for production functions satisfying IARA, unreliability can create a similar decrease-then-increase behavior with respect to technology adoption (or AI proficiency in our setting). That said, our result sheds light on a distinct mechanism: productivity can decrease because effort cannibalization dominates the direct benefit of AI assistance when the technology is unreliable. Using a micro-founded model of productivity, this shows how the adoption of immature AI technology could backfire; the productivity decline occurs exactly under accelerating diminishing returns of production.

\paragraph{Connection to expected utility theory.} The analysis of this section depends on absolute risk aversion, a standard concept in expected utility theory \citep{varian1992microeconomic}. In this framework, an individual has an expected utility $\mathbb E [U(x)]$, where $x$ represents the uncertain outcome being consumed. The uncertain outcome under risk can be associated with a certainty-equivalent value $x_{CE}$, where $U(x_{CE})=\mathbb E [U(x)]$. Building upon the Arrow-Pratt approximation \cite[Chap.~3.4]{gollier2001economics}, the absolute risk premium $\mathbb E [x] - x_{CE} \approx \frac{1}{2}A(\mathbb E[x]) \operatorname{Var}(x)$ (Figure~\ref{fig:tikz_expected_utility_a}).
In our setting, the random input $x=s+e+a$ can be associated with a certainty-equivalent value $x_{CE}=s+e+a_{CE}$ defined as $p(x_{CE})=\mathbb E [p(x)]$, where $a_{CE}$ is the certainty-equivalent AI assistance. The absolute risk premium is approximated by $q\bar{a} - a_{CE} \approx \frac{1}{2}A(q\bar{a}+s+e) q(1-q)\bar{a}^2$ (Figure~\ref{fig:tikz_expected_utility_b}). Thus, under the IARA condition, increasing the effort $e$ raises the absolute risk aversion and thus reduces the certainty-equivalent AI assistance $a_{CE}$. Because exerting more effort diminishes the effective contribution of AI, this negative interaction can lead to a decline in total productivity. Note that this is not merely a substitution effect of AI; rather, it reduces the equivalent contribution of AI itself. In contrast, the negative interaction between AI and effort, and the productivity paradox, do not arise in the DARA case.\footnote{The certainty-equivalent representation is purely expository. It is a reinterpretation of realized expected productivity and should not be confused with our technical analysis.}

\begin{figure}[H]
    \centering
    \begin{minipage}[b]{0.48\textwidth}
        \centering
        \begin{tikzpicture}[scale=0.7]
        \draw[->] (0,0) -- (6,0) node[right] {$x$};
        \draw[->] (0,0) -- (0,4) node[left] {$U$};
        \draw[thick, black]
          plot[smooth, domain=0:5]
          (\x, {1.7*sqrt(\x)}); 
        
        \draw[dashed] (2.3,0) -- (2.3,{1.7*sqrt(2.3)});
        \draw[dashed] (0, {1.7*sqrt(2.3)}) -- (2.3,{1.7*sqrt(2.3)});
        \node[below] at (2.3,0) {$x_{CE}$};
        \draw[dashed] (4,0) -- (4,{1.7*sqrt(4)});
        \draw[dashed] (0, {1.7*sqrt(4)}) -- (4,{1.7*sqrt(4)});
        \node[below] at (4,0) {$\mathbb E [x]$};
        
        \node[left] at (0,{1.7*sqrt(2.3)}) {$\mathbb E [U(x)]$};
        \node[left] at (0,{1.7*sqrt(4)}) {$U(\mathbb E [x])$};
        
        \filldraw[black] (2.3, {1.7*sqrt(2.3)}) circle (2pt);
        \filldraw[black] (4, {1.7*sqrt(4)}) circle (2pt);
        
        \filldraw[black] (2.3, 0) circle (2pt);
        \filldraw[black] (4, 0) circle (2pt);
        
        \draw[decorate,decoration={brace, amplitude=4pt, mirror}]
          (2.3,-0.5) -- (4,-0.5)
          node[midway,below=4pt] {$\approx \frac{1}{2}A(\mathbb E[x]) \operatorname{Var}(x)$};
        
        \end{tikzpicture}
        \subcaption[]{Expected utility theory: utility $U(x)$}
        \label{fig:tikz_expected_utility_a}
    \end{minipage}
    \hfill
    \begin{minipage}[b]{0.48\textwidth}
        \centering
        \begin{tikzpicture}[scale=0.7]
        \draw[->] (0,0) -- (6,0) node[right] {$x$};
        \draw[->] (0,0) -- (0,4) node[left] {$p$};
        \draw[thick, black]
          plot[smooth, domain=0:5]
          (\x, {1.7*sqrt(\x)}); 
        
        \draw[dashed] (2.3,0) -- (2.3,{1.7*sqrt(2.3)});
        \draw[dashed] (0, {1.7*sqrt(2.3)}) -- (2.3,{1.7*sqrt(2.3)});
        \node[below] at (2.3,0) {$x_{CE}$};
        \draw[dashed] (4,0) -- (4,{1.7*sqrt(4)});
        \draw[dashed] (0, {1.7*sqrt(4)}) -- (4,{1.7*sqrt(4)});
        \node[below] at (4,0) {$\mathbb E [x]$};
        
        \node[left] at (0,{1.7*sqrt(2.3)}) {$\mathbb E [p(x)]$};
        \node[left] at (0,{1.7*sqrt(4)}) {$p(\mathbb E [x])$};
        
        \filldraw[black] (2.3, {1.7*sqrt(2.3)}) circle (2pt);
        \filldraw[black] (4, {1.7*sqrt(4)}) circle (2pt);
        
        \filldraw[black] (2.3, 0) circle (2pt);
        \filldraw[black] (4, 0) circle (2pt);
        \node[left] at (2.3, 0.2) {\scriptsize $a_{CE}+s+e$};
        \node[right] at (4, 0.2) {\scriptsize $q \bar{a}+s+e$};
        
        \draw[decorate,decoration={brace, amplitude=4pt, mirror}]
          (2.3,-0.5) -- (4,-0.5)
          node[midway,below=4pt] {$\approx \frac{1}{2}A(q\bar{a}+s+e) q(1-q)\bar{a}^2$};
        \end{tikzpicture}
        \subcaption[]{Our model: production $p(x)$}
        \label{fig:tikz_expected_utility_b}
    \end{minipage}
    \vspace{0.1cm}
    \caption{Connection to expected utility theory}
    \label{fig:tikz_expected_utility}
\end{figure}

\paragraph{The paradox disappears under full adaptation ability.} Theorem~\ref{thm:unreliable_concave3} gives a productivity paradox. It emerges in settings when the agent optimizes the effort ex-ante, given accelerating diminishing returns. This arises from the lack of adaptation under unreliability, and here we contrast with the setting where the agent perfectly evaluates the AI output after its realization. If the agent chooses the utility-maximizing effort based on the actual outcome, then a similar version of Proposition~\ref{prop:effort_productivity_expressions} and Corollary~\ref{cor:effort_productivity} still holds, i.e., productivity is increasing in AI proficiency $\bar{a}$; see Appendix~\ref{app:full_adapt}. 
This comparison offers an initial view of AI literacy---an adaptivity based on observation of AI outputs. In the next section, we study AI literacy more formally.

\section{The Impact of AI Literacy: Skill Polarization}\label{sec:AI_literacy}
AI literacy is defined as a set of competencies that enable individuals to collaborate efficiently with and critically evaluate AI tools \citep{long2020ai}. This digital literacy and evaluation ability are known bottlenecks for AI users to achieve meaningful outcomes. In this section, we introduce heterogeneous levels of AI literacy to account for human judgment after observing AI output. We analyze the distributional effect of AI on productivity and effort across various skill levels, and demonstrate how effort reciprocally shapes skill development. This drives our main finding: AI leads to multimodal skill distributions in the long run, intuitively increasing the dispersion of human capital. Throughout this section, multimodality refers to patterns with \emph{at least} two peaks, whereas unimodality refers to patterns with \emph{at most} one peak.

\subsection{Model of heterogeneous AI literacy}
We begin with a static setting. As before, let $a$ denote the (unreliable) AI assistance level, where $\mathbb{P}(a=\bar{a})=q$ and $\mathbb{P}(a=0)=1-q$. In this context, however, we assume that the agent can review the AI output and adapt its effort accordingly.
Specifically, the agent receives a noisy signal $\hat{a}$ of the true assistance level $a$, representing the individual's verification of the AI output. The accuracy of this verification depends on the agent's skill $s$, as given by $\mathbb P(\hat{a}=a|s)=v(s)$, where $v(s)\in [1/2,1]$ is a (weakly) increasing and concave function. Verification ability can be interpreted as the agent's AI literacy; clearly we expect it to be positively correlated with the implementation skill $s$. Later in Section~\ref{sec:literacy_discussion}, we contrast this assumption with a model of uniform verification ability. After observing the signal, the agent determines their effort following Bayes rule. We capture this behavior by introducing the notion of a \emph{Bayesian signal follower}.
\begin{definition}\label{def:bayes}
    Define the posterior reliabilities as
    \begin{align*}
    & q_1(q,s):=\mathbb P(a=\bar{a}|\hat{a}=\bar{a},s)=\frac{q v(s)}{q v(s)+(1-q)(1-v(s))}, \\
    & q_0(q,s):=\mathbb P(a=\bar{a}|\hat{a}=0,s)=\frac{q (1-v(s))}{q(1-v(s))+(1-q)v(s)}.
    \end{align*}
    A Bayesian signal follower chooses effort based on the posterior belief about the assistance level $a$. Specifically, when $\hat{a}=\bar{a}$, the agent exerts an effort $e^\star(s,\bar{a},q_1(q,s))$; when $\hat{a}=0$, the agent exerts effort $e^\star(s,\bar{a},q_0(q,s))$. 
\end{definition}
The signal distribution satisfies $\mathbb P(\hat{a}=\bar{a}|s)=q \cdot v(s) + (1-q) \cdot (1-v(s))$ and $\mathbb P(\hat{a}=0|s) =q \cdot (1-v(s)) + (1-q) \cdot v(s)$. Thus, the expected effort and productivity of a Bayesian signal follow are given by 
\begin{align*}
    e^{\star}_b(s, \bar{a}, q)&:= \mathbb P(\hat{a}=\bar{a}|s)e^\star(s,\bar{a},q_1(q,s))+\mathbb P(\hat{a}=0|s) e^\star(s,\bar{a},q_0(q,s)),\\
    p^{\star}_b(s, \bar{a}, q)&:= \mathbb P(\hat{a}=\bar{a}|s)p^\star(s,\bar{a},q_1(q,s)) +\mathbb P(\hat{a}=0|s) p^\star(s,\bar{a},q_0(q,s)).
\end{align*}
Note that when $q=1$, we have $q_1=q_0=1$, so the model reduces to that of Section~\ref{sec:basic}; when $v(s)=1/2$, the signal is uninformative and we have $q_1=q_0=q$, so it reduces to that of Section~\ref{sec:AI_unreliability}.

We then incorporate the skill development structure as in Section~\ref{sec:bdc_model}. We define the class of skill sets as ${\cal S}:=\{(s_1,s_2,\cdots,s_N): s_1 \leq s_2 \leq \cdots \leq s_N, N\in \mathbb Z^+\}$ and consider the corresponding birth-death chain with the set of skill states $S\in {\cal S}$.
The agent's effort follows Bayes rule and the transition rates are given by 
\begin{align*}
     \mathcal{K}_{s_k s_{k+1}}(\bar{a},q)&= \lambda(e_b^\star(s_k, \bar{a}, q)) \quad \text{and} \quad
     \mathcal{K}_{s_k s_{k-1}}(\bar{a},q) = \mu,
\end{align*}
where $\mu>0$ and $\lambda(\cdot)>0$ is an increasing function. 
Note that we only require monotonicity and do not impose the concavity requirement of Section~\ref{sec:bdc}. In particular, when $\lambda$ is linear, the transition rate equals the expectation of the conditional transition rates given the received signal: $\lambda(e_b^\star(s_k, \bar{a}, q))= \mathbb P(\hat{a}=\bar{a}|s) \lambda(e^\star(s,\bar{a},q_1(q,s)))+\mathbb P(\hat{a}=0|s) \lambda(e^\star(s,\bar{a},q_0(q,s)))$.

Fix the production function $p(x)$, cost function $c(e)=\gamma e$, transition rate $\lambda(e)$, let $\pi_k(\mu, \bar{a},q,S)$ be the steady-state probability of an agent being at skill state $s_k$, given the decay rate $\mu$, AI proficiency $\bar{a}$, reliability $q$, and set of states $S$. The following proposition characterizes the steady-state distribution (adapted from \cite[Chap.~6.11]{grimmett2001probability}, proof in Appendix~\ref{app:bdc_steady_state_s5}).
\begin{proposition}\label{prop:bdc_expression_s5}
    The steady-state distribution $\boldsymbol{\pi}(\mu,\bar{a},q,S)=(\pi_1(\mu, \bar{a},q,S), \cdots,\pi_{N}(\mu, \bar{a},q,S))$ is
    \begin{align*}
    &\pi_k(\mu, \bar{a},q,S)=\pi_1(\mu, \bar{a},q,S)\prod_{i=1}^{k-1}\frac{\lambda(e_b^\star(s_i, \bar{a}, q))}{\mu} \quad\text{and}\quad  \pi_1(\mu, \bar{a},q,S)=\dfrac{1}{1+\sum_{k=2}^{N}\prod_{i=1}^{k-1}\frac{ \lambda(e_b^\star(s_i, \bar{a}, q))}{\mu}}.
    \end{align*}
\end{proposition}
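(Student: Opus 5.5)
The plan is the standard detailed-balance argument for a finite birth-death chain, with the rates fixed by the Bayesian-follower effort. Fix $\mu$, $\bar{a}$, $q$ and $S$, and write $\lambda_i := \lambda(e_b^\star(s_i,\bar{a},q))$. The first observation is that each $\lambda_i$ depends only on the state $s_i$, not on time or on the path taken to reach $s_i$. This holds because $e_b^\star$ is computed myopically from $s_i$ and the fixed primitives, through the posterior reliabilities $q_1(q,s_i)$ and $q_0(q,s_i)$ of Definition~\ref{def:bayes}. The process is therefore a time-homogeneous birth-death chain on $\{1,\dots,N\}$:
\begin{itemize}
\item up-rates $\lambda_i>0$ for $i\le N-1$;
\item down-rates $\mu>0$ for $i\ge 2$;
\item no other transitions.
\end{itemize}
Since $\lambda(\cdot)>0$ and $\mu>0$, all neighbouring rates are strictly positive. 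The chain is thus irreducible on a finite state space, so a unique stationary distribution exists.

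To identify it, I would use the cut (or detailed-balance) equations. In a birth-death chain, the stationary flow across the cut between states $k$ and $k+1$ must balance:
$$\pi_k\lambda_k = \pi_{k+1}\mu, \qquad k\in[N-1].$$
This can be derived directly by summing the global balance equations $\sum_j \pi_j \mathcal{K}_{s_j s_i} = \pi_i\sum_j \mathcal{K}_{s_i s_j}$ over $i\le k$ and telescoping, or it can be cited from \cite[Chap.~6.11]{grimmett2001probability}. Iterating the cut equation gives
$$\pi_k = \pi_1\prod_{i=1}^{k-1}\frac{\lambda_i}{\mu}.$$
Imposing $\sum_k \pi_k = 1$ then yields the stated expression for $\pi_1$. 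The denominator is at least $1$, so $\pi_1$ is well defined and positive.

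Finally, I would check that this vector solves the global balance equations, so that by uniqueness it is the stationary distribution. The argument is identical to the proof of Proposition~\ref{prop:bdc_expression_s3}, with $\lambda(e^\star(s_i,a))$ replaced by $\lambda(e_b^\star(s_i,\bar{a},q))$. Concavity of $\lambda$ was never used there, so dropping it here is harmless.

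There is no real obstacle. The only point needing care is confirming that the Bayesian effort makes the up-rates state-dependent constants. The randomness in the signal $\hat{a}$ is already averaged into $e_b^\star$ before $\lambda$ is applied, per the model's definition of $\mathcal{K}$, so the chain remains Markov on skill states alone.
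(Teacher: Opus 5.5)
Your proposal is correct and takes the same approach as the paper. The paper's proof identifies the chain, with up-rates $\lambda(e_b^\star(s_k,\bar{a},q))$ and down-rate $\mu$, as a standard birth-death process and applies Lemma~\ref{lem:bdc_std_result} from \cite[Chap.~6.11]{grimmett2001probability}, exactly as in Proposition~\ref{prop:bdc_expression_s3}; your remarks on irreducibility, uniqueness, and the derivation of the cut equations simply fill in what the paper takes from that citation.
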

Throughout this section, we restrict attention to the piecewise differentiable production function $p(x)=\min(1, \beta x)$, and the cost function remains $c(e)=\gamma e$, where $\beta> \gamma$. This simplification allows us to analyze the distribution of skills under endogenous AI literacy. 

\subsection{Main result: emergence of multimodal skill distribution}\label{sec:literacy_result}
Our main result shows that endogenous AI literacy can lead to a multimodal skill distribution; in contrast, without AI literacy, we know the skill distribution to be unimodal (Section~\ref{sec:literacy_discussion}). However, this multimodality hinges on a critical two-pronged condition, which we introduce next.

First, we define the {\em marginal return gap of effort and AI} as follows
$$\omega= \underbrace{(\beta-\gamma)}_{\substack{\text{marginal return} \\ \text{to effort}}}- \underbrace{q \beta}_{\substack{\text{marginal return} \\ \text{to AI assistance}}}.$$
Intuitively, a positive value of $\omega$ represents an effort-critical scenario, where a unit of effort yields a larger gain than a unit of AI; while a negative value of $\omega$ represents an AI-critical scenario. Formally, Lemma~\ref{lem:unreliable_linear} already analyzes the piecewise linear function and identifies two regimes depending on $\omega$: $\omega\geq 0$ corresponds to a high utility-maximizing ex-ante effort, whereas $\omega<0$ corresponds to a low utility-maximizing ex-ante effort.

Moreover, when agents are Bayesian, the sign of $\omega$ also distinguishes two cases for the posterior beliefs. For simplicity, we assume $v(0)=1/2$ and $0<q<1$, so that $q_1(q,s)$ increases as a function of $s$ and $q_0(q,s)$ decreases as a function of $s$, with initial starting values corresponding to the prior belief on reliability $q_1(q,0)= q_0(q,0)=q$. Thus, the sign of $\omega$ determines which posterior reliability function $q_1$ or $q_0$ crosses the threshold $\frac{\beta-\gamma}{\beta}$: (a) if $\omega<0$ (i.e., $q>\frac{\beta-\gamma}{\beta}$), then $q_0(q,\cdot)$ crosses the threshold $\frac{\beta-\gamma}{\beta}$; and (b) if $\omega\geq 0$ (i.e., $q\leq \frac{\beta-\gamma}{\beta}$), then $q_1(q,\cdot)$ crosses $\frac{\beta-\gamma}{\beta}$. Consequently, we define a distinct critical skill value $\Tilde{s}$ at the crossing point  in each of these cases as follows:
\begin{equation*}
    \Tilde{s} := v^{-1}\left(\dfrac{\frac{q}{1-q}\frac{\gamma}{\beta -\gamma}}{\frac{q}{1-q}\frac{\gamma}{\beta -\gamma}+1}\right) \mathbb I(\omega < 0) + v^{-1}\left(\dfrac{1}{\frac{q}{1-q}\frac{\gamma}{\beta -\gamma}+1}\right) \mathbb I(\omega \geq 0) \ .
\end{equation*}
The functions $q_1$, $q_0$ as well as the crossing points are visualized in Figure~\ref{fig:paradox_cases}.

\begin{figure}[H]
    \centering
    \begin{minipage}{0.45\textwidth}
        \centering
        \includegraphics[width=\linewidth]{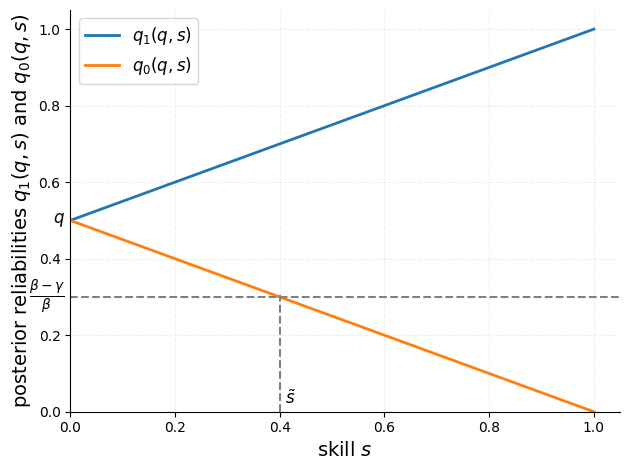}
        \subcaption[]{$\omega< 0$: $q_0(q,\cdot)$ crosses  $\frac{\beta-\gamma}{\beta}$ at $\Tilde{s}= v^{-1}(\frac{\frac{q}{1-q}\frac{\gamma}{\beta -\gamma}}{\frac{q}{1-q}\frac{\gamma}{\beta -\gamma}+1})$}
        \label{fig:paradox_cases_A}
    \end{minipage}
    \hfill
    \begin{minipage}{0.45\textwidth}
        \centering
        \includegraphics[width=\linewidth]{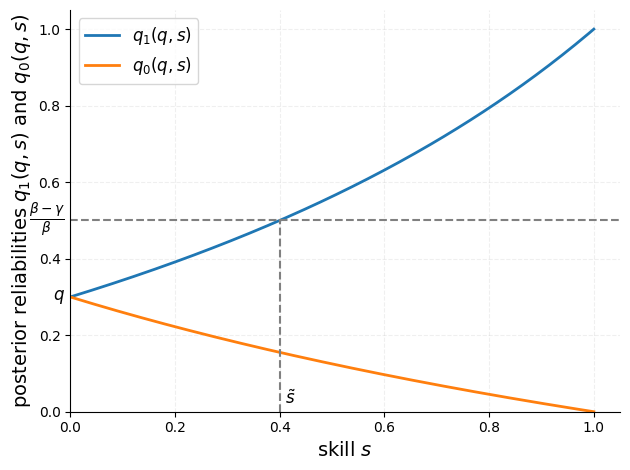}
        \subcaption[]{$\omega\geq 0$: $q_1(q,\cdot)$ crosses $\frac{\beta-\gamma}{\beta}$ at $\Tilde{s}= v^{-1}(\frac{1}{\frac{q}{1-q}\frac{\gamma}{\beta -\gamma}+1})$}
        \label{fig:paradox_cases_B}
    \end{minipage}
    \caption{Posterior reliabilities versus skill}
    \label{fig:paradox_cases}
\end{figure}

With these definitions at hand, we are ready to state the main condition that governs the emergence of a multimodal distributional effect of AI.
\begin{condition}\label{cond:multimodal_cond}
   Suppose that the critical skill value is upper bounded by $\Tilde{s} <\frac{1}{\beta} - \frac{(1-2q) v(\Tilde{s}) + q}{(1-2q) v'(\Tilde{s})}\cdot \mathbb I(\omega\geq 0)$.
   In addition, we require either (a) $\omega<0$, or (b) $\omega\geq 0$ and $q<1/2$.
\end{condition}
We unpack Condition~\ref{cond:multimodal_cond} in each case depending on the sign of $\omega$. Case (a), where $\omega<0$, boils down to a simple upper bound requirement $\Tilde{s}<\frac{1}{\beta}$, indicating that the critical skill value $\tilde{s}$ occurs before the skill level $s=\frac{1}{\beta}$, beyond which no effort needs to be exerted as the production function saturates.
Case (b), where $\omega\geq 0$, is more complex to interpret. Here we recall that we are in an effort-critical scenario where a unit of effort is more valuable than a unit of AI. The additional requirement $q<\frac{1}{2}$ (i.e., low reliability) is necessary for an effort-skill paradox---analogous to Simpson's paradox---to emerge; when the effort exerted per signal decreases in $s$, a distributional shift towards the low-quality signal can cause the expected effort to increase. The upper bound on $\tilde{s}$ now incorporates an additional term $\frac{(1-2q) v(\Tilde{s}) + q}{(1-2q) v'(\Tilde{s})}$ that increases with $\Tilde{s}$, implying an even tighter upper bound on $\Tilde{s}$. The introduction of this additional term stems from distinct sources of effort-skill non-monotonicity that drive the multimodal skill distribution; the technical condition is further explained in Section~\ref{sec:literacy_technical}.

Under Condition~\ref{cond:multimodal_cond}, multimodality in skill distribution can emerge as stated next.
\begin{theorem}\label{thm:bayes_skill_multimodal}
Consider $p(x)=\min(1, \beta x)$, $c(e)=\gamma e$, where $\beta>\gamma$, and $\lambda(e)$ is an increasing transition function.  Suppose $v(s)\in [1/2,1]$ is (weakly) increasing and concave and the reliability satisfies $0<q<1$.
\begin{enumerate}
\item If Condition~\ref{cond:multimodal_cond} holds, then there exists a threshold $\Tilde{a}\geq 0$ such that: if $\bar{a}>\Tilde{a}$, there exists $\mu>0$ and $S \in {\cal S}$, such that the steady-state skill distribution $\boldsymbol{\pi}(\mu,\bar{a},q,S)$ is multimodal; whereas if $\bar{a}\leq \Tilde{a}$, the steady-state skill distribution $\boldsymbol{\pi}(\mu, \bar{a},q,S)$ is unimodal for all $\mu>0$ and $S\in {\cal S}$.
\item Otherwise, when Condition~\ref{cond:multimodal_cond} does not hold, the steady-state skill distribution $\boldsymbol{\pi}(\mu, \bar{a},q,S)$ is unimodal for all $\bar{a}\geq 0$, $\mu>0$, and $S\in {\cal S}$.
\end{enumerate}
\end{theorem}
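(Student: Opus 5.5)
The plan is to reduce multimodality of the steady state to a shape property of the expected Bayesian effort $s\mapsto e_b^\star(s,\bar a,q)$, and then study that function explicitly using Lemma~\ref{lem:unreliable_linear}.

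\textbf{Step 1: reduction to a valley in effort.} By Proposition~\ref{prop:bdc_expression_s5}, the consecutive ratios are $\pi_{k+1}/\pi_k=\lambda(e_b^\star(s_k,\bar a,q))/\mu$. Hence $\boldsymbol\pi$ is multimodal iff this ratio sequence goes from above $1$ to below $1$ and back above $1$. Since $\lambda$ is increasing and we may choose $\mu$ and $S\in\mathcal S$ freely (repeated states allowed), I would prove the following. There exist $\mu,S$ making $\boldsymbol\pi$ multimodal iff $e_b^\star(\cdot,\bar a,q)$ has a strict valley, i.e.\ there are $s<s'<s''$ with $e_b^\star(s')<\min\{e_b^\star(s),e_b^\star(s'')\}$. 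For the ``if'' direction, take $S=(s,s',s'')$ padded suitably and put $\mu$ strictly between $\lambda(e_b^\star(s'))$ and $\min\{\lambda(e_b^\star(s)),\lambda(e_b^\star(s''))\}$. For the ``only if'' direction, if $e_b^\star$ is quasi-concave in $s$, then the ratios, evaluated along any nondecreasing $S$, cross $1$ downward at most once, so $\boldsymbol\pi$ is unimodal for every $\mu$ and $S$. Thus the theorem becomes: $e_b^\star(\cdot,\bar a,q)$ fails quasi-concavity iff Condition~\ref{cond:multimodal_cond} holds and $\bar a>\tilde a$.

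\textbf{Step 2: explicit form of $e_b^\star$.} By Lemma~\ref{lem:unreliable_linear}, the effort after each signal is $(1/\beta-s)^+$ if the posterior is at most $(\beta-\gamma)/\beta$, and $(1/\beta-s-\bar a)^+$ otherwise. Since $v(0)=1/2$, $q_1(q,\cdot)$ is increasing and $q_0(q,\cdot)$ is decreasing, so exactly one posterior crosses the threshold, at $\tilde s$. We weight the two efforts by $P_1(s):=q v(s)+(1-q)(1-v(s))$ and $1-P_1(s)$.

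\emph{Case (a), $\omega<0$.} Below $\tilde s$ we have $e_b^\star=(1/\beta-s-\bar a)^+$, which is nonincreasing. At $\tilde s$ the low signal switches to full effort, so for $s>\tilde s$
\[
e_b^\star=P_1(s)(1/\beta-s-\bar a)^+ +(1-P_1(s))(1/\beta-s)^+ .
\]
This produces an upward jump of size $(1-P_1(\tilde s))\min\{\bar a,(1/\beta-\tilde s)^+\}$, which is positive iff $\tilde s<1/\beta$. I would then check that the left side of the jump is strictly below an earlier value and the right side strictly above. This yields the valley exactly under $\tilde s<1/\beta$ and $\bar a$ beyond the threshold $\tilde a$ defined by this comparison.

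\emph{Case (b), $\omega\ge 0$.} Below $\tilde s$ both signals give $(1/\beta-s)^+$. After $\tilde s$ the high signal drops to $(1/\beta-s-\bar a)^+$, giving a downward jump. A valley then needs $e_b^\star$ to increase again afterwards. Differentiating, on the region where both efforts are positive, gives
\[
\tfrac{d}{ds}e_b^\star=-1-\bar a P_1'(s)=-1+\bar a(1-2q)v'(s).
\]
This can be positive only if $q<1/2$, which is the Simpson-type mass shift toward the high-effort signal. A recovery above some pre-jump value before saturation at $1/\beta$ is needed. Using concavity of $v$ (so $v'$ is largest near $\tilde s$), this comparison should reduce to the stated upper bound $\tilde s<1/\beta-\frac{(1-2q)v(\tilde s)+q}{(1-2q)v'(\tilde s)}$ together with $\bar a$ large enough, which defines $\tilde a$ in this case.

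\textbf{Step 3: converse and main obstacle.} For the converse, I would show that when the condition fails, or when $\bar a\le\tilde a$, $e_b^\star$ is nonincreasing or increasing-then-decreasing, and then invoke Step 1. The main obstacle is case (b). There I must handle the kinks of $(\cdot)^+$ at $1/\beta-\bar a$ and $1/\beta$, and show that concavity of $v$ makes the single local test at $\tilde s$ both necessary and sufficient for the post-jump increase to overtake a pre-jump level. I would first try to prove that on $(\tilde s,1/\beta-\bar a)$ the function $e_b^\star$ is concave. Then a valley exists iff its one-sided slope at $\tilde s^+$ is positive and its increase carries it high enough, which after algebra should match the stated bound.
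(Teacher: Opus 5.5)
\textbf{There is a genuine gap: the Step~1 reduction is wrong.} Since $\pi_{k+1}/\pi_k=\lambda(e_b^\star(s_k,\bar a,q))/\mu$, the steady state fails to be unimodal as soon as a ratio below $1$ is followed later by a ratio above $1$. That only requires $e_b^\star$ to \emph{increase} somewhere; it does not require a strict valley. Indeed, suppose $e_b^\star(s)<e_b^\star(s')$ for some $s<s'$. Take $S=(s,s',s')$ and $\mu$ strictly between $\lambda(e_b^\star(s))$ and $\lambda(e_b^\star(s'))$. Then $\pi_1>\pi_2<\pi_3$, which has two peaks.

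So ``quasi-concave $\Rightarrow$ unimodal'' is false: crossing $1$ downward at most once does not stop the ratios from starting below $1$ and crossing upward. This breaks your case (a).
\begin{itemize}
\item If $\bar a\ge 1/\beta$, then $e_b^\star\equiv 0$ on $[0,\tilde s)$ and jumps up at $\tilde s$. For instance, with $q\ge 1/2$ it is nonincreasing afterwards, so it is quasi-concave with no strict valley.
\item Your own test, ``left side of the jump strictly below an earlier value'', holds only when $\bar a<1/\beta$. That is an \emph{upper} bound on $\bar a$.
\item Yet the theorem asserts multimodality for every $\bar a>\tilde a=0$ in case (a).
\end{itemize}

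The correct dichotomy is the one the paper uses:
\begin{itemize}
\item If $e_b^\star$ is nonincreasing, the ratios cross $1$ at most once, and only downward. Then $\boldsymbol{\pi}$ is unimodal for all $\mu,S$ (Lemma~\ref{lem:bayes_effort_decrease}).
\item If $e_b^\star$ is not nonincreasing, $\boldsymbol{\pi}$ is multimodal for suitable $\mu,S$ (Lemma~\ref{lem:bayes_effort_nonmonotonic}).
\end{itemize}
With this, case (a) follows at once from the upward jump at $\tilde s$ whenever $\tilde s<1/\beta$ and $\bar a>0$. Your converse then needs $e_b^\star$ nonincreasing, not merely quasi-concave.

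\textbf{Case (b) also needs repair, on two points.}

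First, ``recovery above some pre-jump value'' can never happen. When $\tilde s<1/\beta$, for every $s>\tilde s$ and every $s_0<\tilde s$,
\[
e_b^\star(s)\le(1/\beta-s)^+<1/\beta-\tilde s\le e_b^\star(s_0).
\]
What matters is only whether $e_b^\star$ increases somewhere after $\tilde s$.

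Second, that increase need not start at $\tilde s^+$. At $s=1/\beta-\bar a$ the high-signal effort hits zero, and the slope jumps up by $P_1(s)$. It goes from $-1+\bar a(1-2q)v'(s)$ to $\bar a(1-2q)v'(s)-\big((1-2q)v(s)+q\big)$. So $e_b^\star$ is concave on each piece but has a convex kink at $1/\beta-\bar a$. Concavity on $(\tilde s,1/\beta-\bar a)$ plus a single test at $\tilde s^+$ therefore cannot characterize non-monotonicity.

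The paper checks both one-sided slopes, at $\tilde s^+$ and at $(1/\beta-\bar a)^+$. It notes that both are increasing in $\bar a$ and defines $\tilde a$ from them. The bound in Condition~\ref{cond:multimodal_cond}(b) is exactly positivity of the kink slope in the limit $\bar a\uparrow 1/\beta-\tilde s$. This coincides with the $\tilde s^+$ slope once $\bar a\ge 1/\beta-\tilde s$. It is strictly weaker than the limiting $\tilde s^+$ condition $-1+(1/\beta-\tilde s)(1-2q)v'(\tilde s)>0$, because $(1-2q)v(\tilde s)+q<1$. Using only the $\tilde s^+$ test, you would miss increases that start at the kink when $\bar a<1/\beta-\tilde s$. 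You could then overestimate $\tilde a$ and wrongly declare unimodality there.
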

Theorem~\ref{thm:bayes_skill_multimodal} implies that the adoption of proficient AI leads to multimodal steady-state skill distribution, meaning intuitively that the population's skill is more dispersed. The result is consistent with heterogeneous impacts of AI assistance in various empirical studies; for example, \citep{eames2026computer} find that high-achieving students benefit more from computer-assisted learning as they spend more time and master more skills. Our model uncovers a similar mechanism: high-skill agents may amplify their advantage by collaborating with AI through appropriate judgment (i.e., output verification), while the low-skill agents tend to over-rely on AI outputs and thus degrade their skills. Consequently, AI assistance has the potential to exacerbate inequality; in Section~\ref{sec:literacy_discussion}, we relate this divergence to the Matthew effect.
\begin{remark}
We can refine Theorem~\ref{thm:bayes_skill_multimodal} in several ways. In Case (a) of Condition~\ref{cond:multimodal_cond}, the threshold $\Tilde{a}$ on AI proficiency is degenerate, i.e., $\Tilde{a}= 0$. Moreover, we note that the multimodality asserted in the first part of the theorem holds as long as the skill set $S\in {\cal S}$ is ``rich enough''; formally any superset of that skill set also exhibits a multimodal skill distribution.
\end{remark}

\subsection{Technical crux: bidirectional influence of effort and skill}\label{sec:literacy_technical}
Here, we examine the mechanism behind the skill multimodality. We first analyze the static model and establish the relationship between effort and skill. We then show that, under skill dynamics, the driver of multimodal steady-state skill distribution is exactly this effort-skill non-monotonicity.

\paragraph{Effort-skill non-monotonicity.} Recall that the effort for a Bayesian signal follower of skill $s$ is:
$$e^{\star}_b(s, \bar{a}, q)= \mathbb P(\hat{a}=\bar{a}|s)e^\star(s,\bar{a},q_1(q,s))+\mathbb P(\hat{a}=0|s) e^\star(s,\bar{a},q_0(q,s)).$$ 
Non-monotonicity of effort with respect to skill, specifically cases in which effort increases in skill, can arise from two distinct sources; in Section~\ref{sec:literacy_discussion}, we connect it to the Matthew effect.
\begin{itemize}
\item \emph{Distribution shift in signal qualities.} Analogous to Simpson's paradox, the expected effort can increase in skill due to changes in the signal distribution, even when the effort conditional on each signal realization decreases in $s$. Specifically, consider a regime where $e^\star(s,\bar{a},q_1(q,s))$ and $e^\star(s,\bar{a},q_0(q,s))$ both decrease in $s$; since $e^\star(s,\bar{a},q_1(q,s)) \leq e^\star(s,\bar{a},q_0(q,s))$ (see Lemma~\ref{lem:exante_effort}), an increase in the probability of receiving the low-quality signal, $\mathbb P(\hat{a}=0|s)$, can weaken the decreasing trend and potentially increase the expected effort. This direction of change in signal distribution only occurs when $q<\frac{1}{2}$.
\item \emph{Increasing effort under a lower posterior probability.} Effort exerted under the low-quality output, $e^\star(s,\bar{a},q_0(q,s))$, may substantially increase in $s$ due to a lower perceived usefulness of AI outputs. 
This effect occurs at the critical skill value $\Tilde{s}$ only for Case (a), in which the prior belief exceeds a reliability threshold, i.e., $q>\frac{\beta-\gamma}{\beta}$ (see Figure~\ref{fig:paradox_cases_A}). 
\end{itemize}
We formalize the intuition in the following lemma, whose proof is deferred to Appendix~\ref{app:proof_effort_caseA+B}.
\begin{lemma}\label{lem:effort_caseA+B}
Consider $p(x)=\min(1, \beta x)$, $c(e)=\gamma e$, where $\beta>\gamma$. Suppose $v(s)\in [1/2,1]$ is (weakly) increasing and concave.
\begin{enumerate}
\item If Condition~\ref{cond:multimodal_cond} holds, there exists a threshold $\Tilde{a}\geq 0$ such that: if $\bar{a}>\Tilde{a}$, the expected effort $e_b^\star(s,\bar{a},q)$ is non-monotonic in $s$; whereas if $\bar{a}\leq \Tilde{a}$, effort $e_b^\star(s,\bar{a},q)$ is decreasing in $s$.
\item Otherwise, when Condition~\ref{cond:multimodal_cond} does not hold, for any $\bar{a}$, effort $e_b^\star(s,\bar{a},q)$ is decreasing in $s$.
\end{enumerate}
\end{lemma}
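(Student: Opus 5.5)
Since $p(x)=\min(1,\beta x)$, the plan is to reduce everything to the closed form of Lemma~\ref{lem:unreliable_linear}. Write $\theta:=\frac{\beta-\gamma}{\beta}$, $P_1(s):=\mathbb P(\hat a=\bar a\mid s)=(1-q)+(2q-1)v(s)$ and $P_0(s)=1-P_1(s)=q+(1-2q)v(s)$. Then
$$e_b^\star(s,\bar a,q)=P_1(s)\,e_1(s)+P_0(s)\,e_0(s),$$
where $e_j(s)=(\tfrac1\beta-s)^+$ if $q_j(q,s)\le\theta$ and $e_j(s)=(\tfrac1\beta-s-\bar a)^+$ otherwise. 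The tie $q_j=\theta$ goes to the high effort, by the largest-maximizer convention. Since $v(0)=1/2$ and $v$ is increasing, $q_1(q,\cdot)$ increases from $q$ and $q_0(q,\cdot)$ decreases from $q$. So exactly one of them crosses $\theta$, at $\tilde s$ (Figure~\ref{fig:paradox_cases}). Away from $\tilde s$ each $e_j$ is piecewise linear and decreasing. The whole question is therefore (i) the direction of the jump at $\tilde s$, and (ii) the sign of the $s$-derivative on the smooth pieces. For the latter I will use
$$\partial_s e_b^\star=P_1'(s)\,(e_1-e_0)+P_1 e_1'+P_0 e_0',\qquad P_1'=(2q-1)v'.$$

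\textbf{Case (a), $\omega<0$ (so $q>\theta$).} Here $q_1\ge q>\theta$ always, so $e_1=(\frac1\beta-s-\bar a)^+$. Also $e_0=e_1$ for $s<\tilde s$ and $e_0=(\frac1\beta-s)^+$ for $s\ge\tilde s$. On $[0,\tilde s)$ both conditional efforts coincide, so $e_b^\star=(\frac1\beta-s-\bar a)^+$, which is decreasing. At $\tilde s$ the effort jumps upward by
$$P_0(\tilde s)\Big[(\tfrac1\beta-\tilde s)^+-(\tfrac1\beta-\tilde s-\bar a)^+\Big].$$
This jump is strictly positive whenever $\bar a>0$ and $\tilde s<\frac1\beta$, with $P_0(\tilde s)>0$ because $v<1$ at the crossing. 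Thus under Condition~\ref{cond:multimodal_cond}(a) we get non-monotonicity for every $\bar a>0$, i.e.\ $\tilde a=0$. If instead $\tilde s\ge\frac1\beta$, both efforts vanish past $\frac1\beta$ and no increase is possible. On the smooth pieces, the derivative formula gives $\le0$: either $e_1=e_0$, or $e_1\le e_0$ with $q>\theta$. I will check the sign of $P_1'$ here, and if it fails, bound the term $P_1'(e_1-e_0)$ by the $-P_0$ contribution.

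\textbf{Case (b), $\omega\ge0$ (so $q\le\theta$).} Here $q_0\le q\le\theta$, so $e_0=(\frac1\beta-s)^+$. Also $e_1=e_0$ for $s<\tilde s$ and $e_1=(\frac1\beta-s-\bar a)^+$ for $s>\tilde s$. Before $\tilde s$, $e_b^\star=(\frac1\beta-s)^+$ is decreasing, and the jump at $\tilde s$ is downward. So any increase must occur on the smooth part beyond $\tilde s$, which is the Simpson-type signal-shift effect. If $q\ge1/2$, then $P_1'\ge0$ and $e_1\le e_0$, so every term in $\partial_s e_b^\star$ is $\le0$ and effort is monotone; this gives the second part of the lemma when $q\ge 1/2$. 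For $q<1/2$ there are two sub-regimes for $s\in(\tilde s,\frac1\beta)$:
\begin{itemize}
\item if $s<\frac1\beta-\bar a$, then $\partial_s e_b^\star=-1+(1-2q)v'(s)\,\bar a$;
\item if $s\ge\frac1\beta-\bar a$, then $e_1=0$, $e_b^\star=P_0(s)(\frac1\beta-s)$, and $\partial_s e_b^\star>0$ iff $s<\frac1\beta-\frac{(1-2q)v(s)+q}{(1-2q)v'(s)}=:g(s)$.
\end{itemize}
Concavity of $v$ ($v'$ decreasing, $v$ increasing) makes $s-g(s)$ increasing. Hence $s<g(s)$ holds for some $s>\tilde s$ iff it holds at (just beyond) $\tilde s$, which is exactly the bound in Condition~\ref{cond:multimodal_cond}(b). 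The first sub-regime needs $(1-2q)v'(s)\bar a>1$; again this is easiest near $\tilde s$ because $v'$ is decreasing.

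\textbf{Main obstacle: the threshold $\tilde a$ in case (b).} I would define $\tilde a$ as the infimum of those $\bar a$ for which $\partial_s e_b^\star>0$ somewhere on $(\tilde s,\frac1\beta)$, and then prove that this set is upward closed. First I will show that $\partial_s e_b^\star$ on $(\tilde s,\frac1\beta)$ is pointwise nondecreasing in $\bar a$. For fixed $s$, raising $\bar a$ either increases $(1-2q)v'\bar a$ in the first regime or moves $s$ into the second regime. In the second regime the derivative does not depend on $\bar a$, and it dominates the first-regime value at the switching point; this follows from continuity of $e_b^\star$ there together with the sign comparison. Next, once $\bar a\ge\frac1\beta-\tilde s$ the second regime covers a right-neighborhood of $\tilde s$. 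So under Condition~\ref{cond:multimodal_cond}(b) the set is nonempty, giving the existence of $\tilde a$. Conversely, if the condition fails, the second-regime derivative is never positive past $\tilde s$. I then need to rule out the first regime as well, by showing $(1-2q)v'(s)\bar a>1$ with $s<\frac1\beta-\bar a$ forces $s<g(s)$; I will try to obtain this from $\frac{q+(1-2q)v}{(1-2q)v'}<\bar a\le\frac1\beta-s$, using $q+(1-2q)v\le1$. Making these comparisons airtight, including the boundary cases $s=\tilde s$ and $e_1$ hitting $0$, is the delicate part of the proof.
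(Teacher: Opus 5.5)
Your proposal is correct and follows the paper's proof essentially step for step. The shared steps are the reduction through Lemma~\ref{lem:unreliable_linear}, the split on the sign of $\omega$, and the upward jump at $\tilde{s}$, which gives $\tilde{a}=0$ in case (a). In case (b) with $q<1/2$ you use the same piecewise slopes $-1+(1-2q)v'(s)\bar{a}$ and $(1-2q)v'(s)(\frac{1}{\beta}-s)-P_0(s)$; your monotonicity arguments in $s$ (from concavity of $v$) and in $\bar{a}$ (the slope jumps up by $P_1(s)$ when $s$ enters the second regime, and $P_0\le 1$ gives the converse) amount to the paper's endpoint comparison.

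One correction: drop the claim that $\partial_s e_b^\star\le 0$ on the smooth pieces in case (a). It is not needed, since the jump already gives non-monotonicity when $\tilde{s}<\frac{1}{\beta}$, and otherwise effort is zero beyond $\tilde{s}$. It is also false in general when $\frac{\beta-\gamma}{\beta}<q<1/2$, where the same slope $-1+(1-2q)v'(s)\bar{a}$ can be positive.
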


\paragraph{Non-monotonicity as a driver of multimodality.}
Next, we translate the non-monotonicity of effort across different skill states into the multimodality of the steady-state skill distribution in two lemmas below.

When effort $e_b^\star(s,\bar{a},q)$ is decreasing in $s$, the skill distribution remains unimodal, as stated next. 
\begin{lemma}\label{lem:bayes_effort_decrease}
    Suppose $\lambda(e)$ is increasing. Fix $\bar{a}$ and $q$, if $e_b^\star(s,\bar{a},q)$ is decreasing in $s$, then for any $\mu>0$ and $S\in {\cal S}$, the steady-state distribution  $\boldsymbol{\pi}(\mu, \bar{a},q,S)$ is unimodal.
\end{lemma}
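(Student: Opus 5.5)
The plan is to work directly with the product formula for the steady state in Proposition~\ref{prop:bdc_expression_s5}. Fix $\bar{a}$, $q$, $\mu>0$ and $S=(s_1,\dots,s_N)\in{\cal S}$, and write $r_i:=\lambda(e_b^\star(s_i,\bar{a},q))/\mu$ for $i\in[N-1]$. Then consecutive steady-state probabilities satisfy
\[
\frac{\pi_{k+1}(\mu,\bar{a},q,S)}{\pi_k(\mu,\bar{a},q,S)}=r_k,\qquad k\in[N-1],
\]
so the shape of $\boldsymbol{\pi}$ is governed entirely by the sequence of ratios $(r_k)$. The sequence $\pi_k$ increases from $k$ to $k+1$ exactly when $r_k>1$, is flat when $r_k=1$, and decreases when $r_k<1$.

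The key step is to show that $(r_k)$ is non-increasing in $k$. Since $s_1\le s_2\le\cdots\le s_N$ and $e_b^\star(\cdot,\bar{a},q)$ is decreasing in $s$ by hypothesis, the efforts $e_b^\star(s_k,\bar{a},q)$ are non-increasing in $k$. Because $\lambda$ is increasing and $\mu>0$ is a constant, $r_k$ is therefore non-increasing in $k$. Note that only monotonicity of $\lambda$ is used; no concavity is needed, which matches the assumptions of this section.

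From this the unimodality follows. Let $k^\ast:=\max\{k\in[N-1]: r_k\ge 1\}$, with $k^\ast=0$ if this set is empty. Then $\pi_k$ is non-decreasing for $k\le k^\ast+1$ and non-increasing afterwards, because $r_k<1$ for all $k>k^\ast$ by monotonicity of the ratios. A sequence that is non-decreasing and then non-increasing has at most one peak. A plateau (ties where $r_k=1$) forms a single flat top and does not create two separated peaks, so $\boldsymbol{\pi}$ is unimodal in the sense used in this section.

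The only delicate point is this treatment of ties and plateaus: repeated skill values ($s_k=s_{k+1}$), or $r_k=1$, produce flat stretches. We need to check that "at most one peak" is read so that a contiguous maximal plateau counts as one mode. Since $(r_k)$ is monotone, once the ratios drop below $1$ they never return to $\ge 1$. Hence no increase can occur after a decrease, and no second peak can appear. This holds for every $\mu>0$ and every $S\in{\cal S}$, which completes the argument.
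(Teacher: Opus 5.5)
Your proposal is correct and follows essentially the same route as the paper: both write $\pi_{k+1}/\pi_k=\lambda(e_b^\star(s_k,\bar{a},q))/\mu$, use the monotonicity of effort in $s$ and of $\lambda$ to show these ratios are non-increasing in $k$ and hence cross $1$ at most once, and conclude unimodality. Your explicit treatment of ties and plateaus (ratios equal to $1$, or repeated skill values) is a slightly more careful version of a point the paper leaves implicit.
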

The proof of Lemma~\ref{lem:bayes_effort_decrease} is deferred to Appendix~\ref{app:pf_skill_unimodal_case}.In contrast, when the effort $e_b^\star(s,\bar{a},q)$ is non-monotonic in $s$, the multimodal skill distribution can emerge, as stated next.
\begin{lemma}\label{lem:bayes_effort_nonmonotonic}
    Suppose $\lambda(e)$ is increasing. Fix $\bar{a}$ and $q$, if  $e_b^\star(s,\bar{a},q)$ is non-monotonic in $s$, then there exists $\mu>0$ and $S\in {\cal S}$, such that $\boldsymbol{\pi}(\mu, \bar{a},q,S)$ is multimodal.
\end{lemma}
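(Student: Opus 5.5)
Since $e_b^\star(s,\bar a,q)$ is non-monotonic in $s$ and $\lambda$ is increasing, we can pick three skill values $t_1<t_2<t_3$ with $e_b^\star(t_1,\bar a,q) > e_b^\star(t_2,\bar a,q) < e_b^\star(t_3,\bar a,q)$. Set $\lambda_i:=\lambda(e_b^\star(t_i,\bar a,q))$, so that $\lambda_1>\lambda_2<\lambda_3$ by strict monotonicity of $\lambda$. A decrease followed by an increase is the pattern that produces a valley, hence two peaks, in the ratios of the birth-death chain. If the non-monotonicity is of increase-then-decrease type, we still get a decrease followed by an increase at some other triple, because the effort is eventually $0$ once $s\ge 1/\beta$ and is weakly decreasing there. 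In the worst case we instead use a triple with an increase after a decrease located elsewhere. So we may assume a down-up triple exists.

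The construction uses skill states with repetitions, which ${\cal S}$ allows since weak inequalities are permitted. Take $S$ consisting of $n_1$ copies of $t_1$, followed by one copy of $t_2$, followed by $n_3$ copies of $t_3$. Then $N=n_1+n_3+1$.

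By Proposition~\ref{prop:bdc_expression_s5}, the ratios are $\pi_{k+1}/\pi_k=\lambda(e_b^\star(s_k,\bar a,q))/\mu$. These equal $\lambda_1/\mu$ for the first $n_1$ transitions (from the copies of $t_1$), then $\lambda_2/\mu$ for the transition out of $t_2$, and then $\lambda_3/\mu$ along the $t_3$ block.

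Choose $\mu$ with $\lambda_2<\mu<\min(\lambda_1,\lambda_3)$, which is possible because $\lambda_2<\lambda_1$ and $\lambda_2<\lambda_3$. With this choice:
\begin{itemize}
\item $\pi_k$ is strictly increasing along the $t_1$ block, so $\pi_{n_1+1}>\pi_{n_1}$.
\item The next step satisfies $\pi_{n_1+2}=\pi_{n_1+1}\lambda_2/\mu<\pi_{n_1+1}$.
\item $\pi_k$ then increases strictly again along the $t_3$ block.
\end{itemize}
Hence $\pi_{n_1+1}$ is a strict local maximum. Taking $n_3\ge 2$, the sequence rises again after this dip, so the last state $\pi_N$ is another local maximum. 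Thus $\boldsymbol\pi$ has at least two peaks and is multimodal. Taking $n_1=1$, $n_3=2$ suffices.

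The main subtlety is the definition of peaks when ratios equal $1$ or states coincide. Choosing $\mu$ strictly between the rates, as above, makes every ratio differ from $1$, so all comparisons are strict.

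The step I expect to need care is extracting a genuine down-up triple from ``non-monotonic.'' If $e_b^\star$ is up-then-down, we can instead use the triple $t_1<t_2<t_3$ with $e_b^\star$ increasing from $t_1$ to $t_2$, placing the copy of $t_1$ first. Concretely, take $S=(t_1,t_2,\dots)$ with ratios $\lambda_1/\mu<1<\lambda_2/\mu$, choosing $\lambda_1<\mu<\lambda_2$. Then $\pi_1>\pi_2$, so $\pi_1$ is a peak at the boundary, and $\pi_2<\pi_3$. Appending a final state $t_3$ whose rate is never used, and a further copy of $t_3$ with $\lambda_3<\mu$ if needed, produces a second peak at state $3$. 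So any pair with $t_1<t_2$ and $e_b^\star(t_1)<e_b^\star(t_2)$ suffices, and this is exactly what non-monotonicity (failure of "decreasing") provides.

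I would therefore present this simpler version: pick $t_1<t_2$ with $\lambda_1<\lambda_2$, set $S=(t_1,t_2,t_2,t_2)$, and choose $\lambda_1<\mu<\lambda_2$. Then
\[
\pi_1>\pi_2<\pi_3<\pi_4,
\]
which has peaks at states $1$ and $4$, so the distribution is multimodal.
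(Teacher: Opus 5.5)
\textbf{Verdict.} Your final argument is correct, and it takes a more elementary route than the paper.

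\textbf{How the routes differ.} The paper argues that non-monotonicity of $e_b^\star$ forces the ratio $\lambda(e_b^\star(s,\bar a,q))/\mu$ to have a local extremum. It then picks $\mu$ so that the ratio crosses $1$ twice around that extremum, and draws skill states from three sub-segments on which the ratio is $<1$, $>1$, $<1$. This produces a peak at state $1$ and a second, interior peak. You use only a single pair $t_1<t_2$ with $e_b^\star(t_1,\bar a,q)<e_b^\star(t_2,\bar a,q)$ and place $\mu$ strictly between the two corresponding rates. You then exploit that the birth rate of the last state never enters the steady state. Repeated entries in ${\cal S}$ are not even needed: $S=(t_1,t_2,t_3)$ with any $t_3>t_2$ already gives $\pi_1>\pi_2<\pi_3$. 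The result is a U-shaped distribution.

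\textbf{What each approach buys.} Your route needs no extremum and no case split between a local maximum and a local minimum of the ratio; the paper writes out only the low--high--low case. It also weakens the hypothesis to ``$e_b^\star$ is not weakly decreasing.'' Combined with Lemma~\ref{lem:bayes_effort_decrease}, this gives a clean dichotomy for strictly increasing $\lambda$. The paper's construction buys an interior mode, which fits the polarization narrative better. Your two peaks sit at the endpoints of the chosen skill set. That still counts as multimodal under the paper's convention, since the paper itself counts the boundary state $1$ as a peak, but it is a less striking form of bimodality.

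\textbf{The preliminary detour should be dropped.} Your claim that an up--down effort pattern yields a down--up triple elsewhere is false. In Case (a) of Lemma~\ref{lem:effort_case_neg} with $\bar a\ge 1/\beta$ and $q\ge 1/2$, effort is $0$ on $[0,\tilde s)$, jumps up at $\tilde s$, and is weakly decreasing afterwards, so no down--up triple exists. That detour also relies on the specific production function $\min(1,\beta x)$ rather than on the lemma's hypotheses. Your final pair-based version needs none of it.
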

\begin{proof}[Proof of Lemma~\ref{lem:bayes_effort_nonmonotonic}]
    Recall that the steady-state distribution is defined in Proposition~\ref{prop:bdc_expression_s5}. Since $\lambda(\cdot)$ is an increasing function, the non-monotonicity of effort $e_b^\star(s,\bar{a},q)$ in skill $s$ directly implies that the transition ratio $\frac{\lambda(e_b^\star(s,\bar{a},q))}{\mu}$ is non-monotonic in $s$. Then the ratio, as a function of $s$, must exhibit a local extremum, and thus contains at least one unimodal segment (or interval), defined as ${\mathbf S}$. 
    By appropriately choosing $\mu$, the transition ratio can be made to cross $1$ multiple times over the unimodal segment ${\mathbf S}$. Each crossing point defines a boundary, so ${\mathbf S}$ can be partitioned into three sub-segments ${\mathbf S}= {\mathbf S}_1 \cup {\mathbf S}_2 \cup {\mathbf S}_3$, where ${\mathbf S}_1 , {\mathbf S}_2 , {\mathbf S}_3$  correspond to the leftmost, middle, and rightmost sub-segments, respectively.
    Select skill states $s_1< \cdots< s_{i_1}$ from ${\mathbf S}_1$, $s_{i_1+1}< \cdots< s_{i_2}$ from ${\mathbf S}_2$, and $s_{i_2+1}< \cdots< s_{i_3}$ from ${\mathbf S}_3$, where $i_1\geq 1$, $i_2-i_1\geq 1$, and $i_3-i_2\geq 2$. By construction, the transition ratio is less than $1$ on ${\mathbf S}_1$, greater than $1$ on ${\mathbf S}_2$, and less than $1$ on ${\mathbf S}_3$. Thus, we have $\pi_{i+1}/\pi_{i}<1$ for $i=1,\cdots,i_1$, $\pi_{i+1}/\pi_{i}>1$ for $i=i_1+1,\cdots,i_2$, and $\pi_{i+1}/\pi_{i}<1$ for $i=i_2+1,\cdots,i_3-1$, respectively. This implies that the selected skill set $S^\star=(s_1,\cdots,s_{i_3})$ leads to a bimodal distribution $\boldsymbol{\pi}(\mu, \bar{a},q,S^\star)$.
    This bimodality directly implies the multimodality of $\boldsymbol{\pi}(\mu, \bar{a},q,S)$ for any richer skill set $S\supseteq S^\star$.
\end{proof} 

Equipped with these lemmas, we now prove Theorem~\ref{thm:bayes_skill_multimodal}.
\begin{proof}[Proof of Theorem~\ref{thm:bayes_skill_multimodal}]
    If Condition~\ref{cond:multimodal_cond} holds, combining the first part of Lemma~\ref{lem:effort_caseA+B} with Lemma \ref{lem:bayes_effort_nonmonotonic}, we know that there exists a threshold $\tilde{a}\geq 0$ such that: if $\bar{a}>\Tilde{a}$, the skill distribution is multimodal; otherwise, the skill distribution is unimodal; this shows the first part of the theorem. Otherwise, if Condition~\ref{cond:multimodal_cond} does not hold, combining the second part of Lemma~\ref{lem:effort_caseA+B} with Lemma~\ref{lem:bayes_effort_decrease}, we know that the skill distribution is unimodal, which shows the second part of the theorem.
\end{proof}

\subsection{Discussion}\label{sec:literacy_discussion}
\paragraph{Connection to the Matthew effect.}  The Matthew effect describes a rich-get-richer and poor-get-poorer phenomenon, which is commonly observed in education~\citep{stanovich2009matthew}. It depicts an organism-environment correlation: differentially advantaged individuals are exposed to endogenous differences in environmental qualities. Individuals who have early advantages are better at engaging in tasks more efficiently; in turn, this engagement reinforces their advantages. 
In our setting, agents with heterogeneous literacy are exposed to different distributions of signal qualities. High-skill agents are more able to detect AI failures and therefore compensate by exerting additional effort. Effort, in turn, shapes the evolution of skill. 
Thus, we identify a multimodal steady-state skill distribution analogous to the Matthew effect. 

\paragraph{The multimodality breaks for constant verification ability.} Theorem~\ref{thm:bayes_skill_multimodal} shows the skill distribution can be multimodal. It broadly emerges in cases with endogenous heterogeneity in verification ability. To isolate the effect of endogenous literacy, we contrast with the setting where agents have identical verification ability, i.e., $v(s)=v\geq \frac{1}{2}$ for any $s$. In this case, the multimodality breaks and the skill distribution is unimodal (see Appendix~\ref{app:literacy_constant_v}). Setting $v(s)=v\geq \frac{1}{2}$ and $q=1$ recovers the model in Section~\ref{sec:bdc} and setting $v(s)=\frac{1}{2}$ recovers the ex-ante model in Section~\ref{sec:AI_unreliability}, under which the multimodality breaks and the unimodality applies.

\section{Conclusion}
GenAI tools are increasingly integrated into work and education, yet empirical evidence on their impact is mixed, with productivity gains in some settings and losses in others. This work proposes a model of human-AI interaction in which AI assistance, effort, skill, and productivity are interdependent. It characterizes when productivity paradoxes and skill polarization arise, providing analytical measures and sufficient conditions. Together, our results clarify when GenAI tools are likely to improve outcomes and when they may backfire. The principal failure in the use of GenAI is a reduction of effort and its downstream consequences (due to skill decay, AI unreliability, or lack of AI literacy), which is consistent with the empirical evidence to date \cite{poulidis2025self,dell2023navigating,eames2026computer}. 

Our analysis of these failure modes also hints at several guidelines for practice. First, system designers could introduce incentives or guardrails to elicit more effort from users. From a mechanism design standpoint, platforms could introduce ex-ante frictions that require users to exert effort before accessing outputs, such as token rationing or follow-up clarification questions. Recent LLM tools (e.g., Claude Code's Interactive Interview Mode and ChatGPT's Study mode) have introduced prompt clarification strategies, turning the user experience from that of an ``answering machine'' to a dialogue with the AI system, inviting iterative exploration. Ex-post output verification is harder to enforce, but may be feasible if LLMs advance in uncertainty quantification, providing calibrated caveats on outputs that signal when human scrutiny is warranted, thus prompting users to exert more effort. Second, for users and organizations, AI adoption in knowledge-critical work should be accompanied by knowledge management practices. Investing in AI literacy is important to sustain long-term development and avoid skill erosion.  Recent regulations underscore this point: for example, the EU Artificial Intelligence Act requires AI deployment to consider individual knowledge, experience, education, training, and the context in which AI systems are to be used.\footnote{See Article 4 of the EU Artificial Intelligence Act (Regulation (EU) 2024/1689): \url{https://eur-lex.europa.eu/eli/reg/2024/1689/oj}. Accessed on 05/2026.}

\bibliographystyle{alpha}
\bibliography{ref_paper}

\newpage
\appendix
\begin{center}
    \textbf{\Large Appendix}
\end{center}
\startcontents[appendices]
\printcontents[appendices]{}{1}{\setcounter{tocdepth}{2}}
\newpage
\section{Supplementary materials for Section~\ref{sec:bdc}}
\subsection{Steady-state distribution (Proposition~\ref{prop:bdc_expression_s3})}\label{app:bdc_steady_state_s3}
\begin{lemma}[\cite{grimmett2001probability}, Chap.~6.11]\label{lem:bdc_std_result}
Suppose $X$ is a continuous-time Markov chain taking values in $[N]$. Suppose that for any integer values  $n\in[N]$, the infinitesimal transition probabilities are given by
\begin{align*}
\mathbb P( X(t+h)=n+m|X(t)=n ) & =
\begin{cases}
    \lambda_n h + o(h), & \text{if } m=1,\\
    \mu_n h + o(h), &  \text{if }  m=-1,\\
    o(h), &  \text{if } |m|>1,
\end{cases} 
\end{align*}
where $\lambda_i, \mu_i >0$ are birth and death rates, for all admissible transitions.
Then, the steady-state distribution $\boldsymbol{\pi}=(\pi_1,\pi_2, \cdots,\pi_N)$ satisfies
\begin{equation*}
    \pi_k=\pi_1\prod_{i=1}^{k-1}\frac{\lambda_i}{\mu_{i+1}} \quad\text{and}\quad  \pi_1=\dfrac{1}{1+\sum_{k=2}^{N} \prod_{i=1}^{k-1}\frac{\lambda_i}{\mu_{i+1}}}.
\end{equation*}
\end{lemma}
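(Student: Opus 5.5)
The natural route is detailed balance for a finite birth-death chain, then normalization. The chain is irreducible on the finite state space $[N]$ because all rates $\lambda_i,\mu_i$ are strictly positive. So a stationary distribution exists and is unique, and it is characterized by the global balance equations $\boldsymbol{\pi}Q=0$, $\sum_k\pi_k=1$. Here $Q$ is the generator: its off-diagonal entries are $Q_{n,n+1}=\lambda_n$ for $n<N$ and $Q_{n,n-1}=\mu_n$ for $n>1$, with diagonal entries making the rows sum to zero. At the boundary we set $\lambda_N=0$ and $\mu_1=0$ and do not use them, since the statement only involves admissible transitions.

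Step one writes the balance equations explicitly:
\begin{align*}
0 &= -\lambda_1\pi_1+\mu_2\pi_2,\\
0 &= \lambda_{n-1}\pi_{n-1}-(\lambda_n+\mu_n)\pi_n+\mu_{n+1}\pi_{n+1}, \quad 2\le n\le N-1,\\
0 &= \lambda_{N-1}\pi_{N-1}-\mu_N\pi_N.
\end{align*}
Step two shows by induction that the net probability flux across each cut vanishes. Define $J_n:=\lambda_n\pi_n-\mu_{n+1}\pi_{n+1}$. The first equation gives $J_1=0$, and the interior equation reads $J_{n-1}=J_n$, so $J_n=0$ for all $n\in[N-1]$. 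This is detailed balance: $\pi_{n+1}=\pi_n\lambda_n/\mu_{n+1}$. Iterating gives $\pi_k=\pi_1\prod_{i=1}^{k-1}\lambda_i/\mu_{i+1}$. The last balance equation is then automatically satisfied, since it is just $J_{N-1}=0$.

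Step three normalizes. From $\sum_k\pi_k=1$ we get $\pi_1\big(1+\sum_{k=2}^N\prod_{i=1}^{k-1}\lambda_i/\mu_{i+1}\big)=1$, and since every product is positive this yields the stated formula for $\pi_1$. Step four handles uniqueness: any stationary distribution must satisfy the same recursion from $\pi_1$ and hence coincides with this one. Irreducibility together with finiteness of the state space also guarantees that it is the limiting (steady-state) distribution.

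The only delicate point is the flux-telescoping induction, in particular handling the boundary states $1$ and $N$ consistently with the interior equations. Everything else is routine algebra. In the application to Proposition~\ref{prop:bdc_expression_s3} one substitutes $\lambda_i=\lambda(e^\star(s_i,a))>0$ and $\mu_{i+1}=\mu$.
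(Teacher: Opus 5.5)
Your proof is correct. It is essentially the standard argument: the paper gives no proof of its own and only cites Grimmett--Stirzaker, Chap.~6.11, which obtains the stationary distribution as you do, by solving $\boldsymbol{\pi}Q=0$ recursively from the boundary state (your vanishing-flux induction), using irreducibility on the finite state space, and then normalizing.
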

We now adapt Lemma~\ref{lem:bdc_std_result} to the birth-death chain in our models.
\begin{proof}[Proof of Proposition~\ref{prop:bdc_expression_s3}]
    Our model corresponds to a Markov chain taking values in the skill states $S$. In the language of Lemma~\ref{lem:bdc_std_result}, skill state $s_k$ corresponds to state index $k$, upward transition rate $\mathcal{K}_{s_k s_{k+1}}(a)= \lambda(e^\star(s_k,a))$ to birth rate $\lambda_k$, and downward transition rate $\mathcal{K}_{s_k s_{k-1}}(a)= \mu$ to death rate $\mu_k$. Thus, Lemma~\ref{lem:bdc_std_result} gives the desired steady-state distribution. 
    Then the steady-state effort and productivity follow directly by taking the expectation over the steady-state distribution.
\end{proof}

\subsection{Positive sensitivity gap (Example~\ref{ex:positive_sensitivity_gap})}\label{app:pf_positive_sensitivity_gap}
\begin{proof}[Proof of Example~\ref{ex:positive_sensitivity_gap}]
Recall that the condition $\Delta_m>0$ is equivalent to:
\begin{equation}\label{eq:bcd_key_condition}
    \sum_{j=1}^m\frac{\lambda'(s_m-s_j)}{\lambda(s_m-s_j)} > \dfrac{p'(x^\star+s_{m+1}-s_m)}{p(x^\star+s_{m+1}-s_m)-p(x^\star)}.
\end{equation}
Under the unboundedness condition, we have $\lim_{s_{m+1}\to\infty} \big( p(x^\star+s_{m+1}-s_m)-p(x^\star) \big) = \infty$, while $p'(x^\star+s_{m+1}-s_m)<\gamma$ by the definition of the critical level $x^\star$. Under the vanishing derivative condition, we have $\lim_{s_{m+1}\to\infty} p'(x^\star+s_{m+1}-s_m)=0$, while $p(x^\star+s_{m+1}-s_m)-p(x^\star)>p(x^\star+1)-p(x^\star)$ holds for any $s_{m+1}>s_m+1$. It follows that
$$\lim_{s_{m+1}\to\infty} \dfrac{p'(x^\star+s_{m+1}-s_m)}{p(x^\star+s_{m+1}-s_m)-p(x^\star)} =0.$$
By the monotonicity and concavity of $p$, the right-hand side of \eqref{eq:bcd_key_condition} is a continuous and decreasing function in $s_{m+1}$; since it tends to zero as $s_{m+1}$ goes to infinity, there exists a threshold $\bar{s}$ such that 
$$\Delta_m > 0 \quad \text{and} \quad s_{m+1}>\bar{s},$$
which completes the proof.
\end{proof}

\subsection{Arbitrarily small steady-state productivity ratio (Proposition~\ref{prop:bcd_bad_epsilon_general})}\label{app:proof_bcd_bad_epsilon_general}
We show a stronger result by showing that the productivity ratio can be made arbitrarily small even when restricting attention to an intermediate region of the AI assistance level $a$, $x^\star -s_2< a\leq x^\star-s_1=x^\star$ (recall that $s_1=0$). We first establish a productivity ratio describing the limiting behavior of $\lambda(0)$. 
\begin{lemma}\label{lem:ratio_lambdatozero}
Given a skill set $S=(s_1,s_2,\cdots,s_N)$ with $0=s_1<s_2<\cdots<s_N$ and a value $z$ such that $0\leq z\leq \min(x^\star, s_2)$, we have
\begin{equation*}
    \lim_{\lambda(0)\to 0} \dfrac{\mathcal{P}(x^\star)}{\mathcal{P}(x^\star-z)} = \dfrac{p(x^\star)}{ \dfrac{1}{1+\frac{\lambda(z)}{\mu}} \left(p(x^\star) + \frac{\lambda(z)}{\mu} p(x^\star+s_2-z)\right)}.
\end{equation*}
\end{lemma}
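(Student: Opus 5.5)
The plan is to evaluate both steady-state productivities explicitly through Proposition~\ref{prop:bdc_expression_s3}, using the closed-form effort of Proposition~\ref{prop:effort_productivity_expressions}, and then pass to the limit. I read the limit $\lambda(0)\to 0$ as taken along a family of transition functions in which $\lambda(z)$ and $\mu$ stay fixed and only the value at zero effort vanishes. The lemma should state this explicitly, since otherwise $\lambda(z)$ could also move.

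\emph{Step 1 (the point $a=x^\star$).} Since $s_1=0$, Proposition~\ref{prop:effort_productivity_expressions} gives $e^\star(s_k,x^\star)=(x^\star-s_k-x^\star)^+=0$ for every $k$. Hence every upward rate equals $\lambda(0)$, and $\pi_k(x^\star)=\pi_1(x^\star)(\lambda(0)/\mu)^{k-1}$. The state productivities are $p^\star(s_k,x^\star)=\max(p(x^\star),p(s_k+x^\star))=p(x^\star+s_k)$. As $\lambda(0)\to 0$ we get $\pi_1(x^\star)\to 1$ and $\pi_k(x^\star)\to 0$ for $k\ge 2$. Because $N$ is finite and the values $p(x^\star+s_k)$ do not depend on $\lambda$, it follows that $\mathcal{P}(x^\star)\to p(x^\star+s_1)=p(x^\star)$.

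\emph{Step 2 (the point $a=x^\star-z$).} The condition $z\le x^\star$ ensures $a\ge 0$. At $s_1=0$ the effort is $(x^\star-(x^\star-z))^+=z$, so the rate from $s_1$ to $s_2$ is $\lambda(z)$. For $k\ge 2$ we have $s_k\ge s_2\ge z$, so the effort is $(z-s_k)^+=0$ and the rate is $\lambda(0)$.

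The steady-state ratios are therefore $\pi_2/\pi_1=\lambda(z)/\mu$ and $\pi_k/\pi_1=\frac{\lambda(z)}{\mu}(\lambda(0)/\mu)^{k-2}$ for $k\ge 3$. The latter vanish in the limit, so
\[
\pi_1\to \frac{1}{1+\lambda(z)/\mu}, \qquad \pi_2\to\frac{\lambda(z)/\mu}{1+\lambda(z)/\mu}, \qquad \pi_k\to 0 \ \ (k\ge 3).
\]
For the state productivities, $p^\star(s_1,x^\star-z)=\max(p(x^\star),p(x^\star-z))=p(x^\star)$ by monotonicity of $p$. Also $p^\star(s_2,x^\star-z)=\max(p(x^\star),p(x^\star+s_2-z))=p(x^\star+s_2-z)$, because $s_2\ge z$.

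Taking the limit of the weighted sum yields the stated denominator. That limit is strictly positive because $p(x^\star)>0$ in any nondegenerate case, so the limit of the ratio is the ratio of the limits.

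\emph{Main obstacle.} The computation itself is routine. The delicate point is that the higher-state terms $k\ge3$ in both chains vanish only because every effort above $s_1$ (respectively above $s_2$ in the second chain) is exactly zero. This relies on the hypotheses $s_1=0$ and $z\le s_2$; if either failed, some rates would be $\lambda$ evaluated at a positive effort and would not vanish. The other thing to get right is the modelling convention that $\lambda(z)$ is held fixed while $\lambda(0)\to0$. This is consistent with $\lambda$ being increasing and concave (for instance $\lambda(e)=\lambda(0)+ke$ capped suitably), and it should be stated at the start of the proof.
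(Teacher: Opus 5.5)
Your proposal is correct and follows essentially the same route as the paper. You use Proposition~\ref{prop:effort_productivity_expressions} to see that all efforts vanish at $a=x^\star$, and that at $a=x^\star-z$ only the effort $z$ at $s_1=0$ survives. You then write both steady states via Proposition~\ref{prop:bdc_expression_s3} and let $\lambda(0)\to 0$ with $\lambda(z)$ and $\mu$ fixed, which is exactly how the lemma is used in Proposition~\ref{prop:bcd_bad_epsilon_general}. The one nuance is $z=0$, where $\lambda(z)=\lambda(0)$ cannot be held fixed; there the left-hand ratio is identically $1$ and the right-hand expression also tends to $1$.
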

Equipped with Lemma~\ref{lem:ratio_lambdatozero}, we now prove Proposition~\ref{prop:bcd_bad_epsilon_general}.
\begin{proof}[Proof of Proposition~\ref{prop:bcd_bad_epsilon_general}]
Fix the production function $p$ and cost function $c(e)=\gamma e$, when $x^\star(p,\gamma) -s_2< a_\ell < a_h \leq x^\star(p,\gamma)-s_1=x^\star(p,\gamma)$, the following inequality holds:
$$\inf_{a_\ell <a_h}\dfrac{\mathcal{P}(a_h)}{\mathcal{P}(a_\ell)}\leq \inf_{(x^\star(p,\gamma)-s_2)^+\leq a_\ell <x^\star}\dfrac{\mathcal{P}(x^\star(p,\gamma))}{\mathcal{P}(a_\ell)}=\inf_{0\leq z\leq \min(x^\star(p,\gamma),s_2)}\dfrac{\mathcal{P}(x^\star(p,\gamma))}{\mathcal{P}(x^\star(p,\gamma)-z)}.$$
Thus, by taking $\lambda(0)$ sufficiently close to $0$, Lemma~\ref{lem:ratio_lambdatozero} implies that
\begin{equation*}
    \inf_{a_\ell <a_h}\dfrac{\mathcal{P}(a_h)}{\mathcal{P}(a_\ell)}\leq  \inf_{0\leq z\leq \min(x^\star(p,\gamma),s_2)} \dfrac{p(x^\star(p,\gamma))}{ \dfrac{1}{1+\frac{\lambda(z)}{\mu}} \left(p(x^\star(p,\gamma)) + \frac{\lambda(z)}{\mu} p(x^\star(p,\gamma)+s_2-z)\right)}.
\end{equation*}
By taking the infimum over all concave production and linear cost functions, it follows that
\begin{equation}\label{eq:arbitrary_ratio_expression}
    \inf_{p,\gamma}\inf_{a_\ell <a_h}\dfrac{\mathcal{P}(a_h)}{\mathcal{P}(a_\ell)} \leq \inf_{p,\gamma}\inf_{0\leq z\leq \min(x^\star(p,\gamma),s_2)} \dfrac{p(x^\star(p,\gamma))}{ \dfrac{1}{1+\frac{\lambda(z)}{\mu}} \left(p(x^\star(p,\gamma)) + \frac{\lambda(z)}{\mu} p(x^\star(p,\gamma)+s_2-z)\right)}.
\end{equation}
We consider a piecewise linear production function $p$ whose slope changes only slightly at $x^\star$, with two different yet arbitrarily close slopes $\gamma+\varepsilon^+$ and $\gamma-\varepsilon^-$, where $\varepsilon^+, \varepsilon^- >0$ are sufficiently small. Specifically,
\begin{align*}
    p(x)& =
        \begin{cases}
            (\gamma+\varepsilon^+)x , & \text{if } x \leq x^\star ,\\
            (\gamma-\varepsilon^-)(x-x^\star) + (\gamma+\varepsilon^+)x^\star, & \text{if } x> x^\star.
        \end{cases}
\end{align*}

Thus, the production function remains almost linear, with a kink at $x^\star$. Note that now $x^\star$ can be any real value. The ratio in \eqref{eq:arbitrary_ratio_expression}, when taking $\lambda(0), \varepsilon^+, \varepsilon^-$ sufficiently close to $0$, is thus upper bounded by 
\begin{align*}
    \inf_{x^\star}\inf_{0\leq z\leq \min(x^\star,s_2)} \dfrac{x^\star}{\dfrac{1}{1+\frac{\lambda(z)}{\mu}} \left(x^\star + \frac{\lambda(z)}{\mu} (x^\star+s_2-z)\right)}
    &=\inf_{0\leq z\leq x^\star, z\leq s_2} \dfrac{x^\star}{\dfrac{1}{1+\frac{\lambda(z)}{\mu}} \left(x^\star + \frac{\lambda(z)}{\mu} (x^\star+s_2-z)\right)} \\
    &= \inf_{0\leq z\leq s_2} \dfrac{z}{\dfrac{1}{1+\frac{\lambda(z)}{\mu}} \left(z + \frac{\lambda(z)}{\mu} s_2 \right)}.
\end{align*}
The last equation holds as the expression is a fractional function in $x^\star$; thus, the ratio is minimized at the boundary $x^\star =z$.
Set $z=\frac{\varepsilon}{4}s_2$ and consider a concave function $\lambda$ with $\lambda(\frac{\varepsilon}{4} s_2)$ sufficiently large such that $\dfrac{\lambda(\frac{\varepsilon}{4} s_2)}{\mu+\lambda(\frac{\varepsilon}{4} s_2)}\geq 1-\frac{\varepsilon}{4}$.
Then the ratio is bounded by $\dfrac{\frac{\varepsilon}{4}}{ \frac{\varepsilon}{4}  \frac{\varepsilon}{4} + (1- \frac{\varepsilon}{4})}\leq \varepsilon$, which completes the proof.
\end{proof}

\begin{proof}[Proof of Lemma~\ref{lem:ratio_lambdatozero}]
Note that the utility-maximizing effort given $a_h=x^\star$ is zero for each skill, i.e., $e^\star(s_i, x^\star)=0$ holds for all~$i$; the utility-maximizing effort given $a_\ell =x^\star-z$ ($z\leq s_2$) is zero for each skill except the lowest one, i.e., $e^\star(s_i,x^\star-z)=0$ holds for all $i\geq 2$. Recall the steady-state distribution $\boldsymbol{\pi}$ is given by Proposition~\ref{prop:bdc_expression_s3}, we have
\begin{align*}
    \dfrac{\mathcal{P}(x^\star)}{\mathcal{P}(x^\star-z)}
    =&\dfrac{\dfrac{1}{\sum_{k=1}^N(\frac{\lambda(0)}{\mu})^{k-1}}\cdot\left(\sum_{k=1}^{N} (\frac{\lambda(0)}{\mu})^{k-1} p(x^\star+s_k)\right)}{\dfrac{1}{1+\frac{\lambda(z)}{\mu}\sum_{k=2}^{N}(\frac{\lambda(0)}{\mu})^{k-2}}\cdot\left(p(x^\star) + \sum_{k=2}^{N} \frac{\lambda(z)}{\mu}(\frac{\lambda(0)}{\mu})^{k-2} p(x^\star+s_k-z)\right)}. 
\end{align*}
\end{proof}

\subsection{Steady-state skill degradation (Proposition~\ref{prop:bdc_skill_dist})}\label{app:proof_bdc_skill_dist}
Proposition~\ref{prop:bdc_skill_dist} immediately follows from an auxiliary exchange-of-summation inequality, stated next. 
\begin{lemma}\label{lem:ratio_ineq}
    Suppose $x_1,x_2,\cdots,x_N$ and $y_1,y_2,\cdots,y_N$ are $2N$ non-negative values such that $x_i\geq y_i$ for all $i$. Then the following inequality holds for any $k\in [N]$:
    $$\dfrac{\sum_{m=1}^k\prod_{i=1}^m x_i}{\sum_{m=1}^{N}\prod_{i=1}^m x_i}\leq \dfrac{\sum_{m=1}^k\prod_{i=1}^m y_i}{\sum_{m=1}^{N}\prod_{i=1}^m y_i},$$
    given that both denominators are strictly positive.
\end{lemma}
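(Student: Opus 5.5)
The plan is to clear denominators and reduce the claim to a termwise comparison of products. Write $P_m:=\prod_{i=1}^m x_i$ and $Q_m:=\prod_{i=1}^m y_i$, both non-negative. Since both denominators $\sum_{m=1}^N P_m$ and $\sum_{m=1}^N Q_m$ are strictly positive, the claimed inequality is equivalent to
\begin{equation*}
\Big(\sum_{m=1}^k P_m\Big)\Big(\sum_{m=1}^N Q_m\Big)\leq \Big(\sum_{m=1}^k Q_m\Big)\Big(\sum_{m=1}^N P_m\Big).
\end{equation*}
For $k=N$ both sides coincide and there is nothing to prove, so I assume $k<N$.

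Next, split each full sum into the range $m\le k$ and the range $m>k$. The cross term $\big(\sum_{m\le k}P_m\big)\big(\sum_{m\le k}Q_m\big)$ then appears on both sides and cancels. What remains to show is
\begin{equation*}
\sum_{j=1}^k\sum_{l=k+1}^N P_jQ_l\;\leq\;\sum_{j=1}^k\sum_{l=k+1}^N Q_jP_l .
\end{equation*}

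I would prove this pair by pair, i.e., show $P_jQ_l\le Q_jP_l$ for every $j\le k<l$. Because $j<l$, the longer products factor through the shorter ones:
\begin{equation*}
Q_l=Q_j\prod_{i=j+1}^l y_i \quad\text{and}\quad P_l=P_j\prod_{i=j+1}^l x_i .
\end{equation*}
Hence
\begin{equation*}
P_jQ_l=P_jQ_j\prod_{i=j+1}^l y_i\;\le\; P_jQ_j\prod_{i=j+1}^l x_i=Q_jP_l .
\end{equation*}
The middle inequality holds because $0\le y_i\le x_i$ gives $\prod y_i\le\prod x_i$, and the common factor $P_jQ_j$ is non-negative. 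Summing over $j\le k<l$ gives the reduced inequality, and dividing back by the positive denominators gives the lemma.

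There is no real obstacle here; the only point needing care is zero entries. Working in the cross-multiplied form avoids dividing by any possibly-zero $P_j$ or $Q_j$, so the argument needs nothing beyond non-negativity.

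For Proposition~\ref{prop:bdc_skill_dist}, I would apply the lemma with $x_1=y_1=1$ and, for $i\ge2$, $x_i=\lambda(e^\star(s_{i-1},a_\ell))/\mu$ and $y_i=\lambda(e^\star(s_{i-1},a_h))/\mu$. Here $x_i\ge y_i$ holds because $e^\star(s,a)=(x^\star-s-a)^+$ is decreasing in $a$ by Proposition~\ref{prop:effort_productivity_expressions}, and $\lambda$ is increasing. With this choice the lemma is exactly $\sum_{i\le k}\pi_i(a_\ell)\le\sum_{i\le k}\pi_i(a_h)$, using the steady-state formula of Proposition~\ref{prop:bdc_expression_s3}.
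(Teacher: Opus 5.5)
Your proof is correct and follows essentially the same route as the paper: cross-multiply, cancel the common block indexed by $m,n\le k$, and compare the remaining cross terms pairwise via $P_jQ_l=P_jQ_j\prod_{i=j+1}^{l}y_i\le P_jQ_j\prod_{i=j+1}^{l}x_i=Q_jP_l$. This pairwise comparison is exactly the paper's ``regrouping'' into $\prod_{j=m+1}^{n}x_j\prod_{i=1}^{m}(x_iy_i)$, and your application to Proposition~\ref{prop:bdc_skill_dist} uses the same choice $x_1=y_1=1$, $x_{i+1}=\lambda(e^\star(s_i,a_\ell))/\mu$ as the paper.
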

Equipped with Lemma~\ref{lem:ratio_ineq}, we now prove Proposition~\ref{prop:bdc_skill_dist}.
\begin{proof}[Proof of Proposition~\ref{prop:bdc_skill_dist}]
Recall that the steady-state distribution $\boldsymbol{\pi}(a)$  is given by Proposition~\ref{prop:bdc_expression_s3}, which involves sums of products of transition ratios $\frac{\lambda(e^*(s_i,a))}{\mu}$. Let $x_{i+1}=\frac{\lambda(e^\star(s_i,a_\ell))}{\mu}$ and $y_{i+1}=\frac{\lambda(e^\star(s_i,a_h))}{\mu}$ for $i\in [N-1]$, with $x_1=y_1=1$. Then $\prod_{i=1}^k x_i = \prod_{i=1}^{k-1}\frac{\lambda(e^\star(s_i,a_\ell))}{\mu} $ and $\prod_{i=1}^k y_i = \prod_{i=1}^{k-1}\frac{\lambda(e^\star(s_i,a_h))}{\mu}$.
Since higher AI assistance level $a$ induces less effort (Proposition~\ref{prop:effort_productivity_expressions}), we have $e^\star(s_i,a_\ell)\geq e^\star(s_i,a_h)$ for all $i$. Note that $\lambda(e)$ is increasing in $e$, it follows that $x_i \geq y_i$ for all $i$.  
    The desired inequality $\sum_{i=1}^k \pi_i(a_\ell) \leq \sum_{i=1}^k\pi_i(a_h)$ then follows directly from Lemma~\ref{lem:ratio_ineq}.
\end{proof}

\begin{proof}[Proof of Lemma~\ref{lem:ratio_ineq}]
For any $k\in [N]$, we have
\begin{align*}
    &\sum_{m=1}^{N}\prod_{i=1}^m x_i \sum_{m=1}^k\prod_{i=1}^m y_i
    =\sum_{n=1}^{N}\prod_{j=1}^n x_j \sum_{m=1}^k\prod_{i=1}^m y_i\\
    =& \sum_{1\leq m\leq k, 1\leq n \leq N}\prod_{j=1}^n x_j \prod_{i=1}^m y_i &\text{(exchanging the order of summation)}\\
    =& \sum_{1\leq m\leq k, 1\leq n \leq k}\prod_{j=1}^n x_j \prod_{i=1}^m y_i+\sum_{1\leq m\leq k, k+1\leq n \leq N}\prod_{j=1}^n x_j \prod_{i=1}^m y_i & \text{(splitting into $n\leq k$ and $n\geq k+1$)} \\
    =& \sum_{1\leq m\leq k, 1\leq n \leq k}\prod_{j=1}^n x_j \prod_{i=1}^m y_i+\sum_{1\leq m\leq k, k+1\leq n \leq N}\prod_{j=m+1}^n x_j \prod_{i=1}^m (x_iy_i). & \text{(regrouping terms)}\\
\end{align*}
Similarly,
\begin{align*}
     \sum_{m=1}^k\prod_{i=1}^m x_i\sum_{m=1}^{N}\prod_{i=1}^m y_i = \sum_{1\leq m\leq k, 1\leq n \leq k}\prod_{j=1}^n x_j \prod_{i=1}^m y_i+\sum_{1\leq m\leq k, k+1\leq n \leq N}\prod_{j=m+1}^n y_j \prod_{i=1}^m (x_iy_i).
\end{align*}
Since $x_j\geq y_j$, we have
\begin{align*}
    &\sum_{1\leq m\leq k, 1\leq n \leq k}\prod_{j=1}^n x_j \prod_{i=1}^m y_i+\sum_{1\leq m\leq k, k+1\leq n \leq N}\prod_{j=m+1}^n x_j \prod_{i=1}^m (x_iy_i) \\
    \geq& \sum_{1\leq m\leq k, 1\leq n \leq k}\prod_{j=1}^n x_j \prod_{i=1}^m y_i+\sum_{1\leq m\leq k, k+1\leq n \leq N}\prod_{j=m+1}^n y_j \prod_{i=1}^m (x_iy_i),
\end{align*}
which implies $\sum_{m=1}^{N}\prod_{i=1}^m x_i \sum_{m=1}^k\prod_{i=1}^m y_i \geq \sum_{m=1}^k\prod_{i=1}^m x_i\sum_{m=1}^{N}\prod_{i=1}^m y_i$. After rearranging the terms, we conclude the proof. 
\end{proof}

\subsection{Derivative characterization (Lemma~\ref{lem:thm_productivity_derivative})}\label{app:thm_productivity_derivative}
\begin{proof}[Proof of Lemma~\ref{lem:thm_productivity_derivative}]
Recall that the steady-state productivity is given by Proposition~\ref{prop:bdc_expression_s3}. Combining Proposition~\ref{prop:effort_productivity_expressions} and Proposition~\ref{prop:bdc_expression_s3}, the steady-state productivity can be expressed as
\begin{align*}
    \mathcal{P}(a) &= \sum_{k=1}^{N} \pi_k(a) \cdot p(s_k+e^\star(s_k,a)+a) = \sum_{k=1}^{N} \bigg(\pi_1(a) \prod_{i=1}^{k-1}\dfrac{\lambda(e^\star(s_i,a))}{\mu}\bigg)\cdot \max\left(p(x^\star),p(s_k+a)\right). 
\end{align*}
Thus, splitting the sum into the cases $k\leq m$ and $k\geq m+1$, we have
\begin{align*}
     \mathcal{P}(a)&= \pi_1(a) \bigg(\sum_{k=1}^m \frac{\prod_{i=1}^{k-1}\lambda(e^\star(s_i,a))}{\mu^{k-1}} p(x^\star) + \sum_{k=m+1}^{N} \frac{\prod_{i=1}^m \lambda(e^\star(s_i,a)) \lambda(0)^{k-m-1}}{\mu^{k-1}}  p(s_k+a) \bigg).
\end{align*}
Recall that Proposition~\ref{prop:bdc_expression_s3} implies
$$\pi_1(a)=\dfrac{1}{1+\sum_{k=2}^{N}\prod_{i=1}^{k-1}\frac{\lambda(e^\star(s_i,a))}{\mu}}=\dfrac{1}{\sum_{k=1}^{N}\prod_{i=1}^{k-1}\frac{\lambda(e^\star(s_i,a))}{\mu}}.$$
Here, by convention, we write $\prod_{i=1}^{0} x_i = 1$. Thus, the steady-state productivity is 
\begin{align*}
    \mathcal{P}(a)&= \dfrac{ \sum_{k=1}^m \frac{\prod_{i=1}^{k-1}\lambda(x^\star -s_i-a)}{\mu^{k-1}} p(x^\star) 
    + \sum_{k=m+1}^{N} \frac{\prod_{i=1}^m \lambda(x^\star -s_i-a)\lambda(0)^{k-m-1}}{\mu^{k-1}} p(s_k+a) }{ \sum_{k=1}^m \frac{\prod_{i=1}^{k-1}\lambda(x^\star -s_i-a)}{\mu^{k-1}}
    + \sum_{k=m+1}^{N} \frac{\prod_{i=1}^m \lambda(x^\star -s_i-a)\lambda(0)^{k-m-1}}{\mu^{k-1}} }  \\
    &= p(x^\star)+ \dfrac{ \sum_{k=m+1}^{N} \frac{\prod_{i=1}^m \lambda(x^\star -s_i-a)\lambda(0)^{k-m-1}}{\mu^{k-1}} (p(s_k+a) -p(x^\star))}{ \sum_{k=1}^m\frac{\prod_{i=1}^{k-1}\lambda(x^\star -s_i-a)}{\mu^{k-1}}
    + \sum_{k=m+1}^{N} \frac{\prod_{i=1}^m \lambda(x^\star -s_i-a)\lambda(0)^{k-m-1}}{\mu^{k-1}} } & \text{(subtracting $p(x^\star)$)} \\
    &= p(x^\star)+ \dfrac{ \sum_{k=m+1}^{N} \frac{\lambda(0)^{k-m-1}}{\mu^{k-1}} (p(s_k+a) -p(x^\star))}{ \sum_{k=1}^m \frac{1}{\mu^{k-1} \prod_{i=k}^{m}\lambda(x^\star -s_i-a)}
    + \sum_{k=m+1}^{N} \frac{\lambda(0)^{k-m-1}}{\mu^{k-1}} },
\end{align*}
where the last equality follows by dividing both the numerator and the denominator by $\prod_{i=1}^m \lambda(x^\star -s_i-a)$. 
Recall that $$A_{k,m}(a):=\frac{1}{\prod_{i=k}^{m}\lambda(x^\star -s_i-a)}, \forall k\in[m] \quad \text{and} \quad B_{k,m}=\lambda(0)^{k-m-1} , \forall k\in[N]\setminus [m].$$
Substituting these expressions yields
$$\mathcal{P}(a)= p(x^\star)+ \dfrac{ \sum_{k=m+1}^{N} \frac{B_{k,m}}{\mu^{k-1}} (p(s_k+a) -p(x^\star))}{ \sum_{k=1}^m \frac{A_{k,m}(a)}{\mu^{k-1}} + \sum_{k=m+1}^{N} \frac{B_{k,m}}{\mu^{k-1}}}. $$
Differentiating the productivity with respect to $a$ gives
\begin{align*}
    \mathcal{P}'(a) 
    = \frac{1}{\big(\sum_{k=1}^m \frac{A_{k,m}(a)}{\mu^{k-1}} +\sum_{k=m+1}^{N} \frac{B_{k,m}}{\mu^{k-1}}\big)^2} 
    &\bigg[\Big( \sum_{k=m+1}^{N} \frac{B_{k,m}}{\mu^{k-1}} p'(s_k+a)\Big) \Big(\sum_{k=1}^m \frac{A_{k,m}(a)}{\mu^{k-1}} +\sum_{k=m+1}^{N} \frac{B_{k,m}}{\mu^{k-1}}\Big) \\
    &- \Big( \sum_{k=m+1}^{N} \frac{B_{k,m}}{\mu^{k-1}}(p(s_k+a)-p(x^\star))\Big) \sum_{k=1}^m \frac{A_{k,m}'(a)}{\mu^{k-1}}\bigg],
\end{align*}
which completes the proof.
\end{proof}

\subsection{Steady-state productivity characterization (Lemma~\ref{lem:bdc_productivity_unimodal})}\label{app:proof_bdc_productivity}
Recall that the derivative of the steady-state productivity is established in Lemma~\ref{lem:thm_productivity_derivative}. Recall that
$$A_{k,m}(a):=\frac{1}{\prod_{i=k}^{m}\lambda(x^\star -s_i-a)}, \forall k\in[m] \quad \text{and} \quad B_{k,m}=\lambda(0)^{k-m-1} , \forall k\in[N]\setminus [m],$$ 
given $m\in[N-1]$ and $a\in (x^\star -s_{m+1}, x^\star -s_{m}]$ (i.e., $a\in {\cal I}_m$).

We start with a lemma to facilitate the analysis of derivatives.
\begin{lemma}\label{lem:thm_productivity_derivative_monotonicity}
    Given $m\in[N-1]$, the following function is increasing in $a$ on $(x^\star -s_{m+1}, x^\star -s_{m}]$: 
    $$h(a):=\dfrac{\sum_{k=1}^m \frac{A_{k,m}'(a)}{\mu^{k-1}}}{\sum_{k=1}^m \frac{A_{k,m}(a)}{\mu^{k-1}}+\sum_{k=m+1}^{N} \frac{B_{k,m}}{\mu^{k-1}}}.$$ 
\end{lemma}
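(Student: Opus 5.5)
The plan is to recognize $h$ as a logarithmic derivative. Set
$$D(a):=\sum_{k=1}^m \frac{A_{k,m}(a)}{\mu^{k-1}}+\sum_{k=m+1}^{N} \frac{B_{k,m}}{\mu^{k-1}}.$$
The terms $B_{k,m}=\lambda(0)^{k-m-1}$ do not depend on $a$, so $D'(a)=\sum_{k=1}^m A_{k,m}'(a)/\mu^{k-1}$. Hence $h(a)=D'(a)/D(a)=(\log D)'(a)$, and it suffices to show that $D$ is log-convex on ${\cal I}_m=(x^\star -s_{m+1}, x^\star -s_{m}]$.

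First I will show that each $A_{k,m}$ is log-convex. For $a\in{\cal I}_m$ and $i\le m$, the argument $x^\star-s_i-a$ is nonnegative, so $A_{k,m}$ is well defined and positive, because $\lambda>0$. We have
$$\log A_{k,m}(a)=-\sum_{i=k}^m \log\lambda(x^\star-s_i-a).$$
Since $\lambda$ is positive and concave, $\log\lambda$ is concave, as a composition of a concave nondecreasing function with a concave function. Therefore $-\log\lambda(x^\star-s_i-a)$ is convex in $a$, since precomposing with an affine map preserves convexity. A sum of convex functions is convex, so $\log A_{k,m}$ is convex, i.e., $A_{k,m}$ is log-convex. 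The positive constants $B_{k,m}/\mu^{k-1}$ are trivially log-convex.

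Next I will use the standard fact that a sum of positive log-convex functions is log-convex. In differentiable form: if $f_j>0$ satisfy $f_j f_j''\ge (f_j')^2$, then by Cauchy--Schwarz
$$\Big(\sum_j f_j'\Big)^2\le\Big(\sum_j \frac{(f_j')^2}{f_j}\Big)\Big(\sum_j f_j\Big)\le \Big(\sum_j f_j''\Big)\Big(\sum_j f_j\Big).$$
Since $\lambda$ is only assumed differentiable, I would avoid second derivatives and argue via midpoint convexity with Hölder's inequality: $f_j(\tfrac{a+b}{2})\le\sqrt{f_j(a)f_j(b)}$, and summing gives $\sum_j\sqrt{f_j(a)f_j(b)}\le\sqrt{\sum_j f_j(a)\sum_j f_j(b)}$. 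Together with continuity, this yields log-convexity of $D$. Because $D$ is differentiable, the derivative of the convex function $\log D$ is nondecreasing, which is exactly that $h$ is (weakly) increasing on ${\cal I}_m$.

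The main obstacle is only regularity. I must make sure that all arguments $x^\star-s_i-a$ stay in the domain where $\lambda$ is concave and positive, including near the left endpoint of ${\cal I}_m$ and at $e=0$. I must also make sure that differentiability of $\lambda$ suffices to pass from convexity of $\log D$ to monotonicity of $h=(\log D)'$. If strict monotonicity is needed later, I would note that strict concavity of $\lambda$ (or strict log-concavity) gives strict convexity of $\log A_{1,m}$. Since $A_{1,m}$ has positive weight in $D$, the inequality then becomes strict.
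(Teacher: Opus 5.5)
Your proposal is correct and is essentially the paper's argument. The paper also treats $h$ as $D'/D$, uses concavity of $\lambda$ to obtain $\bar A_{k,m}''\ge (\bar A_{k,m}')^2/\bar A_{k,m}$ (log-convexity of each term), and finishes with the same Cauchy--Schwarz step. Your midpoint/H\"older version adds one small benefit: it never invokes $\lambda''$, which the paper uses even though $\lambda$ is only assumed differentiable.

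On strictness, you do not need strict concavity of $\lambda$. The paper gets it from the positive constant in $D$: $D D''-(D')^2 \ge \big(\sum_{k=m+1}^N B_{k,m}/\mu^{k-1}\big)\sum_{k=1}^m A_{k,m}''/\mu^{k-1}>0$ whenever $\lambda'>0$.
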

Equipped with Lemma~\ref{lem:thm_productivity_derivative} and Lemma~\ref{lem:thm_productivity_derivative_monotonicity}, we now prove Lemma~\ref{lem:bdc_productivity_unimodal}.
\begin{proof}[Proof of Lemma~\ref{lem:bdc_productivity_unimodal}]
Recall that the derivative of the steady-state productivity is established in Lemma~\ref{lem:thm_productivity_derivative}. Given $m\in[N-1]$, for any $a\in (x^\star -s_{m+1}, x^\star -s_{m}]$ (i.e., $a\in {\cal I}_m$), rearranging the expression of the derivative yields that
\begin{align}\label{eq:intermediate_general_derivative}
    &  \mathcal{P}'(a)\cdot \Big(\sum_{k=1}^m \frac{A_{k,m}(a)}{\mu^{k-1}} +\sum_{k=m+1}^{N} \frac{B_{k,m}}{\mu^{k-1}}\Big)\nonumber \\
    = &  \bigg( \sum_{k=m+1}^{N} \frac{B_{k,m}}{\mu^{k-1}} p'(s_k+a) \bigg) - \bigg( \sum_{k=m+1}^{N} \frac{B_{k,m}}{\mu^{k-1}} \big(p(s_k+a)-p(x^\star)\big)\bigg) \frac{\sum_{k=1}^m \frac{A_{k,m}'(a)}{\mu^{k-1}}}{\sum_{k=1}^m \frac{A_{k,m}(a)}{\mu^{k-1}} +\sum_{k=m+1}^{N} \frac{B_{k,m}}{\mu^{k-1}}}.
\end{align}
Note that the term $\sum_{k=m+1}^{N} \frac{B_{k,m}}{\mu^{k-1}} p'(s_k+a)$ is decreasing in $a$ by the concavity of $p$, and the term $\sum_{k=m+1}^{N} \frac{B_{k,m}}{\mu^{k-1}} \big(p(s_k+a)-p(x^\star)\big)$ is increasing in $a$ by monotonicity of $p$. Lemma~\ref{lem:thm_productivity_derivative_monotonicity} implies that $h(a)=\frac{\sum_{k=1}^m \frac{A_{k,m}'(a)}{\mu^{k-1}}}{\sum_{k=1}^m \frac{A_{k,m}(a)}{\mu^{k-1}} +\sum_{k=m+1}^{N} \frac{B_{k,m}}{\mu^{k-1}}}$ is increasing in $a$.
Thus, the expression \eqref{eq:intermediate_general_derivative} is decreasing in $a$, which implies it can cross $0$ at most once. Thus, the derivative $\mathcal{P}'(a)$ can cross $0$ at most once on ${\cal I}_m$, and the productivity is either increasing or increasing-then-decreasing on ${\cal I}_m$. An decreasing region of $\mathcal{P}(a)$ exists within ${\cal I}_m$ if and only if $\mathcal{P}'_{-}(a)<0$ at the right endpoint of ${\cal I}_m$. 
\end{proof}

\begin{proof}[Proof of Lemma~\ref{lem:thm_productivity_derivative_monotonicity}]
Given the decay rate $\mu>0$, we define $\bar{A}_{k,m}(a)=\frac{A_{k,m}(a)}{\mu^{k-1}}$ and $\bar{B}_m:=\sum_{k=m+1}^{N} \frac{B_{k,m}}{\mu^{k-1}}=\sum_{k=m+1}^{N} \frac{\lambda(0)^{k-m-1}}{\mu^{k-1}}$. Thus, the function $h(a)$ can be expressed as
$$h(a)=\dfrac{\sum_{k=1}^m \bar{A}_{k,m}(a)}{\sum_{k=1}^m \bar{A}_{k,m}(a)+ \bar{B}_m}.$$
To show that $h(a)$ is increasing in $a$, we differentiate $h(a)$:
$$h'(a)=\dfrac{\sum_{k=1}^m \bar{A}_{k,m}''(a) (\sum_{k=1}^m \bar{A}_{k,m}(a) +\bar{B}_m) - (\sum_{k=1}^m \bar{A}_{k,m}'(a))^2 }{(\sum_{k=1}^m \bar{A}_{k,m}(a) +\bar{B}_m)^2}.$$
We now verify that $h'(a)>0$. Note that
\begin{align*}
    \bar{A}_{k,m}'(a) &= \frac{1}{\mu^{k-1} \prod_{i=k}^{m}\lambda(x^\star -s_i-a)} \sum_{j=k}^m\frac{\lambda'(x^\star -s_j-a)}{\lambda(x^\star -s_j-a)} = \bar{A}_{k,m}(a) \sum_{j=k}^m\frac{\lambda'(x^\star -s_j-a)}{\lambda(x^\star -s_j-a)},\\
    \bar{A}_{k,m}''(a) &=  \bar{A}_{k,m}'(a) \sum_{j=k}^m\frac{\lambda'(x^\star -s_j-a)}{\lambda(x^\star -s_j-a)} + \bar{A}_{k,m}(a) \sum_{j=k}^m\frac{\big(\lambda'(x^\star -s_j-a)\big)^2 - \lambda''(x^\star -s_j-a)\lambda(x^\star -s_j-a)}{\lambda^2(x^\star -s_j-a)}\\
    &= \bar{A}_{k,m}(a) \Big(\sum_{j=k}^m\frac{\lambda'(x^\star -s_j-a)}{\lambda(x^\star -s_j-a)}\Big)^2 + \bar{A}_{k,m}(a) \sum_{j=k}^m\frac{\big(\lambda'(x^\star -s_j-a)\big)^2 - \lambda''(x^\star -s_j-a)\lambda(x^\star -s_j-a)}{\lambda^2(x^\star -s_j-a)}.
\end{align*} 
Using the concavity of $\lambda$, we know that
\begin{equation}\label{eq:A_k_second_order}
    \bar{A}_{k,m}''(a) \geq \bar{A}_{k,m}(a) \Big(\sum_{j=k}^m\frac{\lambda'(x^\star -s_j-a)}{\lambda(x^\star -s_j-a)}\Big)^2 .
\end{equation}
We are ready to analyze the numerator of $h'(a)$:
\begin{align*}
    &\sum_{k=1}^m \bar{A}_{k,m}''(a) \big(\sum_{k=1}^m \bar{A}_{k,m}(a) +\bar{B}_m\big) \\
    >&  \sum_{k=1}^m \bar{A}_{k,m}''(a) \sum_{k=1}^m \bar{A}_{k,m}(a) &\text{(using $\bar{B}_m>0$)}\\
    \geq &\bigg(\sum_{k=1}^m \bar{A}_{k,m}(a) \Big(\sum_{j=k}^m\frac{\lambda'(x^\star -s_j-a)}{\lambda(x^\star -s_j-a)}\Big)^2 \bigg) \bigg( \sum_{k=1}^m \bar{A}_{k,m}(a) \bigg) &\text{(using \eqref{eq:A_k_second_order})}\\
    \geq & \bigg(\sum_{k=1}^m \bar{A}_{k,m}(a) \sum_{j=k}^m\frac{\lambda'(x^\star -s_j-a)}{\lambda(x^\star -s_j-a)}\bigg)^2  &\text{(using Cauchy-Schwarz inequality)}\\
    = & \big(\sum_{k=1}^m \bar{A}_{k,m}'(a)\big)^2.
\end{align*}
Thus, the numerator of the derivative $h'(a)$ is positive. As the denominator is also positive, we conclude that $h(a)$ is increasing. 
\end{proof}

\subsection{Steady-state productivity characterization for two states (Proposition~\ref{prop:bdc_prod_n=2})}\label{app:proof_bdc_productivity_n=2}
\begin{proof}[Proof of Proposition~\ref{prop:bdc_prod_n=2}]
From Proposition~\ref{prop:bdc_expression_s3}, the steady-state productivity can be written as
$$\mathcal{P}(a)= \dfrac{\mu}{\mu+\lambda(e^\star(s_1,a))}\cdot p(s_1+e^\star(s_1,a)+a) + \dfrac{\lambda(e^\star(s_1,a))}{\mu+\lambda(e^\star(s_1,a))}\cdot p(s_2+e^\star(s_2,a)+a).$$
We start from the cases $m=2$ and $m=0$, where we want to show that the productivity (weakly) increases. Proposition~\ref{prop:effort_productivity_expressions} implies that
\begin{itemize}
    \item If $a \in {\cal I}_2$, the productivity at each state remains $p(x^\star)$ and thus $\mathcal{P}(a)=p(x^\star)$ is a constant.
    \item If $a \in {\cal I}_0$, the utility-maximizing effort is zero at each state and thus the productivity $\mathcal{P}(a)$ is increasing in $a$.
\end{itemize}
We then focus on intermediate cases where $a \in {\cal I}_1$, given ${\cal I}_1$ is non-empty.
The steady-state productivity is 
    \begin{align*}
        \mathcal{P}(a)= \dfrac{\mu}{\mu+\lambda(x^\star -s_1-a)}\cdot p(x^\star) + \dfrac{\lambda(x^\star -s_1-a)}{\mu+\lambda(x^\star -s_1-a)}\cdot p(s_2+a).
    \end{align*}
    Differentiating with respect to $a$ gives
    \begin{align*}
         \mathcal{P}'(a)&=\dfrac{\mu \lambda'(x^\star -s_1-a)}{(\mu+\lambda(x^\star -s_1-a))^2} \big(p(x^\star)-p(s_2+a)\big)+\dfrac{\lambda(x^\star -s_1-a)}{\mu+\lambda(x^\star -s_1-a)}p'(s_2+a)\\
        &=\dfrac{\big(\mu+\lambda(x^\star -s_1-a)\big) \lambda(x^\star -s_1-a)p'(s_2+a) -  \mu\lambda'(x^\star -s_1-a) \big(p(s_2+a)-p(x^\star)\big) }{(\mu+\lambda(x^\star -s_1-a))^2}.
    \end{align*}
    Notice that $\mu+\lambda(x^\star -s_1-a)$, $\lambda(x^\star -s_1-a)$, and $p'(s_2+a)$ are decreasing in AI assistance level $a$,  while $\lambda'(x^\star -s_1-a)$ and $p(s_2+a)-p(x^\star)$ are increasing in AI assistance level $a$, since $\lambda$ is concave and $p$ is increasing and concave. Thus, the numerator of the derivative $ \mathcal{P}'(a)$ is a decreasing function in $a$. Evaluating the derivative at the endpoints, we have
    \begin{align*}
         &  \mathcal{P}'_{+}(x^\star -s_2)=\dfrac{\big(\mu+\lambda(s_2-s_1)\big) \lambda(s_2-s_1)p'(x^\star)}{\big(\mu+\lambda(x^\star-s_1-a)\big)^2} > 0 , \\
         &  \mathcal{P}'_{-}(x^\star -s_1)=\dfrac{\big(\mu+\lambda(0)\big) \lambda(0)p'(x^\star +s_2-s_1) -  \mu\lambda'(0) \big(p(x^\star +s_2-s_1)-p(x^\star)\big) }{\big(\mu+\lambda(0)\big)^2} .
    \end{align*}
    The derivative at the left endpoint is strictly positive. The derivative at the right endpoint is negative if and only if
    \begin{align*}
        \big(\mu+\lambda(0)\big) \lambda(0)p'(x^\star +s_2-s_1) -  \mu\lambda'(0) \big(p(x^\star +s_2-s_1)-p(x^\star)\big) <0,
    \end{align*}
    or equivalently,
    \begin{align*}
        \lambda^2(0)p'(x^\star +s_2-s_1) < \mu \Big( \lambda'(0) \big(p(x^\star +s_2-s_1)-p(x^\star)\big)-\lambda(0)p'(x^\star +s_2-s_1) \Big).
    \end{align*}
    This is equivalent to
    \begin{align*}
        &\mu >\dfrac{\lambda^2(0)p'(x^\star +s_2-s_1)}{ \lambda'(0) \big(p(x^\star +s_2-s_1)-p(x^\star)\big)-\lambda(0)p'(x^\star +s_2-s_1)}, \quad \text{and}\\
        &\lambda'(0) \left(p(x^\star +s_2-s_1)-p(x^\star)\right)-\lambda(0)p'(x^\star +s_2-s_1) >0,
    \end{align*}
    which corresponds to $\mu>\bar{\mu}_2$ and $\Delta_1>0$, respectively.
    When these condition hold, there exists a unique $\tau$ such that $\mathcal{P}'(\tau)=0$, and the steady-state productivity $\mathcal{P}(a)$ is increasing on $(x^\star-s_2,\tau]$ and decreasing on $(\tau, x^\star-s_1]$. Otherwise, the steady-state productivity $\mathcal{P}(a)$ is always increasing.
\end{proof}

\subsection{Model of exogenous skill development (Section~\ref{sec:bdc_discussion})}\label{app:exogenous_transition}
To highlight the contrast in the model of skill development, we analyze an exogenous setting in which the transition rate depends only on the current skill level but not on the endogenous effort. 
Recall that the birth-death chain consists of $N\geq 2$ states $S=(s_1,s_2,\cdots,s_N)$, where $s_1 \leq s_2 \leq \cdots \leq s_N$. 
At each state $s_k$, the agent's effort level is given by 
$$e^\star(s_k,a)=\arg\max_{e\geq 0} \left(p(s_k+e+a) - \gamma e\right).$$
Skill transitions occur according to the following \emph{exogenous} transition rates between states:
$$\mathcal{K}_{s_k s_{k+1}}(a)= \lambda_k \quad\text{and}\quad \mathcal{K}_{s_k s_{k-1}}(a)= \mu,$$
where $\lambda_k, \mu > 0$ are upward and downward transition rates, respectively. Note that the transition rates $\lambda_k$ can depend on the current skill level $s_k$. 
Let $\pi_k$ be the steady-state probabilities of an agent being at skill state $s_k$. Under the skill dynamics with exogenous transition rates, the steady-state distribution $\boldsymbol{\pi} = (\pi_1,\pi_2, \cdots,\pi_N)$ is exogenous as follows (adopted from Lemma~\ref{lem:bdc_std_result})
\begin{equation*}
    \pi_k=\pi_1\prod_{i=1}^{k-1}\frac{\lambda_i}{\mu} \quad\text{and}\quad  \pi_1=\dfrac{1}{1+\sum_{k=2}^{N} \prod_{i=1}^{k-1}\frac{\lambda_i}{\mu}}.
\end{equation*}
In the special case where $\lambda_k=\lambda$ for all $k$, where $\lambda\neq \mu$, this reduces to $\pi_k=\frac{\frac{\lambda}{\mu}-1}{(\frac{\lambda}{\mu})^{N}-1} (\frac{\lambda}{\mu})^{k-1}$.

The steady-state effort and productivity are then given by
\begin{align*}
    & \mathcal{E}(a) := \sum_{k=1}^{N} \pi_k \cdot e^\star(s_k,a) \quad\text{and}\quad  \mathcal{P}(a) := \sum_{k=1}^{N} \pi_k \cdot p(s_k+e^\star(s_k,a)+ a).
\end{align*}
\begin{proposition}\label{prop:bdc_exogenous}
Consider any concave production function $p(x)$ and linear cost function $c(e)=\gamma e$. The agent's steady-state effort $\mathcal{E}(a)$ and productivity $\mathcal{P}(a)$ are given by:
\begin{align*}
    \mathcal{E}(a)=\sum_{i=1}^{N} \pi_i\cdot(x^\star-s_i-a)^+ \quad\text{and}\quad \mathcal{P}(a)=\sum_{i=1}^{N} \pi_i \cdot \max(p(x^\star),p(s_i+a)).
\end{align*}
\end{proposition}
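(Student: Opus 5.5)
The statement follows by substituting the closed-form per-state effort and productivity from Proposition~\ref{prop:effort_productivity_expressions} into the steady-state averages. The essential observation is that, in this exogenous setting, the steady-state distribution $\boldsymbol{\pi}$ does not depend on $a$ or on effort. Here the transition rates $\mathcal{K}_{s_k s_{k+1}}(a)=\lambda_k$ and $\mathcal{K}_{s_k s_{k-1}}(a)=\mu$ are constants. Hence Lemma~\ref{lem:bdc_std_result} gives the product-form distribution $\pi_k=\pi_1\prod_{i=1}^{k-1}\lambda_i/\mu$, which is fixed independently of the AI level.

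First, I will fix $a\ge 0$ and a state $s_k$. The agent's per-state optimization $\arg\max_{e\ge 0}\bigl(p(s_k+e+a)-\gamma e\bigr)$ is exactly the basic model with skill $s_k$. Proposition~\ref{prop:effort_productivity_expressions} therefore yields
\[
e^\star(s_k,a)=(x^\star-s_k-a)^+ \quad\text{and}\quad p(s_k+e^\star(s_k,a)+a)=\max\bigl(p(x^\star),p(s_k+a)\bigr).
\]

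Second, I will substitute these into the definitions $\mathcal{E}(a)=\sum_k\pi_k e^\star(s_k,a)$ and $\mathcal{P}(a)=\sum_k \pi_k\, p(s_k+e^\star(s_k,a)+a)$. Because the weights $\pi_k$ are $a$-independent, this substitution gives the claimed formulas directly.

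I do not expect a real obstacle. The only point needing care is confirming that the largest-maximizer convention in \eqref{eq:basic_effort_def} matches the one used in Proposition~\ref{prop:effort_productivity_expressions}, and it does, because the same definition of $x^\star$ is used. As a follow-up, the formulas show that $\mathcal{E}$ is a nonnegative combination of nonincreasing functions of $a$, so it is nonincreasing. Likewise, $\mathcal{P}$ is a nonnegative combination of nondecreasing functions of $a$, so it is nondecreasing. This recovers the analogue of Corollary~\ref{cor:effort_productivity} and shows the paradox disappears.
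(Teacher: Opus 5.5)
Your proposal is correct and follows the paper's proof: it applies Proposition~\ref{prop:effort_productivity_expressions} state by state to get $e^\star(s_i,a)=(x^\star-s_i-a)^+$ and $p^\star(s_i,a)=\max(p(x^\star),p(s_i+a))$, then averages against the exogenous weights $\pi_i$, which do not depend on $a$. Your closing monotonicity remark is the same corollary the paper states immediately after this proposition.
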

\begin{proof}[Proof of Proposition~\ref{prop:bdc_exogenous}]   Proposition~\ref{prop:effort_productivity_expressions} implies that
    $e^\star(s_i,a)=(x^\star-s_i-a)^+ $ and $p^\star(s_i,a)=\max (p(x^\star), p(s_i+a) )$. Then the steady-state effort and productivity follow by definition.
\end{proof}
As a direct corollary, we have the following monotonicity result.
\begin{corollary}
The steady-state effort $\mathcal{E}(a)$ (weakly) decreases in $a$ and eventually declines to $0$.
The productivity $\mathcal{P}(a)$ is initially constant and then becomes (weakly) increasing in $a$.
\end{corollary}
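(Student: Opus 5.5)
The statement is the corollary for the exogenous-transition model: $\mathcal{E}(a)$ is weakly decreasing in $a$ and eventually zero, while $\mathcal{P}(a)$ is constant at first and then weakly increasing. I would read both facts directly off the closed forms in Proposition~\ref{prop:bdc_exogenous}. The key point is that the weights $\pi_i$ are exogenous: they depend only on $\lambda_k$ and $\mu$, not on $a$. So $\mathcal{E}$ and $\mathcal{P}$ are fixed nonnegative combinations, with weights summing to one, of the per-state functions $a\mapsto (x^\star-s_i-a)^+$ and $a\mapsto \max(p(x^\star),p(s_i+a))$. Monotonicity then reduces to monotonicity of each summand, which is exactly Corollary~\ref{cor:effort_productivity} applied state by state.

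For effort, each map $a\mapsto (x^\star-s_i-a)^+$ is weakly decreasing and vanishes for $a\ge x^\star-s_i$. Hence $\mathcal{E}$ is weakly decreasing. Moreover, $\mathcal{E}(a)=0$ for all $a\ge (x^\star-s_1)^+$, because $s_1\le s_i$ for every $i$, so $s_1$ is the last state to reach zero effort.

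For productivity, each term $\max(p(x^\star),p(s_i+a))$ is weakly increasing in $a$, since $p$ is increasing. A nonnegative combination of weakly increasing functions is weakly increasing, so $\mathcal{P}$ is weakly increasing on all of $[0,\infty)$. For the initial constant region, I would observe that for $a\le (x^\star-s_N)^+$ every state satisfies $s_i+a\le x^\star$. Monotonicity of $p$ then gives each term equal to $p(x^\star)$, so $\mathcal{P}(a)=p(x^\star)$ there.

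No step is a genuine obstacle. The only care needed is to read ``initially constant'' correctly when $s_N\ge x^\star$: then the constant region $[0,(x^\star-s_N)^+]=\{0\}$ degenerates to a single point, and the statement reduces to weak monotonicity. The contrast with Theorem~\ref{thm:bdc_productivity} is that there the weights $\pi_k(a)$ move with $a$, as in Proposition~\ref{prop:bdc_skill_dist}, whereas here they do not.
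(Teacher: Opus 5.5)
Your proposal is correct and is essentially the paper's argument. The paper observes that with exogenous rates the weights $\pi_i$ do not depend on $a$, so $\mathcal{E}$ and $\mathcal{P}$ are fixed convex combinations of the per-state maps from Proposition~\ref{prop:effort_productivity_expressions}, and the basic-model monotonicity of Corollary~\ref{cor:effort_productivity} then passes through term by term. One small fix: when $s_N > x^\star$, the claim that every state has $s_i + a \le x^\star$ (and hence $\mathcal{P}(a)=p(x^\star)$) for $a \le (x^\star - s_N)^+$ fails at $a=0$, so state the constant region as $[0, x^\star - s_N]$ when $x^\star \ge s_N$ and as trivially $\{0\}$ otherwise, which your degenerate-case remark already effectively does.
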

With exogenous transition rates, the steady-state can be interpreted as a convex combination of multiple states with exogenous weights. The monotonicity result in the basic model in Section~\ref{sec:basic} directly applies here: the steady-state effort is decreasing, yet the productivity gain remains.  

\section{Supplementary materials for Section~\ref{sec:AI_unreliability}}
\subsection{The derivative of productivity (Lemma~\ref{lem:unreliable_productivity_derivative_origin_sign})}\label{app:proof_sign_lemma}
Recall that Lemma~\ref{lem:unreliable_productivity_derivative_origin_sign} characterizes the sign of the derivative under IARA and DARA conditions. In this section, we provide a general result that holds for a relaxed version of the IARA and DARA conditions.

\paragraph{Relaxation of risk aversion conditions.} We relax the specification of IARA to allow $A(x)$ to be weakly increasing. The standard definition of IARA is restricted to strictly concave functions, while we now allow for a relaxation in which $p'(x)=p''(x)=0$ whenever $x\geq \kappa_p$ for some threshold $\kappa_p>0$; here, the condition $p'(x)=p''(x)=0$ violates strict concavity but it can be viewed as the limiting case of an infinite absolute risk aversion. Similarly, we relax the specification of DARA to allow $A(x)$ to be weakly decreasing, which includes the case of constant absolute risk aversion (CARA). We give the formal definition as follows.
\begin{definition}\label{def:relaxed_ara}
    Given a twice differentiable production function $p$, for an input level $x$, its absolute risk aversion (ARA) is $A(x):=-\frac{p''(x)}{p'(x)}$. 
    We say that $p$ satisfies 
    \begin{enumerate}
    \item Relaxed increasing absolute risk aversion, or relaxed IARA for short: if $A(x)$ is weakly increasing, and points where $p'(x)=p''(x)=0$ are interpreted as having infinite absolute risk aversion.
    \item Relaxed decreasing absolute risk aversion, or relaxed DARA for short: if $A(x)$ is weakly decreasing and $A(x)>0$ holds for all $x$.
    \end{enumerate}
\end{definition}
We are ready to state the following lemma for the relaxed IARA and DARA. The proof requires dealing with boundary conditions more carefully compared to that of the original IARA and DARA versions.
\begin{lemma}[Generalization of Lemma~\ref{lem:unreliable_productivity_derivative_origin_sign} for relaxed IARA/DARA]\label{lem:unreliable_productivity_derivative}
    Consider any concave production function $p(x)$ that satisfies either the relaxed IARA or DARA condition, and linear cost function $c(e)=\gamma e$. Fix $s< x^\star$, $\bar{a}>0$, and $0<q<1$, and write $e^\star:=e^\star (s,\bar{a},q)$. Define $D(s,\bar{a},q):=q \cdot p''(s+e^\star +\bar{a}) +(1-q) \cdot p''(s+e^\star)$. Thus, the derivative of productivity is characterized by the following cases:
    \begin{enumerate}
    \item If $q p'(s+\bar{a})+ (1-q) p'(s) > \gamma$, then the derivative is given by 
    \begin{equation*}
        \begin{cases}
            \frac{\partial p^\star}{\partial \bar{a}} (s,\bar{a},q)=\dfrac{q(1-q) p'(s+e^\star) p'(s+e^\star +\bar{a})}{D(s,\bar{a},q)} \big(A(s+e^\star +\bar{a}) - A(s+e^\star)\big), & \text{if } D(s,\bar{a},q)\neq 0,\\
            \frac{\partial p^\star}{\partial \bar{a}} (s,\bar{a},q)\leq 0, & \text{if } D(s,\bar{a},q)= 0.
        \end{cases}
    \end{equation*}
    \item If $q p'(s+\bar{a})+ (1-q) p'(s) < \gamma$, then the derivative is given by 
    $$\frac{\partial p^\star}{\partial \bar{a}}(s,\bar{a},q) = q \cdot p'(s+\bar{a}).$$
    \end{enumerate}
\end{lemma}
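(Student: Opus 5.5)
The argument runs through the first-order condition for the ex-ante effort. Define $g(e):=q\,p'(s+e+\bar a)+(1-q)\,p'(s+e)$, the marginal expected product of effort; it is nonincreasing in $e$ by concavity of $p$. Since $u$ is concave in $e$, $e^\star$ is the largest maximizer, characterized by $g(e^\star)\ge\gamma$ (left derivative) and $g(e)<\gamma$ for $e>e^\star$. We split on the sign of $g(0)-\gamma=q p'(s+\bar a)+(1-q)p'(s)-\gamma$.

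\emph{Case 2} ($g(0)<\gamma$). By strict inequality and continuity of $p'$ in $\bar a$, the corner solution $e^\star=0$ persists for all nearby $\bar a$. Then $p^\star(s,\bar a,q)=q\,p(s+\bar a)+(1-q)p(s)$ locally, and differentiating gives $q\,p'(s+\bar a)$.

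\emph{Case 1} ($g(0)>\gamma$). Here the optimum is interior, $g(e^\star)=\gamma$, and this persists locally in $\bar a$. Suppose first that $D:=q p''(s+e^\star+\bar a)+(1-q)p''(s+e^\star)=\partial_e g(e^\star)\neq 0$. The implicit function theorem applied to $g(e^\star(\bar a);\bar a)=\gamma$ gives
\[
\frac{de^\star}{d\bar a}=-\frac{q\,p''(s+e^\star+\bar a)}{D}.
\]
Differentiating $p^\star=q\,p(s+e^\star+\bar a)+(1-q)p(s+e^\star)$ yields
\[
\frac{\partial p^\star}{\partial \bar a}=q\,p'(x_1)+g(e^\star)\,\frac{de^\star}{d\bar a}=q\,p'(x_1)-\frac{\gamma\,q\,p''(x_1)}{D},
\]
where $x_1=s+e^\star+\bar a$ and $x_0=s+e^\star$. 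Substitute $\gamma=q p'(x_1)+(1-q)p'(x_0)$ and put everything over the common denominator $D$. The numerator becomes
\[
q\big[p'(x_1)\big(qp''(x_1)+(1-q)p''(x_0)\big)-\big(qp'(x_1)+(1-q)p'(x_0)\big)p''(x_1)\big]=q(1-q)\big[p'(x_1)p''(x_0)-p'(x_0)p''(x_1)\big].
\]
Factoring out $p'(x_0)p'(x_1)$ gives $q(1-q)p'(x_0)p'(x_1)\big(A(x_1)-A(x_0)\big)$, which is the claimed formula. This algebra is routine.

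\textbf{Main obstacle: the degenerate subcase $D=0$ in Case 1.} Since $g(e^\star)=\gamma>0$, at least one of $p'(x_0)$, $p'(x_1)$ is positive, and both second derivatives are $\le 0$, so $D=0$ forces $p''(x_0)=p''(x_1)=0$.

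\textbf{Under relaxed IARA} (the regime where $D=0$ is allowed), a point with zero curvature and positive slope has $A=0$. Weak monotonicity of $A$ then gives $A\equiv 0$ on $[x_0,x_1]$, except where the capped region $p'=p''=0$ begins. We will need $\partial p^\star/\partial\bar a\le 0$ there, working with one-sided derivatives. Two configurations are possible:
\begin{itemize}
\item If $p'(x_1)=0$ (the capped region), then $e^\star$ is determined by the $(1-q)p'(x_0)$ term. Raising $\bar a$ keeps $x_1$ in the flat region, so the $q\,p(x_1)$ term does not change, while $e^\star$ weakly decreases. Hence $p^\star$ weakly decreases.
\item If $p$ is locally linear with slope $\gamma$ on a segment, then $e^\star$, as the largest maximizer, slides down one-for-one. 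With the total input in the flat-slope region, the derivative is $q p'(x_1)-q p'(x_1)\cdot\gamma/\gamma$-type cancellation. We will verify directly that the one-sided derivative is $\le 0$ by computing $e^\star(\bar a+h)$ explicitly from the largest-maximizer convention.
\end{itemize}

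\textbf{Under relaxed DARA}, $A>0$ everywhere, so $p''<0$ wherever $p'>0$, and $D<0$ rules out $D=0$. The relaxed DARA statement is therefore covered by the nondegenerate formula.

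This boundary bookkeeping, together with justifying local persistence of the interior or corner regime when $p$ is only weakly concave, is where the real care is needed. We will handle it by comparing $e^\star(\bar a\pm h)$ directly rather than relying on the implicit function theorem.
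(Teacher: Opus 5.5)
Your proposal is correct and follows the paper's proof essentially step for step: the corner case $q p'(s+\bar a)+(1-q)p'(s)<\gamma$ with derivative $q\,p'(s+\bar a)$, implicit differentiation of the first-order condition with the same common-denominator algebra when $D\neq 0$, and the same two locally linear configurations (capped region, or slope-$\gamma$ segment) when $D=0$. Your side remarks tighten points the paper passes over: relaxed DARA forces $D<0$, and local persistence of the regime and differentiability of $e^\star$ need justification. Moreover, the slope-$\gamma$ configuration you leave unfinished is vacuous in Case 1: $A(x_1)=0$ with weakly increasing $A$ forces $p'\equiv\gamma$ on $[0,x_1]$, and hence $q p'(s+\bar a)+(1-q)p'(s)=\gamma$, contradicting the Case 1 hypothesis.
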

\begin{proof}[Proof of Lemma~\ref{lem:unreliable_productivity_derivative}]
Recall that the expected utility is given by $u(e,s,\bar{a},q)=q\cdot p(s+e+\bar{a}) + (1-q)\cdot p(s+e) - \gamma e$. For given AI proficiency $\bar{a}$, reliability $q$, and skill level $s$, the concavity of $p$ implies that the utility is increasing in $e$ if $q p'(s+e+\bar{a})+ (1-q) p'(s+e) > \gamma $ and decreasing in $e$ otherwise. The following case distinction characterizes the agent's effort level:

    \textbf{Case 1}: $q p'(s+\bar{a})+ (1-q) p'(s) > \gamma $. 
    The agent's utility-maximizing effort $e^\star (s,\bar{a},q)$ is a non-negative value characterized by the first-order condition
    \begin{equation}\label{eq:exante_foc}
        q \cdot p'(s+e^\star (s,\bar{a},q)+\bar{a}) + (1-q) \cdot p'(s+e^\star (s,\bar{a},q)) = \gamma .
    \end{equation}
    Note that the derivative of $p$ is continuous, ensuring that $e^\star(s,\bar{a},q)$ is well-defined. Differentiating both sides of \eqref{eq:exante_foc} with respect to $\bar{a}$, we obtain that
    \begin{align*}
        & q \cdot p''(s+e^\star +\bar{a}) \cdot \left( \frac{\partial e^\star}{\partial \bar{a}} + 1 \right)+ (1-q) \cdot p''(s+e^\star ) \cdot \frac{\partial e^\star}{\partial \bar{a}} = 0.
    \end{align*}
    \textbf{Case 1.1:} $D(s,\bar{a},q) \neq 0$. In this case, at least one of $p''(s+e^\star +\bar{a})$ and $p''(s+e^\star)$ is nonzero, then the derivative of $e^\star$ versus $\bar{a}$ is given by 
    $$\dfrac{\partial e^\star}{\partial \bar{a}} = -\dfrac{q \cdot p''(s+e^\star +\bar{a})}{ q \cdot p''(s+e^\star +\bar{a}) + (1-q) \cdot p''(s+e^\star)}.$$
    We now differentiate $p^\star$ with respect to $\bar{a}$:
    \begin{align*}
        \frac{\partial p^\star}{\partial \bar{a}}(s,\bar{a},q) & = q \cdot p'(s+e^\star + \bar{a}) \cdot \left( \frac{\partial e^\star}{\partial \bar{a}} + 1 \right)+ (1-q) \cdot p'(s+e^\star ) \cdot \frac{\partial e^\star}{\partial \bar{a}}.
    \end{align*}
    Substituting the expression of $\dfrac{\partial e^\star}{\partial \bar{a}}$, we obtain
    \begin{align*}
    &\frac{\partial p^\star}{\partial \bar{a}}(s,\bar{a},q)\\ 
    = & - \big(q \cdot p'(s+e^\star + \bar{a}) + (1-q) \cdot p'(s+e^\star ) \big)\frac{q \cdot p''(s+e^\star +\bar{a})}{q \cdot p''(s+e^\star +\bar{a}) +(1-q) \cdot p''(s+e^\star )}+ q \cdot p'(s+e^\star +\bar{a})\\
    = & \frac{q(1-q) \cdot \big(p'(s+e^\star +\bar{a})p''(s+e^\star ) - p''(s+e^\star +\bar{a}) p'(s+e^\star )\big)}{q \cdot p''(s+e^\star +\bar{a}) +(1-q) \cdot p''(s+e^\star )}\\
    =& \frac{q(1-q) p'(s+e^\star ) p'(s+e^\star +\bar{a})}{q \cdot p''(s+e^\star +\bar{a}) +(1-q) \cdot p''(s+e^\star )} \Big(-\frac{p''(s+e^\star +\bar{a})}{p'(s+e^\star +\bar{a})} + \frac{p''(s+e^\star )}{p'(s+e^\star )}\Big)\\
    = &\frac{q(1-q) p'(s+e^\star) p'(s+e^\star +\bar{a})}{q \cdot p''(s+e^\star +\bar{a}) +(1-q) \cdot p''(s+e^\star )} \big(A(s+e^\star +\bar{a}) - A(s+e^\star)\big).
    \end{align*}
    \textbf{Case 1.2}: $D(s,\bar{a},q) = 0$. In this case, both $p''(s+e^\star +\bar{a})$ and $p''(s+e^\star)$ are zero, then the production function is locally linear around $s+e^\star +\bar{a}$ and $s+e^\star$. 
    By the definition of relaxed IARA (DARA), the production function can have at most one positively sloped linear segment and at most one constant segment. Thus, it holds that either $0=p'(s+e^\star +\bar{a}) < p'(s+e^\star)=\gamma/(1-q)$, or $p'(s+e^\star +\bar{a}) = p'(s+e^\star)=\gamma$. 
    \begin{itemize}
        \item $0=p'(s+e^\star +\bar{a}) < p'(s+e^\star)=\gamma/(1-q)$: the local linearity implies that $\frac{\partial e^\star}{\partial \bar{a}}(s,\bar{a},q) = 0$ and $\frac{\partial p^\star}{\partial \bar{a}}(s,\bar{a},q) = 0$.
        \item $p'(s+e^\star +\bar{a}) = p'(s+e^\star)=\gamma$: the local linearity implies that $\frac{\partial e^\star}{\partial \bar{a}}(s,\bar{a},q) = -1$ and $\frac{\partial p^\star}{\partial \bar{a}}(s,\bar{a},q) < 0$.
    \end{itemize}
    Taken together, the derivative is non-positive, i.e., $\frac{\partial p^\star}{\partial \bar{a}}(s,\bar{a},q) \leq 0$. 
    \noindent\textbf{Case 2}: $q p'(s+\bar{a})+ (1-q) p'(s) < \gamma $. The agent's utility is decreasing in $e$ for all non-negative effort levels. Thus, the agent's utility-maximizing effort is $e^\star(s,\bar{a},q)=0$. The corresponding productivity
    $p^\star(s,\bar{a},q)=q \cdot p(s+\bar{a})+(1-q)\cdot p(s)$ is (weakly) increasing in $\bar{a}$.
    The derivative is given by 
    $$\frac{\partial p^\star}{\partial \bar{a}}(s,\bar{a},q) = q \cdot p'(s+\bar{a}).$$
    The above case distinction concludes the proof.
\end{proof}
Equipped with Lemma~\ref{lem:unreliable_productivity_derivative}, Lemma~\ref{lem:unreliable_productivity_derivative_origin_sign} follows directly as a special case.
\begin{proof}[Proof of Lemma~\ref{lem:unreliable_productivity_derivative_origin_sign}]
Our goal is to study the sign of the productivity derivative for IARA and DARA production functions. Recall that $D(s,\bar{a},q):=q \cdot p''(s+e^\star +\bar{a}) +(1-q) \cdot p''(s+e^\star)$. By the definition of IARA and DARA, the second derivative of $p$ is negative,  which excludes the cases where $D(s,\bar{a},q)=0$. Thus, Lemma~\ref{lem:unreliable_productivity_derivative} gives the following characterization:
\begin{enumerate}
    \item If $q p'(s+\bar{a})+ (1-q) p'(s) \geq \gamma$, then the derivative is given by $$\frac{\partial p^\star}{\partial \bar{a}}(s,\bar{a},q) =\frac{q(1-q) p'(s+e^\star) p'(s+e^\star +\bar{a})}{q \cdot p''(s+e^\star +\bar{a}) +(1-q) \cdot p''(s+e^\star)} \big(A(s+e^\star +\bar{a}) - A(s+e^\star)\big).$$
    As $p$ is a concave function, the sign of the derivative is given by
    $$\operatorname{sgn} \Big(\frac{\partial p^\star}{\partial \bar{a}} (s,\bar{a},q)\Big)=\operatorname{sgn}\Big(-\big(A(s+e^\star +\bar{a}) - A(s+e^\star)\big)\Big).$$
    \item Otherwise, the sign of the derivative is given by $\frac{\partial p^\star}{\partial \bar{a}}(s,\bar{a},q) = q \cdot p'(s+\bar{a})$. Thus, the sign is given by 
    $$\operatorname{sgn}\Big(\frac{\partial p^\star}{\partial \bar{a}} (s,\bar{a},q)\Big)=1.$$
    \end{enumerate} 
    The above case distinction concludes the proof.
\end{proof}

\subsection{Characterization for linear production functions (Lemma~\ref{lem:unreliable_linear})}\label{app:proof_unreliable_linear}
\begin{proof}[Proof of Lemma~\ref{lem:unreliable_linear}]
Under the piecewise linear production function, the utility function is given by 
\begin{align*}
    u(e,s,\bar{a},q) &=q\cdot p(s+e+\bar{a}) + (1-q)\cdot p(s+e) -\gamma e\\
    & = q\cdot \min(1, \beta (s+e+\bar{a})) + (1-q)\cdot \min(1,\beta (s+e)) -\gamma e.
\end{align*}
We consider the following cases based on the range of $s+e$ and $s+e+\bar{a}$:
\begin{equation*}  u(e,s,\bar{a},q)=\begin{cases}
        q\cdot \beta (s+e+\bar{a}) + (1-q)\beta (s+e) -\gamma e = (\beta -\gamma) e+ \beta s +q \beta \bar{a}, & \text{if } s+e+\bar{a}\leq \frac{1}{\beta},\\
        q\cdot  \beta + (1-q)\beta (s+e) -\gamma e = ((1-q)\beta -\gamma) e+ (1-q)\beta s +q \beta , & \text{if } s+e \leq \frac{1}{\beta} < s+e+\bar{a},\\
        1 -\gamma e, & \text{if } s+e > \frac{1}{\beta}.
    \end{cases}
\end{equation*}
Thus, depending on the value of  $\beta,\gamma$, and $q$, we have the following case distinction:
\begin{enumerate}
    \item If $q\leq \frac{\beta -\gamma}{\beta}$, the agent's utility-maximizing effort is $e^\star (s,\bar{a},q)= (\frac{1}{\beta}-s)^+$. The corresponding productivity is given by $p^\star (s,\bar{a},q)=1$.
    \item If $q> \frac{\beta -\gamma}{\beta}$, the agent's utility-maximizing effort is $e^\star (s,\bar{a},q)= (\frac{1}{\beta}-s-\bar{a})^+$. The corresponding productivity is given by
\begin{align*}
    p^\star (s,\bar{a},q) &= q +(1-q)\cdot \min(1,\beta (s+(\frac{1}{\beta}-s-\bar{a})^+))=q + (1-q)\cdot \max\left( \min(1,\beta s),  1-\beta \bar{a}\right).
\end{align*}
\end{enumerate}
\end{proof}

\subsection{Equivalent productivity result for relaxed conditions (Remark~\ref{rmk:relaxed_IARA})}\label{app:proof_unreliable_relaxation}
We establish a version of Theorem~\ref{thm:unreliable_concave3} under the relaxed condition (Definition~\ref{def:relaxed_ara}). Specifically, we provide an extension of the productivity characterization for the relaxed IARA and DARA conditions.
\begin{app_theorem}\label{thm:unreliable_relaxation}
    Consider any concave production function $p(x)$ and linear cost function $c(e)=\gamma e$. Suppose $s< x^\star$ and $0<q<1$. 
    \begin{enumerate}
        \item If $p$ satisfies relaxed IARA, productivity $p^\star(s, \bar{a}, q)$ is (weakly) decreasing then possibly increasing in $\bar{a}$.
        \item If $p$ satisfies relaxed DARA, productivity $p^\star(s, \bar{a}, q)$ is (weakly) increasing in $\bar{a}$. 
    \end{enumerate}
\end{app_theorem}

\begin{proof}[Proof of Theorem~\ref{thm:unreliable_relaxation}]
Since $p$ is concave and $s<x^\star$, there exists a threshold $\tau\in (0, \infty]$ such that the condition $q p'(s+\bar{a})+ (1-q) p'(s) > \gamma$ holds if and only if $\bar{a} < \tau$. Equivalently, 
\begin{equation*}
    \tau:=\sup\{ z\geq 0: q p'(s+z)+ (1-q) p'(s) \geq  \gamma \}.
\end{equation*}
\begin{enumerate}
\item If $p$ satisfies relaxed IARA, by the relaxed IARA condition, we have $A(s+e^\star +\bar{a})\geq A(s+e^\star)$. Lemma~\ref{lem:unreliable_productivity_derivative} implies that the derivative $\frac{\partial p^\star }{\partial \bar{a}} \leq 0$ when $q p'(s+\bar{a})+ (1-q) p'(s) \geq \gamma$ (i.e., $\bar{a} < \tau$), and $\frac{\partial p^\star}{\partial \bar{a}} > 0$ when $q p'(s+\bar{a})+ (1-q) p'(s) < \gamma$ (i.e., $\bar{a} > \tau$). Thus, the productivity is decreasing-then-increasing in $\bar{a}$, with the increasing region degenerate when $\tau=\infty$. 
\item If $p$ satisfies relaxed DARA, Lemma~\ref{lem:unreliable_productivity_derivative} implies that the derivative is always positive. Thus, the productivity $p^\star(s, \bar{a}, q)$ is increasing in $\bar{a}$.
\end{enumerate}
Combining both cases, we complete the proof.
\end{proof}

\subsection{Equivalent ratio result for relaxed IARA (Remark~\ref{rmk:relaxed_IARA})}\label{app:exante_linear_bad}
We establish a version of Proposition~\ref{prop:exante_linear_bad} under the relaxed IARA condition (Definition~\ref{def:relaxed_ara}). Specifically, we provide a similar arbitrarily small ratio that holds with a slight perturbation of the piecewise linear instance that also satisfies the relaxed IARA condition.
\begin{proposition}\label{prop:exante_relax_bad}
    For any $\varepsilon>0$, there exists an instance $p, \gamma, s, q$, where $p$ satisfies the relaxed IARA condition, such that the ratio of productivity at a higher AI assistance level to that at a lower assistance level satisfies $$\inf_{a_\ell <a_h}  \dfrac{p^\star(s,a_h, q)}{p^\star(s, a_\ell , q)} \leq \varepsilon.$$
\end{proposition}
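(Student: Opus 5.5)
We will build a smooth, relaxed-IARA perturbation of the piecewise linear instance from the proof of Proposition~\ref{prop:exante_linear_bad} and pass to a limit. Take $s=0$, $\beta>\gamma$ with $\frac{\beta-\gamma}{\beta}$ small, and $q>\frac{\beta-\gamma}{\beta}$ with $q\le\varepsilon/2$. For $\delta>0$ small, let $p_\delta$ be concave, twice differentiable, and increasing, with $p_\delta(x)=1$ for $x\ge \kappa_\delta:=\frac1\beta+\delta$. Choose $p_\delta$ on $[0,\kappa_\delta]$ to have nonincreasing positive slope, starting at $\beta$ and going to $0$ at $\kappa_\delta$, with ARA $A=-p_\delta''/p_\delta'$ weakly increasing. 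A concrete choice is the truncated-quadratic-type smoothing: slope $\beta$ on $[0,\frac1\beta-\delta]$ (where $A=0$), then slope decreasing linearly to $0$ on $[\frac1\beta-\delta,\kappa_\delta]$. On the ramp the slope is affine and decreasing to zero, so $A(x)=1/(\kappa_\delta-x)$, which is increasing. For $x\ge\kappa_\delta$ we have $p'=p''=0$, which counts as infinite ARA. Hence $A$ is weakly increasing everywhere and $p_\delta$ satisfies relaxed IARA. The only defect is that $p_\delta''$ jumps at $\frac1\beta-\delta$. I would fix this by using a quadratic slope ramp, or by noting that Definition~\ref{def:relaxed_ara} only needs $p''$ wherever it is used. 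We also adjust the level so that $p_\delta$ is close to $\min(1,\beta x)$ uniformly, with error $O(\delta)$.

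Next we compute the two productivities, taking $a_\ell=0$ and $a_h=\frac1\beta$ (or $a_h=\kappa_\delta$).
\begin{itemize}
\item At $a_\ell=0$ the AI contributes nothing, so the agent solves the deterministic problem. The effort brings $x$ to $x^\star(p_\delta,\gamma)$, the point on the ramp where the slope equals $\gamma$. Since the ramp has width $2\delta$, this point is within $2\delta$ of $\frac1\beta$, and $p^\star(0,0,q)=p_\delta(x^\star)\ge 1-O(\delta)$.
\item At $a_h=\kappa_\delta$ the marginal utility of effort at $e$ is $q\,p_\delta'(e+\kappa_\delta)+(1-q)p_\delta'(e)-\gamma=(1-q)p_\delta'(e)-\gamma\le(1-q)\beta-\gamma<0$, using $q>\frac{\beta-\gamma}{\beta}$. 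By concavity the optimal effort is $e^\star=0$, giving $p^\star=q\cdot 1+(1-q)p_\delta(0)=q$.
\end{itemize}
The ratio is therefore at most $q/(1-O(\delta))\le\varepsilon$ once $\delta$ is small and $q\le\varepsilon/2$. The definition of $x^\star$ requires $\limsup p/x<\gamma$, which holds because $p_\delta$ is bounded.

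The main obstacle is ensuring that $p_\delta$ simultaneously (i) is twice differentiable, (ii) has globally weakly increasing ARA including across the junctions, and (iii) stays uniformly close to the kinked instance. The affine-slope ramp gives (ii) and (iii) cleanly. For (i), I would first try the ramp $p_\delta'(x)=\beta\big(1-((x-c)/(2\delta))^2\big)$ for $c:=\frac1\beta-\delta\le x\le\kappa_\delta$, that is, over the full width $2\delta$. This makes $p''$ continuous at the left junction. We then check that $A(x)=\frac{2(x-c)}{(2\delta)^2-(x-c)^2}$ is increasing on the ramp, which holds since the numerator increases and the denominator decreases. At the right end $p''$ jumps from a negative value to $0$, which is acceptable because points with $p'=0$ are exactly the infinite-ARA points permitted by the relaxed definition.
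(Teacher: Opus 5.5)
Your proposal is correct and follows the paper's route: with the natural piecewise-quadratic choice of $h$, the paper's perturbation $\tilde p=\min(1,\beta x)-h$ has exactly your affine slope ramp from $\beta$ down to $0$ on $[\frac{1}{\beta}-\delta,\frac{1}{\beta}+\delta]$. Like your affine version, it is only $C^1$ at the two junctions, so the twice-differentiability caveat you raise applies equally to the paper. The paper also uses the same $s=0$, $a_\ell=0$, and $q>\frac{\beta-\gamma}{\beta}$. The one difference is that you take $a_h=\kappa_\delta$ instead of the paper's $a_h=\frac{1}{\beta}$, which saturates the AI-success branch so the marginal utility of effort is at most $(1-q)\beta-\gamma<0$ and you get $e^\star=0$ and $p^\star=q$ exactly; the paper instead needs a region-by-region utility comparison to confine the optimal effort to $[0,\delta]$ and only obtains $p^\star\le q+(1-q)\beta\delta$.
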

\begin{proof}[Proof of Proposition~\ref{prop:exante_relax_bad}]
    We start from the piecewise linear production function $p(x)=\min(1,\beta x)$ and $c(x)=\gamma x$ where $\beta >\gamma$, and construct a small perturbation to satisfy relaxed IARA. Note that $p(x)$ is not differentiable at $x=1/\beta$. We construct a perturbation $\Tilde{p}(x)$ of $p(x)$ such that $\Tilde{p}(x)$ is concave, differentiable, relaxed IARA, while maintaining the desired property. Choose a sufficiently small $\delta<\frac{1}{2 \beta}$ and define $\Tilde{p}(x)=p(x)-h(x)$, where $h(x)$ satisfies:
    \begin{itemize}
    \item $h(x) = 0$ for $|x-1/\beta | \geq \delta$;
    \item $ 0 \leq h(x) \leq \delta \text{ for } x \in [1/\beta -\delta, 1/\beta +\delta]$;
    \item $h(x)$ is increasing and convex on $[\frac{1}{\beta}-\delta,\frac{1}{\beta}]$, and decreasing and convex on $[\frac{1}{\beta},\frac{1}{\beta}+\delta]$.
    \end{itemize}
    Specifically, by choosing $h(x)$ to be a quadratic function on $[1/\beta -\delta, 1/\beta +\delta]$, the resulting function $\Tilde{p}(x)$ is concave, differentiable, and satisfies the relaxed IARA condition.
    
    We then consider the utility-maximizing effort under perturbation for low and high proficiency, which we select shortly.
    The utility function under perturbation, $\Tilde{u}(e,s,\bar{a},q)$, is given by
     \begin{align*}
        &\Tilde{u}(e,s,\bar{a},q)
        = q \cdot \Tilde{p}(s+e+\bar{a})+ (1-q) \cdot \Tilde{p}(s+e) -\gamma e\\
        =& q \cdot \min\left(1,\beta (s+e+\bar{a})\right) + (1-q) \min\left(1,\beta (s+e)\right) -q h(s+e+\bar{a})-(1-q) h(s+e) -\gamma e.
    \end{align*}

    First, set skill level $s=0$ and low proficiency $a_\ell =0$, the corresponding utility becomes
    \begin{align*}
        \Tilde{u}(e, s, a_\ell , q)= \Tilde{u}(e,0,0,q)  & =\min \left(1,\beta e\right) - h(e)-\gamma e.
    \end{align*}
    By construction, the utility-maximizing effort is within $[\frac{1}{\beta}-\delta,\frac{1}{\beta}+\delta]$.

    Second, set skill level $s=0$ and high proficiency  $a_h=\frac{1}{\beta}$, the corresponding utility becomes
    \begin{align*}
        \Tilde{u}(e,s,a_h,q)&=\Tilde{u}(e,0,\frac{1}{\beta},q)= q  + (1-q) \min \left(1,\beta e\right) -q h(e+\frac{1}{\beta})-(1-q) h(e) -\gamma e.\\
        &=
        \begin{cases}
            - \big(\gamma -(1-q) \beta \big) e+q-q h(e+\frac{1}{\beta}), & \text{if } e\leq \delta,\\
            - \big(\gamma -(1-q) \beta \big) e+q , & \text{if } \delta <e\leq \frac{1}{\beta}-\delta,\\
            - \big(\gamma -(1-q) \beta \big) e+q-(1-q) h(e) , & \text{if }  \frac{1}{\beta}-\delta<e\leq \frac{1}{\beta},\\
            -\gamma e + 1 -(1-q) h(e) , & \text{otherwise}.
        \end{cases}
    \end{align*}
    Based on the case distinction in the last equation, we have the following analysis.
    \begin{itemize}
    \item $e\leq \delta$. In this region, the utility is maximized at $e=0$, which is given by $q-q h(\frac{1}{\beta})\geq q-q\delta$.
    \item $\delta <e\leq \frac{1}{\beta}-\delta$. The utility is decreasing in $e$. Then the utility-maximizing effort in this region is $e=\delta$.
    \item  $\frac{1}{\beta}-\delta<e\leq \frac{1}{\beta}$. In this region, the utility is upper bounded by
    $q - \big(\gamma -(1-q) \beta \big) ( \frac{1}{\beta}-\delta)$.
    \item $e>\frac{1}{\beta}$. In this region, the utility is upper bounded by
    $\frac{\beta -\gamma}{\beta}$.
    \end{itemize}
    Consider a reliability
    $q>\frac{\beta -\gamma}{\beta}$ and a sufficiently small perturbation size $\delta$, we have
    $$q-q\delta>\max\left(q - \big(\gamma -(1-q) \beta \big) ( \frac{1}{\beta}-\delta), \frac{\beta -\gamma}{\beta}\right).$$
    This implies that the utility-maximizing effort of $\Tilde{u}(e,s,a_h,q)$ lies on $[0,\delta]$, given $s=0$ and $a_h=\frac{1}{\beta}$. 
     
    Finally, we are ready to establish the productivity ratio under perturbation. Recall that productivity is the utility plus cost, where  
    $$\Tilde{u}(e,0,\bar{a},q) + \gamma e= q \cdot \min\left(1,\beta (e+\bar{a})\right) + (1-q) \min\left(1,\beta e\right) -q h(e+\bar{a})-(1-q) h(e).$$
    For the constructed function $\Tilde{p}$, selected skill level $s=0$, proficiency levels $a_\ell=0$ and $a_h=\frac{1}{\beta}$, reliability $q>\frac{\beta -\gamma}{\beta}$, and a sufficiently small perturbation size $\delta$, we have 
    $$p^\star (0, 0, q)\geq 1-\delta \quad \text{and} \quad p^\star (0,\frac{1}{\beta}, q) \leq q + (1-q)\beta\delta.$$
    Thus, the productivity ratio is bounded by
    $$\frac{p^\star (0,\frac{1}{\beta}, q)}{p^\star (0, 0, q)}\leq \frac{q + (1-q)\beta\delta}{1-\delta}.$$
    This ratio can be made smaller than $\varepsilon$ by choosing  $\frac{\beta -\gamma}{\beta}< q<\frac{\varepsilon}{2}$ and $\delta<\frac{\varepsilon}{2(\beta+\varepsilon)}$.
\end{proof}

\subsection{Production functions that partially satisfy IARA (Section~\ref{sec:unreliability_discussion})}\label{app:IARA_examples}
\begin{example}[Translog, \cite{berndt1973translog}]
    Consider $p(x)=\log x + b (\log x)^2$, where $x>0$, $b> 0$. For the following regime of x: $$\max(2b,1) < 2b \log x +1 < (\sqrt{5}+1) b,$$ the production function $p(x)$ is nonnegative, increasing, concave, and satisfies IARA.
\end{example}
\begin{proof}
    After some algebraic calculations, we have
    \begin{align*}
        & p'(x)=\dfrac{1+2b \log x}{x},\\
        & p''(x)=\dfrac{2b-1-2b \log x}{x^2}.
    \end{align*}
    Then the absolute risk aversion and its derivative is given by
    \begin{align*}
        & A(x) = -\frac{p''(x)}{p'(x)} = \dfrac{1+2b \log x-2b}{x(1+2b \log x)},\\
        & A'(x) = \dfrac{-(1+2b \log x)^2 + 2b(1+2b \log x)+4b^2}{x^2(1+2b \log x)^2}.
    \end{align*}
    When $x>0$, $b > 0$, the condition $2b \log x +1 > 2b \ (> 0)$ implies $p'(x)>0$ and $p''(x)<0$; the condition $2b < 2b \log x +1 < (\sqrt{5}+1) b$ implies $A'(x)>0$; besides, the condition $2b \log x +1 > 1$ implies $x>1$ and thus $p(x)>0$. Thus, the production function $p(x)$ is nonnegative, increasing, concave, and satisfies IARA.
\end{proof}
\begin{example}[Transcendental multiproduct, \cite{mundlak1964transcendental}]
    Consider $p(x)=e^{ax}x^b$, where $x>0$, $a<0$ and $b > 0$. For the following regime of x:   
    $$0< ax+b <\sqrt{b}\quad \text{and} \quad (ax+b)^2>2ax+b,$$ the production function $p(x)$ is nonnegative, increasing, concave, and satisfies IARA.
\end{example}
\begin{proof}
    After some algebraic calculations, the first and second derivatives are given by
    \begin{align*}
         p'(x)=e^{ax}x^b\dfrac{ax+b}{x} \quad \text{and}
        \quad p''(x)=e^{ax}x^b\dfrac{(ax+b)^2-b}{x^2}.
    \end{align*}
    Then the absolute risk aversion and its derivative are given by
    \begin{align*}
        A(x) = -\frac{p''(x)}{p'(x)} = \dfrac{b-(ax+b)^2}{x(ax+b)}\quad \text{and} \quad
        A'(x) = \dfrac{b\big((ax+b)^2-(2ax+b)\big)}{x^2(ax+b)^2}.
    \end{align*}
    When $x>0$, $a<0$, $b > 0$, the condition $0< ax+b <\sqrt{b}$ implies $p'(x)>0$ and $p''(x)<0$; the condition $(ax+b)^2>2ax+b $ implies $A'(x)>0$. Thus, the production function $p(x)$ is nonnegative, increasing, concave, and satisfies IARA.
\end{proof}

\subsection{Model of full adaptation under unreliability (Section~\ref{sec:unreliability_discussion})}\label{app:full_adapt}
To highlight the contrast in the model of AI unreliability, we analyze a full adaptation setting in which the agent has perfect ability to evaluate and respond to the AI output.
Recall that we model $a$ as the AI assistance level and $q$ as the reliability level, such that the AI functions desirably with a level $\bar{a}$ with probability $q>0$, i.e., $\mathbb{P}(a=\bar{a})=q$, and fails to provide assistance with probability $1-q$, i.e., $\mathbb{P}( a=0)=1-q$.
Here, we consider an agent with \emph{full adaptation} who perfectly observes the AI’s output and chooses effort accordingly. That means, with probability $q>0$, he observes $a=\bar{a}$ and exerts $e^\star(s,\bar{a})$; with probability $1-q$, he observes $a=0$ and exerts $e^\star(s,0)$.
The expected effort and productivity with full adaptation are thus given by
\begin{align*}
    e_{\text{full}}(s, \bar{a}, q)&= q e^\star(s,\bar{a}) + (1-q)  e^\star(s,0)=q(e^\star(s,\bar{a}) -  e^\star(s,0)) + e^\star(s,0), \\
    p_{\text{full}}(s, \bar{a}, q)&= q \cdot  p(s+e^\star(s,\bar{a})+\bar{a}) +   (1-q) \cdot p(s+e^\star(s,0))\\
    &=q \big(p(s+e^\star(s,\bar{a})+\bar{a})-p(s+e^\star(s,0))\big)+  p(s+e^\star(s,0)).
\end{align*}
\begin{proposition}\label{prop:full_adapt}
Consider any concave production function $p(x)$ and linear cost function $c(e)=\gamma e$. The agent's expected effort $e_{\text{full}}(s, \bar{a}, q)$ is (weakly) decreasing and productivity $p_{\text{full}}(s, \bar{a}, q)$ is (weakly) increasing in $s$, $\bar{a}$ and $q$. 
\end{proposition}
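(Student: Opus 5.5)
The proof rests on the closed forms of Proposition~\ref{prop:effort_productivity_expressions}, $e^\star(s,a)=(x^\star-s-a)^+$ and $p^\star(s,a)=\max(p(x^\star),p(s+a))$, substituted into $e_{\text{full}}$ and $p_{\text{full}}$. This gives
\begin{align*}
e_{\text{full}}(s,\bar a,q)&=q\,(x^\star-s-\bar a)^+ + (1-q)\,(x^\star-s)^+,\\
p_{\text{full}}(s,\bar a,q)&=q\,\max\bigl(p(x^\star),p(s+\bar a)\bigr)+(1-q)\,\max\bigl(p(x^\star),p(s)\bigr).
\end{align*}
Each of $s$, $\bar a$ and $q$ is then handled separately.

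For $s$ and $\bar a$ at fixed $q\in(0,1]$, the effort is a nonnegative combination of the functions $(x^\star-s-\bar a)^+$ and $(x^\star-s)^+$. Both are weakly decreasing in $s$, and the first is weakly decreasing in $\bar a$ while the second does not depend on $\bar a$. Hence $e_{\text{full}}$ is weakly decreasing in $s$ and $\bar a$. The productivity is likewise a nonnegative combination of $\max(p(x^\star),p(s+\bar a))$ and $\max(p(x^\star),p(s))$. Since $p$ is weakly increasing, both functions are weakly increasing in $s$, and the first is weakly increasing in $\bar a$. So $p_{\text{full}}$ is weakly increasing in $s$ and $\bar a$; this is exactly Corollary~\ref{cor:effort_productivity} applied termwise.

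For $q$, I use the affine representations already written in the paper:
\[
e_{\text{full}}=q\bigl(e^\star(s,\bar a)-e^\star(s,0)\bigr)+e^\star(s,0),\qquad
p_{\text{full}}=q\bigl(p^\star(s,\bar a)-p^\star(s,0)\bigr)+p^\star(s,0).
\]
Both are affine in $q$, so it suffices to sign the slopes. Since $\bar a\ge 0$, Corollary~\ref{cor:effort_productivity} (monotonicity in $s+a$) gives $e^\star(s,\bar a)\le e^\star(s,0)$ and $p^\star(s,\bar a)\ge p^\star(s,0)$. Here $p^\star(s,0)=p(s+e^\star(s,0))$ because the assistance is zero. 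The slopes therefore have the required signs.

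There is no real obstacle. The only care needed is to check that $p(s+e^\star(s,0))$ coincides with $p^\star(s,0)$ as defined in \eqref{eq:basic_prod_def}, so that the corollary applies, and to note that every monotonicity claim is weak because of the plateau regions. Multiplying by $q$ preserves the direction of monotonicity since $q\ge 0$.
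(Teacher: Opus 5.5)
Your proposal is correct and follows essentially the same route as the paper: substitute the closed forms of Proposition~\ref{prop:effort_productivity_expressions}, read off monotonicity in $s$ and $\bar a$ termwise, and handle $q$ through the affine representation, whose slopes are signed by $(x^\star-s)^+\ge(x^\star-s-\bar a)^+$ and $\max(p(x^\star),p(s+\bar a))\ge\max(p(x^\star),p(s))$. The only cosmetic difference is that you sign these slopes by citing Corollary~\ref{cor:effort_productivity}, whereas the paper compares the closed forms directly.
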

\begin{proof}[Proof of Proposition~\ref{prop:full_adapt}]
From the definition, the effort and productivity are given by
\begin{align*}
    e_{\text{full}}(s, \bar{a}, q) &= q (x^\star-s-\bar{a})^+ + (1-q) \max (0, x^\star-s)\\
    &=(x^\star-s)^+ -q\big((x^\star-s)^+- (x^\star-s-\bar{a})^+\big),\\
    p_{\text{full}}(s, \bar{a}, q)&= q \cdot \max\left( p(x^\star), p(s+\bar{a}) \right) +   (1-q) \cdot \max\left( p(x^\star), p(s) \right) \\
    &=q \big( \max\left( p(x^\star), p(s+\bar{a}) \right) - \max\left( p(x^\star), p(s) \right)\big) + \max \left( p(x^\star), p(s) \right).
\end{align*}
Thus, effort $e_{\text{full}}(s, \bar{a}, q)$ is (weakly) decreasing and productivity $p_{\text{full}}(s, \bar{a}, q)$ is (weakly) increasing in $s$ and $\bar{a}$. Note that $(x^\star-s)^+ \geq (x^\star-s-\bar{a})^+$ and $\max\left( p(x^\star), p(s+\bar{a}) \right) \geq \max\left( p(x^\star), p(s) \right)$, then effort $e_{\text{full}}(s, \bar{a}, q)$ is (weakly) decreasing and productivity $p_{\text{full}}(s, \bar{a}, q)$ is (weakly) increasing in $q$.
\end{proof}
Proposition~\ref{prop:full_adapt} is consistent with Corollary~\ref{cor:effort_productivity}. With full adaptation, the model can be interpreted as a convex combination of multiple cases based on AI output, which preserves the standard result in Corollary~\ref{cor:effort_productivity}.

\section{Supplementary materials for Section~\ref{sec:AI_literacy}}
\subsection{Steady-state distribution (Proposition~\ref{prop:bdc_expression_s5})}\label{app:bdc_steady_state_s5}
\begin{proof}[Proof of Proposition~\ref{prop:bdc_expression_s5}]
    Our model corresponds to a Markov chain taking values in the skill states $S$. In the language of Lemma~\ref{lem:bdc_std_result}, skill state $s_k$ corresponds to state index $k$, upward transition rate $\mathcal{K}_{s_k s_{k+1}}(a)= \lambda(e_b^\star(s_k,a))$ to birth rate $\lambda_k$, and downward transition rate $\mathcal{K}_{s_k s_{k-1}}(a)= \mu$ to death rate $\mu_k$. Thus, Lemma~\ref{lem:bdc_std_result} gives the desired steady-state distribution. 
    Then the steady-state effort and productivity follow directly by taking the expectation over the steady-state distribution.
\end{proof}

\subsection{Effort-skill non-monotonicity (Lemma~\ref{lem:effort_caseA+B})}\label{app:proof_effort_caseA+B}
The proof of Lemma~\ref{lem:effort_caseA+B} is a direct combination of Lemma~\ref{lem:effort_case_neg} and Lemma~\ref{lem:effort_case_pos}. As stated next, the two lemmas establish the effort-skill relationship by distinguishing the sign of the marginal return gap $\omega = (\beta-\gamma) - q \beta$.

\begin{lemma}\label{lem:effort_case_neg}
Consider $p(x)=\min(1, \beta x)$, $c(e)=\gamma e$, where $\beta>\gamma$. Suppose $v(s)\in [1/2,1]$ is (weakly) increasing and concave. Fix $0<q<1$, when the marginal return gap of effort and AI is negative ($\omega<0$),
\begin{enumerate}
\item If $\Tilde{s} < \frac{1}{\beta}$, then for any $\bar{a}>0$, effort $e_b^\star(s,\bar{a},q)$ is non-monotonic in $s$. 
\item Otherwise, for any $\bar{a}$, effort $e_b^\star(s,\bar{a},q)$ is (weakly) decreasing in $s$.
\end{enumerate}
\end{lemma}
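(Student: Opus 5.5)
The plan is to reduce everything to the closed form of Lemma~\ref{lem:unreliable_linear}, applied separately to the two posterior reliabilities $q_1(q,s)$ and $q_0(q,s)$ of Definition~\ref{def:bayes}. Write $\theta:=\frac{\beta-\gamma}{\beta}$. Since $\omega<0$ we have $q>\theta$. Lemma~\ref{lem:unreliable_linear} gives the effort under a posterior $q'$:
\begin{itemize}
\item it equals $(\frac1\beta-s)^+$ if $q'\le\theta$;
\item it equals $(\frac1\beta-s-\bar a)^+$ if $q'>\theta$.
\end{itemize}

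\textbf{Step 1: the two posteriors.} I first record the behaviour of the posteriors in $s$. Because $v$ is weakly increasing with $v(0)=1/2$, the posterior $q_1(q,s)$ is weakly increasing from $q$, so $q_1(q,s)\ge q>\theta$ for all $s$. After the high signal the agent therefore always exerts $(\frac1\beta-s-\bar a)^+$.

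The posterior $q_0(q,s)$ is weakly decreasing from $q_0(q,0)=q>\theta$. Solving $q_0(q,s)=\theta$ gives $v(s)=\frac{\frac{q}{1-q}\frac{\gamma}{\beta-\gamma}}{\frac{q}{1-q}\frac{\gamma}{\beta-\gamma}+1}$, so $q_0(q,s)\le\theta$ exactly when $s\ge\tilde s$. If $v$ has flat pieces, I interpret $v^{-1}$ as the smallest preimage so that this equivalence holds. Since $q_0(q,0)>\theta$, we have $\tilde s>0$. Consequently
$$e_b^\star(s,\bar a,q)=\begin{cases}(\frac1\beta-s-\bar a)^+, & s<\tilde s,\\ \mathbb P(\hat a=\bar a|s)(\frac1\beta-s-\bar a)^+ +\mathbb P(\hat a=0|s)(\frac1\beta-s)^+, & s\ge\tilde s.\end{cases}$$
The key simplification is that for $s<\tilde s$ both conditional efforts coincide, so the $s$-dependent signal weights play no role there.

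\textbf{Step 2: case $\tilde s\ge\frac1\beta$.} For $s<\tilde s$ the effort is $(\frac1\beta-s-\bar a)^+$, which is weakly decreasing. For $s\ge\tilde s\ge\frac1\beta$ both branches vanish, and the first branch is already $0$ at $s=\frac1\beta$. So $e_b^\star$ is weakly decreasing on all of $[0,\infty)$, which is part~2.

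\textbf{Step 3: case $\tilde s<\frac1\beta$ and $\bar a>0$.} I compare the left limit and the value of $e_b^\star$ at $\tilde s$. The left limit is $(\frac1\beta-\tilde s-\bar a)^+$. At $\tilde s$ itself the low-signal effort is $\frac1\beta-\tilde s$, which is strictly larger than $(\frac1\beta-\tilde s-\bar a)^+$. The weight on the low signal is $\mathbb P(\hat a=0|\tilde s)=q(1-v)+(1-q)v$, which is strictly positive since $0<q<1$. Hence $e_b^\star$ jumps strictly upward at $\tilde s$.

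For $s\ge\frac1\beta$ both conditional efforts are zero, so $e_b^\star(\frac1\beta,\bar a,q)=0$. Since $e_b^\star(\tilde s,\bar a,q)>0$, the effort strictly decreases somewhere on $[\tilde s,\frac1\beta]$. An increase followed by a strict decrease shows non-monotonicity regardless of whether the effort is strictly decreasing on $[0,\tilde s)$. This handles large $\bar a$, where the effort before $\tilde s$ is identically $0$.

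\textbf{Main obstacle.} The delicate point is bookkeeping at the threshold rather than any estimate. I must check that the tie-breaking convention in Lemma~\ref{lem:unreliable_linear} (posterior $q'\le\theta$ gives full effort) makes the jump occur at $\tilde s$ itself. I also need a careful definition of $\tilde s$ via $v^{-1}$ when $v$ is only weakly increasing. Concavity of $v$ is not needed in this case $\omega<0$. I expect it to matter only in the companion case $\omega\ge0$.
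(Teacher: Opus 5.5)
Your proposal is correct and takes essentially the same route as the paper. Both arguments use Lemma~\ref{lem:unreliable_linear} to reduce $e_b^\star$ to $(\frac{1}{\beta}-s-\bar{a})^+$ below $\tilde{s}$ and to the signal-weighted mixture above it. Both then read off an upward jump at $\tilde{s}$ when $\tilde{s}<\frac{1}{\beta}$, and a plain weakly decreasing effort otherwise.

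Two of your checks tighten the paper's argument. First, $\mathbb{P}(\hat{a}=0\mid\tilde{s})>0$, which makes the jump strict. Second, the effort returns to $0$ at $s=\frac{1}{\beta}$, which supplies the strict decrease. The paper leaves that decrease implicit, and it is genuinely needed when $\bar{a}\geq\frac{1}{\beta}$, because there the pre-threshold effort is identically zero.
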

\begin{proof}[Proof of Lemma~\ref{lem:effort_case_neg}]
When $\omega<0$, then $q_0(q,\cdot)$ crosses the threshold $\frac{\beta-\gamma}{\beta}$ at $\Tilde{s}= v^{-1}(\frac{\frac{q}{1-q}\frac{\gamma}{\beta -\gamma}}{\frac{q}{1-q}\frac{\gamma}{\beta -\gamma}+1})$. Recall that the expected effort is
$e^{\star}_b(s, \bar{a}, q)= \mathbb P(\hat{a}=\bar{a}|s)e^\star(s,\bar{a},q_1(q,s))+\mathbb P(\hat{a}=0|s) e^\star(s,\bar{a},q_0(q,s))$. Thus,
\begin{align*}
& e^{\star}_b(s, \bar{a}, q) \\
=&
\begin{cases}
    (\frac{1}{\beta} -s-\bar{a})^+, & \text{if } s<\Tilde{s},\\
    \big(q \cdot v(s) + (1-q) \cdot (1-v(s))\big) (\frac{1}{\beta} -s-\bar{a})^+ 
    + \big(q \cdot (1-v(s)) + (1-q) \cdot v(s))\big) (\frac{1}{\beta} -s)^+, & \text{if }  s\geq \Tilde{s}.
\end{cases} 
\end{align*}

\textbf{Case 1:} $\Tilde{s}<\frac{1}{\beta}$. On $[0,\tilde s)$, the expected effort $e^{\star}_b(s, \bar{a}, q) =(\frac{1}{\beta} -s-\bar{a})^+$ is decreasing in $s$. At $s = \tilde s$, $e^{\star}_b(\cdot, \bar{a}, q)$ jumps upward as $(\frac{1}{\beta} -\Tilde{s})^+ > (\frac{1}{\beta} -\Tilde{s}-\bar{a})^+$ holds for any $\bar{a}>0$. It follows that the expected effort $e_b^\star(s,\bar{a},q)$ is non-monotonic in $s$, which leads to the first part of the lemma.

\textbf{Case 2:} $\Tilde{s}\geq \frac{1}{\beta}$. The expected effort $e^{\star}_b(s, \bar{a}, q) =(\frac{1}{\beta} -s-\bar{a})^+$ is decreasing in $s$, which leads to the second part of the lemma.
\end{proof}

\begin{lemma}\label{lem:effort_case_pos}
Consider $p(x)=\min(1, \beta x)$, $c(e)=\gamma e$, where $\beta>\gamma$. Suppose $v(s)\in [1/2,1]$ is (weakly) increasing and concave. Fix $0<q<1$, when the marginal return gap of effort and AI is non-negative ($\omega\geq 0$),
\begin{enumerate}
\item If $q<\frac{1}{2}$ and $\Tilde{s} + \frac{(1-2q) v(\Tilde{s}) + q}{(1-2q) v'(\Tilde{s})} < \frac{1}{\beta}$, then there exists a threshold $\Tilde{a}$ such that: for any $\bar{a}>\Tilde{a}$, effort $e_b^\star(s,\bar{a},q)$ is non-monotonic in $s$; for any $\bar{a}\leq \Tilde{a}$, effort $e_b^\star(s,\bar{a},q)$ is decreasing in $s$.
\item Otherwise, for any $\bar{a}$, effort $e_b^\star(s,\bar{a},q)$ is decreasing in $s$.
\end{enumerate}
\end{lemma}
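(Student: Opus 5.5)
We will follow the same route as Lemma~\ref{lem:effort_case_neg}: first write $e_b^\star(s,\bar{a},q)$ explicitly as a piecewise function of $s$ using Lemma~\ref{lem:unreliable_linear}, then study where its slope in $s$ can turn positive.

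\emph{Explicit effort.} Since $\omega\geq 0$, we have $q\leq\frac{\beta-\gamma}{\beta}$. Because $q_0(q,s)\leq q$, Lemma~\ref{lem:unreliable_linear} gives effort $(\frac{1}{\beta}-s)^+$ after the signal $\hat a=0$, for every $s$. The posterior $q_1(q,\cdot)$ is increasing and crosses $\frac{\beta-\gamma}{\beta}$ at $\tilde s$. Hence after the signal $\hat a=\bar a$ the effort is $(\frac{1}{\beta}-s)^+$ for $s\leq\tilde s$ and $(\frac{1}{\beta}-s-\bar a)^+$ for $s>\tilde s$. Write $P_0(s):=\mathbb P(\hat a=0\mid s)=(1-2q)v(s)+q$ and $P_1=1-P_0$. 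Then
\[
e_b^\star(s,\bar a,q)=\Big(\tfrac1\beta-s\Big)^+ \text{ for } s\leq\tilde s,\qquad
e_b^\star(s,\bar a,q)=\Big(\tfrac1\beta-s\Big)^+-P_1(s)\min\Big(\bar a,\big(\tfrac1\beta-s\big)^+\Big) \text{ for } s>\tilde s .
\]
Consequences:
\begin{itemize}
\item On $[0,\tilde s]$ the effort is decreasing.
\item At $\tilde s$ it jumps down, which is harmless for monotonicity.
\item On $s\geq \frac1\beta$ it is identically $0$.
\end{itemize}
So non-monotonicity can only come from a strictly increasing piece inside $(\tilde s,\frac1\beta)$.

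\emph{Slope on $(\tilde s,\frac1\beta)$.} Differentiating, with one-sided derivatives of $v$ where needed, gives
\[
\partial_s e_b^\star=-1-P_1'(s)\min\big(\bar a,\tfrac1\beta-s\big)+P_1(s)\,\mathbb I\big(\bar a>\tfrac1\beta-s\big),
\]
where $P_1'=-(1-2q)v'$.

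If $q\geq\frac12$, then $P_1'\geq 0$ and $P_1\leq 1$, so the slope is $\leq 0$ and the effort is decreasing for every $\bar a$. This is the ``otherwise'' branch.

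If $q<\frac12$, the slope is pointwise nondecreasing in $\bar a$: the term $-P_1'\min(\bar a,\cdot)$ grows with $\bar a$, and the indicator switches on as $\bar a$ grows. Hence the set of $\bar a$ for which some increasing piece exists is upward closed, and we set $\tilde a$ to be its infimum.

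\emph{Large-$\bar a$ limit.} It remains to decide when that set is nonempty. The largest slopes occur for $\bar a\geq \frac1\beta$, where $e_b^\star=P_0(s)(\frac1\beta-s)$ on $(\tilde s,\frac1\beta)$. Its slope is
\[
g(s):=(1-2q)v'(s)\big(\tfrac1\beta-s\big)-P_0(s).
\]
Since $v$ is concave, $v'$ is nonincreasing; also $\frac1\beta-s$ is decreasing and positive, and $P_0$ is increasing (because $q<\frac12$). So $g$ is decreasing. Therefore $g>0$ somewhere on $(\tilde s,\frac1\beta)$ if and only if $g(\tilde s^+)>0$, which is exactly $\tilde s+\frac{(1-2q)v(\tilde s)+q}{(1-2q)v'(\tilde s)}<\frac1\beta$.

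\begin{itemize}
\item When this condition fails (or $\tilde s\geq\frac1\beta$), the slope is nonpositive for every $\bar a$ by the monotonicity in $\bar a$, giving the second part.
\item When it holds, the set is nonempty, so $\tilde a<\infty$. For $\bar a>\tilde a$ the effort is non-monotonic; for $\bar a\leq\tilde a$ it is decreasing.
\end{itemize}

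The main obstacle will be the boundary case $\bar a=\tilde a$, together with one-sided derivatives of $v$ at kinks. We will handle it through continuity of the slope in $\bar a$ and by defining $\tilde a$ via the strict-positivity set (an open condition), so that $\bar a=\tilde a$ falls on the decreasing side.
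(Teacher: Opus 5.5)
Your proposal is correct and follows essentially the paper's route: the same piecewise formula for $e_b^\star$ with the downward jump at $\tilde s$, the same slope analysis on $(\tilde s,\frac1\beta)$, and the same decisive quantity $g(\tilde s^{+})$. Your observation that the slope is pointwise nondecreasing in $\bar a$ streamlines the paper's split into $\tilde s+\bar a<\frac1\beta$ versus $\tilde s+\bar a\geq\frac1\beta$, which the paper handles with endpoint-derivative comparisons.

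For the boundary case $\bar a=\tilde a$, the slope is not continuous in $\bar a$, since the indicator jumps up at $\bar a=\frac1\beta-s$. It is, however, lower semicontinuous, so $\{\bar a:\text{slope}>0\text{ somewhere}\}$ is open; alternatively, use continuity of $e_b^\star$ itself in $\bar a$. Either way, $\tilde a$ lands on the decreasing side as you intend.
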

\begin{proof}[Proof of Lemma~\ref{lem:effort_case_pos}]
When $\omega\geq 0$, then $q_1(q,\cdot)$ crosses $\frac{\beta-\gamma}{\beta}$ at $\Tilde{s}= v^{-1}(\frac{1}{\frac{q}{1-q}\frac{\gamma}{\beta -\gamma}+1})$. Recall that the expected effort is
$e^{\star}_b(s, \bar{a}, q)= \mathbb P(\hat{a}=\bar{a}|s)e^\star(s,\bar{a},q_1(q,s))+\mathbb P(\hat{a}=0|s) e^\star(s,\bar{a},q_0(q,s))$. Thus,
\begin{align*}
& e^{\star}_b(s, \bar{a}, q) \\
=&
\begin{cases}
    (\frac{1}{\beta} -s)^+, & \text{if } s<\Tilde{s},\\
    \big(q \cdot v(s) + (1-q) \cdot (1-v(s))\big) (\frac{1}{\beta} -s-\bar{a})^+ 
    + \big(q \cdot (1-v(s)) + (1-q) \cdot v(s))\big) (\frac{1}{\beta} -s)^+, & \text{if }  s\geq \Tilde{s}.
\end{cases} 
\end{align*}
We restrict our attention to the case of $\Tilde{s}<\frac{1}{\beta}$; otherwise, the expected effort $e^{\star}_b(s, \bar{a}, q)=(\frac{1}{\beta} -s)^+$ is trivially (weakly) decreasing. Similarly, since $e^{\star}_b(s, \bar{a}, q)$ is (weakly) decreasing for $s<\Tilde{s}$ and has a downward drop at $\Tilde{s}$, it suffices to focus on the region $s>\Tilde{s}$.

\textbf{Case 1:} $q\geq 1/2$. We note that the probability of observing signal $\bar{a}$ is $\mathbb P(\hat{a}=\bar{a}|s)=(2q-1)v(s) + 1-q$, which is increasing in $s$; and the probability of observing signal $0$, is decreasing in $s$. The corresponding effort $ (\frac{1}{\beta} -s-\bar{a})^+$ and $(\frac{1}{\beta} -s)^+$ are both decreasing in $s$ and satisfy $(\frac{1}{\beta} -s-\bar{a})^+ \leq  (\frac{1}{\beta} -s)^+$. Recall that 
$$e^{\star}_b(s, \bar{a}, q) = \underbrace{\Big(q \cdot v(s) + (1-q) \cdot (1-v(s))\Big)}_{\text{increasing}} \underbrace{(\frac{1}{\beta} -s-\bar{a})^+}_{\text{smaller term}} + \underbrace{\Big(q \cdot (1-v(s)) + (1-q) \cdot v(s))\Big)}_{\text{decreasing}} \underbrace{(\frac{1}{\beta} -s)^+}_{\text{larger term}}.$$
As the weight on the larger term $(\frac{1}{\beta} -s)^+$ decreases in $s$ and the weight on the smaller term $(\frac{1}{\beta} -s-\bar{a})^+$ increases in $s$, it follows that the expected effort $e^{\star}_b(s, \bar{a}, q)$ is decreasing in $s$.

\textbf{Case 2:} $q<1/2$. We distinguish two subcases based on the value of $\Tilde{s} + \bar{a}$. 

\textbf{Case 2.1. $\Tilde{s} + \bar{a} < \frac{1}{\beta}$.} The effort is then given by
\begin{equation*}
e^{\star}_b(s, \bar{a}, q)  = 
    \begin{cases}
            \frac{1}{\beta} -s + \bar{a}(1-2q) v(s) -(1-q)\bar{a}, & \text{if } \Tilde{s}\leq s < \frac{1}{\beta} -\bar{a} ,\\
            \big((1-2q)v(s) + q)\big) (\frac{1}{\beta} -s), & \text{if } \frac{1}{\beta} -\bar{a}\leq s < \frac{1}{\beta} ,\\
            0, & \text{otherwise}.
        \end{cases}
\end{equation*}
Below, we use $\frac{\partial e^{\star}_b(s, \bar{a}, q)}{\partial s}\big|_{s\uparrow s_0}$ and $\frac{\partial e^{\star}_b(s, \bar{a}, q)}{\partial s}\big|_{s\downarrow s_0}$ to denote the left-hand and right-hand side  partial derivative at $s=s_0$. Differentiating with respect to $s$, we have
\begin{align*}
    \dfrac{\partial e^{\star}_b(s, \bar{a}, q)}{\partial s}& =
        \begin{cases}
            -1 + \bar{a}(1-2q) v'(s) , & \text{if } \Tilde{s}\leq s < \frac{1}{\beta} -\bar{a} ,\\
            (1-2q)v'(s)  (\frac{1}{\beta} -s) - \big((1-2q)v(s) + q\big), & \text{if } \frac{1}{\beta} -\bar{a}\leq s < \frac{1}{\beta} ,\\
            0, & \text{otherwise}.
        \end{cases}
\end{align*}
From the monotonicity and concavity of $v(s)$, the derivative $\frac{\partial e^{\star}_b(s, \bar{a}, q)}{\partial s}$ is decreasing in $s$ within each sub-interval. 

Evaluating the derivative at the endpoints of each sub-interval, i.e.,  at $\Tilde{s}$, $\frac{1}{\beta} -\bar{a}$, and $\frac{1}{\beta}$, we have
\begin{align*}
    & \dfrac{\partial e^{\star}_b(s, \bar{a}, q)}{\partial s}\Big|_{s\downarrow\Tilde{s}} = -1 + \bar{a}(1-2q) v'(\Tilde{s}),\\ 
    & \dfrac{\partial e^{\star}_b(s, \bar{a}, q)}{\partial s}\Big|_{s\uparrow \frac{1}{\beta} -\bar{a}} = -1 + \bar{a}(1-2q) v'( \frac{1}{\beta} -\bar{a}),\\
    & \dfrac{\partial e^{\star}_b(s, \bar{a}, q)}{\partial s}\Big|_{s\downarrow \frac{1}{\beta} -\bar{a}} = \bar{a}(1-2q) v'( \frac{1}{\beta} -\bar{a}) - \Big((1-2q) v(\frac{1}{\beta} -\bar{a}) + q\Big), \\
    &\dfrac{\partial e^{\star}_b(s, \bar{a}, q)}{\partial s}\Big|_{s\uparrow\frac{1}{\beta} } =- \Big((1-2q) v(\frac{1}{\beta} ) + q\Big).
\end{align*}
Note that we have the following ordering
\begin{align*}
    & \dfrac{\partial e^{\star}_b(s, \bar{a}, q)}{\partial s}\Big|_{s\downarrow\Tilde{s}} \geq \dfrac{\partial e^{\star}_b(s, \bar{a}, q)}{\partial s}\Big|_{s\uparrow \frac{1}{\beta} -\bar{a}}, \quad  \dfrac{\partial e^{\star}_b(s, \bar{a}, q)}{\partial s}\Big|_{s\downarrow \frac{1}{\beta} -\bar{a}}\geq \dfrac{\partial e^{\star}_b(s, \bar{a}, q)}{\partial s}\Big|_{s\uparrow\frac{1}{\beta} },\\
    & \dfrac{\partial e^{\star}_b(s, \bar{a}, q)}{\partial s}\Big|_{s\uparrow \frac{1}{\beta} -\bar{a}} < \dfrac{\partial e^{\star}_b(s, \bar{a}, q)}{\partial s}\Big|_{s\downarrow \frac{1}{\beta} -\bar{a}}, \quad \text{and} \quad \dfrac{\partial e^{\star}_b(s, \bar{a}, q)}{\partial s}\Big|_{s\uparrow\frac{1}{\beta} }<0.
\end{align*}
Therefore, the expected effort $e^{\star}_b(s, \bar{a}, q)$ is non-monotonic if and only if its derivative is positive for some $s$, which occurs if and only if
\begin{align}
    & \dfrac{\partial e^{\star}_b(s, \bar{a}, q)}{\partial s}\Big|_{s\downarrow\Tilde{s}}=-1 + \bar{a}(1-2q) v'(\Tilde{s}) > 0 \quad \text{ or } \nonumber \\
    & \dfrac{\partial e^{\star}_b(s, \bar{a}, q)}{\partial s}\Big|_{s\downarrow \frac{1}{\beta} -\bar{a}}=\bar{a}(1-2q) v'( \frac{1}{\beta} -\bar{a}) - \big((1-2q) v(\frac{1}{\beta} -\bar{a}) + q\big) >0. \label{eq:effortb_derivative}
\end{align}
Note that both expressions are increasing in $\bar{a}$. The monotonicity therefore implies the following equivalence.
\begin{align*}
    & \exists \bar{a}<\frac{1}{\beta}-\Tilde{s} \text{ such that } e^{\star}_b(s, \bar{a}, q) \text{ is non‑monotonic in skill } \\
    \iff & \lim_{\bar{a}\to \frac{1}{\beta}-\Tilde{s}} \dfrac{\partial e^{\star}_b(s, \bar{a}, q)}{\partial s}\Big|_{s\downarrow\Tilde{s}} > 0 \quad \text{or} \quad \lim_{\bar{a}\to \frac{1}{\beta}-\Tilde{s}} \dfrac{\partial e^{\star}_b(s, \bar{a}, q)}{\partial s}\Big|_{s\downarrow \frac{1}{\beta} -\bar{a}}>0.
\end{align*}
Substituting the expressions in \eqref{eq:effortb_derivative}, it is equivalent to
\begin{align*}
    & -1 + (\frac{1}{\beta}-\Tilde{s})(1-2q) v'(\Tilde{s}) > 0 \quad \text{or} \quad (\frac{1}{\beta}-\Tilde{s})(1-2q) v'(\Tilde{s}) - \big(1-2q) v(\Tilde{s}) + q\big) >0,
\end{align*}
which is exactly $\frac{1}{\beta}-\Tilde{s}>  \frac{(1-2q) v(\Tilde{s}) + q}{(1-2q) v'(\Tilde{s})}$.
In this case, the monotonicity of $\frac{\partial e^{\star}_b(s, \bar{a}, q)}{\partial s}\big|_{s\downarrow\Tilde{s}}$ and $\frac{\partial e^{\star}_b(s, \bar{a}, q)}{\partial s}\big|_{s\downarrow \frac{1}{\beta} -\bar{a}}$ in $\bar{a}$ implies that there exists $\Tilde{a}$ such that \eqref{eq:effortb_derivative} holds for all $\Tilde{a}<\bar{a}<\frac{1}{\beta}-\Tilde{s}$ and does not hold for all $\bar{a}\leq \Tilde{a}$. Thus, the expected effort $e^{\star}_b(s, \bar{a}, q)$ is non‑monotonic in skill for all $\Tilde{a}<\bar{a}<\frac{1}{\beta}-\Tilde{s}$ and is decreasing for all $\bar{a}\leq \Tilde{a}$.

\textbf{Case 2.2. $\Tilde{s}+\bar{a}\geq\frac{1}{\beta}$.} The effort is then given by

\begin{equation*}
e^{\star}_b(s, \bar{a}, q)  = 
    \begin{cases}
            \big((1-2q)v(s) + q)\big) (\frac{1}{\beta} -s), & \text{if } \Tilde{s}\leq s < \frac{1}{\beta} ,\\
            0, & \text{otherwise}.
        \end{cases}
\end{equation*}
Differentiating with respect to $s$, we have
\begin{align*}
    \dfrac{\partial e^{\star}_b(s, \bar{a}, q)}{\partial s}& =
        \begin{cases}
            (1-2q)v'(s)  (\frac{1}{\beta} -s) - \big((1-2q)v(s) + q\big), & \text{if } \Tilde{s}\leq s < \frac{1}{\beta} ,\\
            0, & \text{otherwise}.
        \end{cases}
\end{align*}
Again, the derivative $\frac{\partial e^{\star}_b(s, \bar{a}, q)}{\partial s}$ is decreasing in $s$ within each sub-interval. The monotonicity implies the following equivalence.
\begin{align*}
    & \exists \bar{a}\geq \frac{1}{\beta}-\Tilde{s} \text{ such that } e^{\star}_b(s, \bar{a}, q) \text{ is non‑monotonic in skill } 
    \iff \dfrac{\partial e^{\star}_b(s, \bar{a}, q)}{\partial s}\Big|_{s\downarrow\Tilde{s}} > 0.
\end{align*}
Substituting the effort expression, it is equivalent to
\begin{align*}
    (1-2q) v'(\Tilde{s}) (\frac{1}{\beta} -\Tilde{s}) - \big((1-2q)v(\Tilde{s}) + q\big)>0,
\end{align*}
which is exactly $\frac{1}{\beta}-\Tilde{s}>  \frac{(1-2q) v(\Tilde{s}) + q}{(1-2q) v'(\Tilde{s})}$. We note that in this case, the derivative of effort is independent of the value of $\bar{a}$. 
In this case, when $\frac{1}{\beta}-\Tilde{s}> \frac{(1-2q) v(\Tilde{s}) + q}{(1-2q) v'(\Tilde{s})}$ holds, for all $\bar{a}\geq\frac{1}{\beta}-\Tilde{s}$, the expected effort $e^{\star}_b(s, \bar{a}, q)$ is non‑monotonic in skill.

Combining the two cases, we conclude the proof.
\end{proof}

\begin{proof}[Proof of Lemma~\ref{lem:effort_caseA+B}]
If Condition~\ref{cond:multimodal_cond} holds and $\omega<0$, Lemma~\ref{lem:effort_case_neg} implies that there exists a threshold $\Tilde{a}= 0$ such that: if and only if $\bar{a}>\Tilde{a}$, effort $e_b^\star(s,\bar{a},q)$ is non-monotonic in $s$.
If Condition~\ref{cond:multimodal_cond} holds and $\omega\geq 0$, Lemma~\ref{lem:effort_case_pos} implies that there exists a threshold $\Tilde{a}\geq 0$ such that: if and only if $\bar{a}>\Tilde{a}$, effort $e_b^\star(s,\bar{a},q)$ is non-monotonic in $s$.
Otherwise, for any $\bar{a}$, effort $e_b^\star(s,\bar{a},q)$ is decreasing in $s$.
\end{proof}

\subsection{The case of unimodal skill distribution (Lemma~\ref{lem:bayes_effort_decrease})}\label{app:pf_skill_unimodal_case}
\begin{proof}[Proof of Lemma~\ref{lem:bayes_effort_decrease}]
Our goal is to show that the skill distribution is unimodal if effort is decreasing. Consider any $N\geq 2$ states $S=(s_1,s_2,\cdots,s_N)$ and decay rate $\mu>0$. For a fixed AI proficiency level $\bar{a}$, as $e_b^\star(s,\bar{a},q)$ is decreasing in $s$, it directly follows that $e_b^\star(s_i,\bar{a},q)$ is decreasing in $i$. Thus the upward transition rate $\lambda(e_b^\star(s_i,\bar{a},q))$ and the transition ratio $\frac{\lambda(e_b^\star(s_i,\bar{a},q))}{\mu}$ are also decreasing in $i$. 
Recall that the steady-state distribution is defined in Proposition~\ref{prop:bdc_expression_s5}, which involves sums of products of transition ratios $\frac{\lambda(e_b^*(s_i,\bar{a},q))}{\mu}$. The monotonicity of effort implies that the transition ratio can cross $1$ at most once. When this ratio remains greater than $1$, the probability $\pi_i(\mu,\bar{a},q,S)$ is increasing in $i$; otherwise, when it drops below $1$, the probability is decreasing. Thus, the resulting steady-state distribution is unimodal.
\end{proof}
\subsection{Model of constant verification accuracy (Section~\ref{sec:literacy_discussion})}\label{app:literacy_constant_v}
To highlight the contrast in the model of AI literacy, we analyze a setting in which the agents have the same verification ability to evaluate the AI output. Under an additional constant verification assumption, we establish the monotonicity of effort in skill and the unimodality of skill distribution.

We begin with a lemma that examines the utility-maximizing effort in the ex-ante model of Section~\ref{sec:AI_unreliability}.
\begin{lemma}\label{lem:exante_effort}
    Consider any concave production function $p(x)$ and linear cost function $c(e)=\gamma e$. The agent's utility-maximizing effort $e^\star(s,\bar{a},q)$ is (weakly) decreasing in $s, \bar{a},q$.
\end{lemma}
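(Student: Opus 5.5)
The plan is to reduce everything to monotonicity of the marginal expected utility and then apply a standard comparative-statics argument for maximizers of concave functions. Write
\[
g(e;s,\bar a,q) := \frac{\partial u}{\partial e}(e,s,\bar a,q) = q\,p'(s+e+\bar a) + (1-q)\,p'(s+e) - \gamma .
\]
Since $p$ is concave, $p'$ is weakly decreasing. Hence $g$ is weakly decreasing in $e$, so $u(\cdot,s,\bar a,q)$ is concave in $e$.

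As in the proof of Lemma~\ref{lem:unreliable_productivity_derivative}, the largest maximizer admits the description
\[
e^\star(s,\bar a,q)=\sup\{e\ge 0:\ g(e;s,\bar a,q)\ge 0\},
\]
with the convention that this equals $0$ when the set is empty (equivalently, when $g(0)<0$). Boundedness of this set follows from $\limsup p(x)/x<\gamma$ and concavity, which force $p'(x)<\gamma$ for large $x$. One point needs care: $p$ may contain linear segments of slope exactly making $g=0$. Using the largest element of the argmax, as in \eqref{eq:basic_effort_def}, together with the supremum description handles this, because $\{e: g\ge 0\}$ is an interval $[0,e_0]$ by monotonicity of $g$ in $e$.

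The key step is to show that $g$ is pointwise weakly decreasing in each parameter.
\begin{itemize}
\item In $s$: both $p'(s+e+\bar a)$ and $p'(s+e)$ weakly decrease as $s$ grows.
\item In $\bar a$: only the term $q\,p'(s+e+\bar a)$ depends on $\bar a$, and it weakly decreases.
\item In $q$: we have $\partial g/\partial q = p'(s+e+\bar a)-p'(s+e)\le 0$, since $\bar a\ge 0$ and $p'$ is decreasing. Because $g$ is affine in $q$, this already gives monotonicity without any differentiability in $q$.
\end{itemize}

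Monotonicity of $e^\star$ then follows by set inclusion. Fix a parameter increase, say from $\theta$ to $\theta'\ge\theta$, with the other parameters held fixed. For every $e$ we have $g(e;\theta')\le g(e;\theta)$, so $\{e\ge0: g(e;\theta')\ge 0\}\subseteq\{e\ge0: g(e;\theta)\ge0\}$, and taking suprema gives $e^\star(\theta')\le e^\star(\theta)$.

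The only delicate point is justifying that the largest maximizer coincides with the supremum of $\{g\ge 0\}$ when $p$ is merely concave and differentiable. I would argue this directly. Concavity of $u$ in $e$ means $u$ is nondecreasing wherever $g\ge 0$ and strictly decreasing beyond the last point where $g\ge 0$, provided $g<0$ there. Continuity of $p'$ then ensures the supremum is attained as a maximizer. No other obstacle is expected.
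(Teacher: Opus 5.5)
Your proposal is correct and follows essentially the same route as the paper. The paper also writes $e^\star(s,\bar a,q)=\sup\Omega_e$ with $\Omega_e=\{e\ge 0: q\,p'(s+e+\bar a)+(1-q)\,p'(s+e)\ge\gamma\}$, notes that this marginal term is pointwise decreasing in $s,\bar a,q$ so that $\Omega_e$ shrinks, and concludes by comparing suprema. Your version additionally spells out points the paper leaves implicit: the empty-set convention, boundedness of $\Omega_e$, why the largest maximizer equals the supremum, and the sign of $\partial g/\partial q$.
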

\begin{proof}[Proof of Lemma~\ref{lem:exante_effort}]
     Recall that 
     $$e^\star(s,\bar{a},q):=\arg\max_{e\geq 0}\left(q\cdot p(s+e+\bar{a}) + (1-q)\cdot p(s+e)-\gamma e\right).$$
     The agent's utility-maximizing effort $e^\star (s,\bar{a},q)$ is either zero or a non-negative value characterized by the first-order condition
     \begin{equation*}
        q \cdot p'(s+e^\star (s,\bar{a},q)+\bar{a}) + (1-q) \cdot p'(s+e^\star (s,\bar{a},q)) = \gamma .
    \end{equation*}
    In both cases, we have 
    \begin{equation*}
        e^\star (s,\bar{a},q)= \sup \Omega_e, \quad \text{where } \Omega_e = \left\{e\geq 0: q \cdot p'(s+e+\bar{a}) + (1-q) \cdot p'(s+e)\geq \gamma\right\}.
    \end{equation*} 
    The concavity of $p$ implies that the derivative function $q \cdot p'(s+e+\bar{a}) + (1-q) \cdot p'(s+e)$ is decreasing in $s,\bar{a}, q$. Thus, an increase in any of $s,\bar{a}, q$ shrinks the set $\Omega_e$. Note that $e_h\in\Omega_e$ implies $e_\ell \in\Omega_e$ for any $e_\ell<e_h$. It follows that the supremum defining $e^\star (s,\bar{a},q)$ is (weakly) decreasing in $s,\bar{a}, q$.
\end{proof}

\begin{proposition}\label{prop:unreliable_constant_effort}
    Consider any concave production function $p(x)$ and linear cost function $c(e)=\gamma e$. If the verification accuracy is a constant, i.e., $v(s)=v \geq 1/2$, the expected effort $e^\star_b(s, \bar{a}, q)$ is (weakly) decreasing in $s$ and $q$.
\end{proposition}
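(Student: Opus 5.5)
With $v(s)=v$ constant, the posteriors $q_1:=q_1(q,s)$ and $q_0:=q_0(q,s)$ from Definition~\ref{def:bayes} no longer depend on $s$. The signal probabilities $\mathbb P(\hat a=\bar a\mid s)=qv+(1-q)(1-v)$ and $\mathbb P(\hat a=0\mid s)=q(1-v)+(1-q)v$ do not depend on $s$ either. So
$$e^\star_b(s,\bar a,q)=\big(qv+(1-q)(1-v)\big)\,e^\star(s,\bar a,q_1)+\big(q(1-v)+(1-q)v\big)\,e^\star(s,\bar a,q_0)$$
is a fixed convex combination of two functions of $s$. By Lemma~\ref{lem:exante_effort}, $e^\star(s,\bar a,q')$ is weakly decreasing in $s$ for any fixed $q'$. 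Hence monotonicity in $s$ follows immediately: the distribution-shift channel that drives Lemma~\ref{lem:effort_case_pos} is absent, because the weights are constant.

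Monotonicity in $q$ has two parts. First, I check from the explicit formulas that both $q_1$ and $q_0$ are weakly increasing in $q$. Indeed $q_1=\frac{qv}{qv+(1-q)(1-v)}$ is increasing in the odds $\frac{q}{1-q}$, and the same holds for $q_0$. By Lemma~\ref{lem:exante_effort}, each conditional effort $e^\star(s,\bar a,q_j)$ is then weakly decreasing in $q$. Second, I need to handle the weights, which do move with $q$: the weight on signal $\bar a$ is $(2v-1)q+(1-v)$, which increases in $q$ since $v\ge 1/2$. That weight sits on the smaller effort, because $q_1\ge q\ge q_0$ when $v\ge 1/2$, so Lemma~\ref{lem:exante_effort} gives $e^\star(s,\bar a,q_1)\le e^\star(s,\bar a,q_0)$. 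Raising $q$ therefore shifts mass toward the smaller term, which also weakly decreases the mixture.

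To combine the two effects rigorously, I take $q<q'$ and write $w(q)$ for the weight on signal $\bar a$. I first replace both conditional efforts at $q'$ by their values at $q$, which weakly increases the expression. I then compare $w(q')E_1+(1-w(q'))E_0$ with $w(q)E_1+(1-w(q))E_0$, where $E_1:=e^\star(s,\bar a,q_1(q,s))$ and $E_0:=e^\star(s,\bar a,q_0(q,s))$. Their difference is $(w(q')-w(q))(E_1-E_0)\le 0$. Chaining the two inequalities gives $e^\star_b(s,\bar a,q')\le e^\star_b(s,\bar a,q)$.

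The main point of care is the ordering $q_1\ge q_0$ together with the increase of $w$. The degenerate case $v=1/2$ gives $q_1=q_0=q$ and needs no ordering. Beyond that, the argument is routine once the constant-$v$ simplification is made.
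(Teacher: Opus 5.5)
Your proposal is correct and follows essentially the same route as the paper. With $v$ constant, the posteriors and signal weights do not depend on $s$, so monotonicity in $s$ follows from Lemma~\ref{lem:exante_effort}; for $q$, you combine the monotonicity of $q_1,q_0$ in $q$ with the fact that the increasing weight $(2v-1)q+1-v$ sits on the smaller effort $e^\star(s,\bar a,q_1)$. Your explicit two-step chaining for $q<q'$ is a slightly more careful version of the paper's informal ``smaller term/larger term'' argument.
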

\begin{proof}[Proof of Proposition~\ref{prop:unreliable_constant_effort}]
Recall that the posterior probabilities are given by
$$q_1(q,s)=\frac{q v(s)}{q v(s)+(1-q)(1-v(s))} \quad \text{ and }\quad \frac{q (1-v(s))}{q(1-v(s))+(1-q)v(s)}=q_0(q,s).$$ 
Given $v(s)=v\geq \frac{1}{2}$, we know that $q_1$ and $q_0$ are independent of $s$ and satisfy $q_1\geq q_0$.
Recall the expected effort is $e^\star_b(s, \bar{a}, q)=\mathbb P(\hat{a}=\bar{a}|s)e^\star(s,\bar{a},q_1)+\mathbb P(\hat{a}=0|s) e^\star(s,\bar{a},q_0)$, we have
\begin{align*}
    e^\star_b(s, \bar{a}, q)&=\big((2q-1)v(s) + 1-q\big) e^\star(s,\bar{a},q_1) + \big(1- ((2q-1)v(s) + 1-q)\big) e^\star(s,\bar{a},q_0)   \\
    &=((2q-1)v + 1-q) e^\star(s,\bar{a},q_1) + \big(1- ((2q-1)v + 1-q)\big) e^\star(s,\bar{a},q_0)   .
\end{align*}
The signal probabilities are independent of $s$.
Besides, effort $e^\star(s,\bar{a},q_1)$ and $e^\star(s,\bar{a},q_0)$ are both decreasing in $s$ (Lemma~\ref{lem:exante_effort}), it follows that the expected effort $e^\star_b(s, \bar{a}, q)$ is decreasing in $s$.

As the utility-maximizing effort $e^\star(s,\bar{a},q)$ is decreasing in $q$ (Lemma~\ref{lem:exante_effort}) and it holds that $q_1\geq q_0$, we have $e^\star(s,\bar{a},q_1)\leq e^\star(s,\bar{a},q_0)$. Note that
\begin{align*}
    e^\star_b(s, \bar{a}, q) &= \Big(q(2v(s)-1) + 1-v(s)\Big) e^\star(s,\bar{a},q_1) + \Big(1- \big(q(2v(s)-1) + 1-v(s)\big)\Big)  e^\star(s,\bar{a},q_0)\\
    &=\underbrace{\big(q(2v-1) + 1-v\big)}_{\text{increasing}} \underbrace{e^\star(s,\bar{a},q_1)}_{\text{smaller term}} +  \underbrace{\Big(1-\big(q(2v-1) + 1-v\big)\Big)}_{\text{decreasing}}  \underbrace{e^\star(s,\bar{a},q_0)}_{\text{larger term}}.
\end{align*}
Both $q_1$ and $q_0$ are increasing in $q$, thus $e^\star(s,\bar{a},q_1)$ and $e^\star(s,\bar{a},q_0)$ are decreasing in $q$. As the weight on the smaller component increases in $q$ and the weight on the larger component decreases in $q$, it follows that the expected effort $e^{\star}_b(s, \bar{a}, q)$ is decreasing in $q$.
\end{proof}

\begin{proposition}\label{prop:constant_verification_unimodality}
    Consider any concave production function $p(x)$, linear cost function $c(e)=\gamma e$, and increasing transition function $\lambda(e)$. If the verification accuracy is a constant, i.e., $v(s)=v \geq 1/2$, then for any $\bar{a}\geq 0$, $\mu>0$, and $S\in {\cal S}$, the steady-state skill distribution $\boldsymbol{\pi}(\mu, \bar{a},q,S)$ is unimodal.
\end{proposition}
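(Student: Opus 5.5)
This is a direct combination of Proposition~\ref{prop:unreliable_constant_effort} with Lemma~\ref{lem:bayes_effort_decrease}.

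\emph{Step 1: effort is decreasing in skill.} With constant verification accuracy $v(s)=v\geq 1/2$, Proposition~\ref{prop:unreliable_constant_effort} shows that the expected Bayesian effort $e_b^\star(s,\bar a,q)$ is (weakly) decreasing in $s$ for every $\bar a\geq 0$ and every $q$. The reason is that both posteriors $q_1,q_0$ and the signal probabilities no longer depend on $s$. Hence $e_b^\star$ is a fixed convex combination of $e^\star(s,\bar a,q_1)$ and $e^\star(s,\bar a,q_0)$, and each of these is decreasing in $s$ by Lemma~\ref{lem:exante_effort}. This holds for any concave $p$ and linear cost, so nothing specific to the piecewise linear $p$ is needed.

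\emph{Step 2: apply Lemma~\ref{lem:bayes_effort_decrease}.} That lemma assumes only that $\lambda$ is increasing and that $e_b^\star(\cdot,\bar a,q)$ is decreasing in $s$. It then concludes unimodality of $\boldsymbol{\pi}(\mu,\bar a,q,S)$ for all $\mu>0$ and $S\in{\cal S}$. The mechanism is as follows. By Proposition~\ref{prop:bdc_expression_s5}, $\pi_{k+1}/\pi_k=\lambda(e_b^\star(s_k,\bar a,q))/\mu$. Since $s_k$ is nondecreasing in $k$, $e_b^\star$ is decreasing in $s$, and $\lambda$ is increasing, this ratio is nonincreasing in $k$. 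It therefore crosses $1$ at most once, so $\pi_k$ first increases and then decreases.

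The only point needing care is that Lemma~\ref{lem:bayes_effort_decrease}'s proof does not rely on the piecewise linear production function; it uses only monotonicity of effort. Ties in the ratio, or plateaus where the ratio equals $1$, produce flat segments rather than a second peak, so unimodality in the paper's sense of at most one peak is preserved. I expect no real obstacle beyond this. Quantifying over all $\bar a\geq 0$, $\mu>0$ and $S$ is immediate, because Step 1 holds uniformly in $\bar a$ and Step 2 holds uniformly in $\mu$ and $S$.
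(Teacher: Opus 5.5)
Your proposal is correct and matches the paper's proof. The paper's argument is exactly the two-line combination of Proposition~\ref{prop:unreliable_constant_effort} (effort is decreasing in $s$ under constant $v$) with Lemma~\ref{lem:bayes_effort_decrease} (decreasing effort implies unimodality). Your two extra remarks are accurate and make explicit points the paper leaves implicit: the lemma's argument uses only monotonicity of effort rather than the piecewise linear $p$ of Section~\ref{sec:AI_literacy}, and plateaus where the ratio equals $1$ give flat segments rather than a second peak.
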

\begin{proof}[Proof of Proposition~\ref{prop:constant_verification_unimodality}]
    Proposition~\ref{prop:unreliable_constant_effort} implies that the expected effort $e^{\star}_b(s, \bar{a}, q)$ is decreasing in skill $s$. Combining with Lemma~\ref{lem:bayes_effort_decrease}, the skill distribution $\boldsymbol{\pi}(\mu, \bar{a},q,S)$ is unimodal.
\end{proof}

\section{Extension to convex cost functions (Remark~\ref{rmk:cvx_cost})}\label{app:cvx_cost}
In this section, we discuss the convex cost function. Suppose the production function $p(x)\geq 0$ is strictly increasing, strictly concave, continuous, and twice differentiable, and the cost function $c(e)\geq 0$ is strictly increasing, strictly convex, continuous, and twice differentiable. 

\subsection{Basic model (Section~\ref{sec:basic})}
We start with the extension of the basic model. The utility-maximizing effort and corresponding productivity are given by
\begin{align*}
    e^\star(s,\theta)=\arg\max_{e\geq 0} \left(p(s+e+a) -c(e)\right) \quad \text{and} \quad p^\star(s,a)=p(s+e^\star(s,a)+a).
\end{align*}
In this section, we define a critical level $x_{c}^\star=x_{c}^\star(p,c)$ as the largest input level for which the marginal gain from production is at least as large as the marginal cost at zero effort:
$$x_{c}^\star = \max\{x:p'(x) \geq c'(0)\}.$$
Note that when $c$ is a linear function, the defined critical value $x_{c}^\star$ coincides with the definition in Section~\ref{sec:basic}, i.e., $c(e)=\gamma e$ implies $x_{c}^\star(p,c)=x^\star(p,\gamma)$ We further assume that $\lim_{x\to\infty} p'(x)<\lim_{x\to\infty} c'(x)$ to ensure boundedness of utility-maximizing effort as we will see in \eqref{eq:foc_cvx_cost}.

We have the following monotonicity result, as an extension of Corollary~\ref{cor:effort_productivity}.
\begin{proposition}\label{prop:effort_productivity_cvx_cost}
The effort level $e^\star(s,a)$ (weakly) decreases in $s+a$. The productivity $p^\star(s,a)$ strictly increases in $s+a$.
\end{proposition}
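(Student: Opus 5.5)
We prove Proposition~\ref{prop:effort_productivity_cvx_cost} by writing the utility as a function of the combined input $z:=s+a$ and the effort $e$, namely $g(e,z):=p(z+e)-c(e)$. Since $p$ is strictly concave and $c$ is strictly convex, $g(\cdot,z)$ is strictly concave in $e$, so the maximizer over $e\ge 0$ is unique. The first-order (KKT) conditions then give two regimes.

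\emph{Interior regime.} If $p'(z)>c'(0)$, i.e.\ $z<x_c^\star$ (up to the boundary point), the optimum is interior and solves
\begin{equation}\label{eq:foc_cvx_cost}
p'(z+e^\star)=c'(e^\star).
\end{equation}
The assumption $\lim_{x\to\infty}p'(x)<\lim_{x\to\infty}c'(x)$ guarantees that such a solution exists and is finite.

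\emph{Corner regime.} If $p'(z)\le c'(0)$, then $e^\star=0$.

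\textbf{Effort monotonicity.} In the interior regime I would use the implicit function theorem on \eqref{eq:foc_cvx_cost}. Differentiating gives
\[
\frac{de^\star}{dz}=\frac{p''(z+e^\star)}{c''(e^\star)-p''(z+e^\star)}.
\]
The denominator is positive, because strict convexity of $c$ and strict concavity of $p$ make $c''-p''>0$; I would argue this either generically or through a monotone-comparative-statics argument if the second derivatives vanish at isolated points. The numerator satisfies $p''\le 0$. Hence $de^\star/dz\in(-1,0]$.

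To avoid relying on second derivatives, I would instead run an ordinal argument. Define $\phi(e,z):=p'(z+e)-c'(e)$. This function is strictly decreasing in $e$ and nonincreasing in $z$, so its root, or equivalently $\sup\{e\ge 0:\phi(e,z)\ge 0\}$, is nonincreasing in $z$. That establishes that effort weakly decreases in $z$. At the regime boundary, $e^\star\to 0$ continuously, and afterward $e^\star$ stays at $0$.

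\textbf{Productivity monotonicity.} This is the main step. It requires showing that total input $z+e^\star(z)$ strictly increases in $z$, since $p$ is strictly increasing. In the corner regime, total input equals $z$, so the claim is immediate. In the interior regime, take $z_1<z_2$ with optimal efforts $e_1\ge e_2$, and suppose for contradiction that $z_2+e_2\le z_1+e_1$. Then $p'(z_2+e_2)\ge p'(z_1+e_1)$. By \eqref{eq:foc_cvx_cost} this means $c'(e_2)\ge c'(e_1)$, so $e_2\ge e_1$ by strict convexity of $c$. Combined with $e_1\ge e_2$, this forces $e_1=e_2$, and then $z_2+e_2>z_1+e_1$, a contradiction.

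The mixed case, where $z_1$ is interior and $z_2$ is a corner point, needs a separate check: $z_2+0>z_1+e_1$ must hold. This follows because $z\mapsto z+e^\star(z)$ is continuous and strictly increasing on the interior region and meets the corner branch at $z=x_c^\star$, where $e^\star=0$.

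The main obstacle is handling these regime boundaries rigorously and making the strictness claim uniform, i.e.\ stating cleanly that $z+e^\star(z)$ is strictly increasing across the whole line. The contradiction argument above addresses this without needing differentiability of $e^\star$.
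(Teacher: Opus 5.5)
Your proposal is correct and follows essentially the same route as the paper: it splits into the FOC and corner cases, gets effort monotonicity from the comparative statics of the root of $p'(z+e)=c'(e)$, and shows the total input $z+e^\star(z)$ strictly increases across the interior, mixed, and corner cases (your contradiction argument is equivalent to the paper's use of the decreasing ratio $p'(x_1+e)/c'(e)$). The one soft spot is that you assert, without proof, that $e^\star$ is continuous at $x_c^\star$ in the mixed case; you can avoid this entirely, since $z_1$ interior forces $e_1>0$, so $p'(z_1+e_1)=c'(e_1)>c'(0)\ge p'(z_2)$ and hence $z_1+e_1<z_2$ (the same fact repairs the paper's mixed-case line, where $p^\star(s_1,a_1)$ should read $p(x_1+e^\star(s_1,a_1))$ rather than $p(x_1)$).
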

\begin{proof}[Proof of Proposition~\ref{prop:effort_productivity_cvx_cost}]
For a given AI assistance level $a$ and a given skill level $s$, the following case distinction characterizes the agent's effort and corresponding productivity:
\begin{enumerate}
    \item $s+a \leq x_{c}^\star$. The concavity of $p$ and convexity of $c$ imply that the utility is increasing-then-decreasing in $e$. The agent's utility-maximizing effort is a non-negative value characterized by the first-order condition: 
    \begin{equation}\label{eq:foc_cvx_cost}
        p'(s+e^\star(s,a)+a) = c'(e^\star(s,a)).
    \end{equation}
    Here, the utility-maximizing effort $e^\star(s,a)$ is finite as the condition $\lim_{x\to\infty} p'(x)<\lim_{x\to\infty} c'(x)$ holds.
    \item $s+a > x_{c}^\star$. The concavity of $p$ and convexity of $c$ imply that the agent's utility is decreasing in $e$ for all non-negative effort levels. Thus, the agent's utility-maximizing effort is $e^\star(s,a)=0$. The corresponding productivity is given by $p^\star(s,a)=p(s+a)$.
\end{enumerate}
Consider $x_1=s_1+a_1$ and $x_2=s_2+a_2$ where $x_1<x_2$. We have the following case distinction:
\begin{enumerate}
\item $x_1<x_2\leq x_{c}^\star$. The first-order condition \eqref{eq:foc_cvx_cost} implies 
\begin{equation}\label{eq:foc_cvx_cost_compare}
    1=\dfrac{p'(x_1+e^\star(s_1,a_1))}{c'(e^\star(s_1,a_1))}=\dfrac{p'(x_2+e^\star(s_2,a_2))}{c'(e^\star(s_2,a_2))}< \dfrac{p'(x_1+e^\star(s_2,a_2))}{c'(e^\star(s_2,a_2))},
\end{equation}
where the last inequality holds due to the concavity of $p$. Again, the concavity of $p$ and convexity of $c$ imply that $h(e):=\frac{p'(x_1+e)}{c'(e)}$ is a decreasing function in $e$. Since $h(e^\star(s_1,a_1))=1$ and $h(e^\star(s_2,a_2))>1$ as implied by \eqref{eq:foc_cvx_cost_compare}, we have $e^\star(s_1,a_1) > e^\star(s_2,a_2)$. Thus, the convexity of $c$ implies that $c'(e^\star(s_1,a_1))>c'(e^\star(s_2,a_2))$. Then the equality in \eqref{eq:foc_cvx_cost_compare} implies that $p'(x_1+e^\star(s_1,a_1))>p'(x_2+e^\star(s_2,a_2))$. Finally, we have $p(x_1+e^\star(s_1,a_1))<p(x_2+e^\star(s_2,a_2))$ since $p$ is concave, i.e., $p^\star(s_1,a_1) < p^\star(s_2,a_2)$.
\item $x_{c}^\star <x_1<x_2$ (when $x_{c}^\star\neq \infty$). Then the effort and productivity are given by $e^\star(s_1,a_1)=e^\star(s_2,a_2)=0$ and $p^\star(s_1,a_1) = p(x_1) <p(x_2)=p^\star(s_2,a_2)$.
\item $x_1\leq x_{c}^\star <x_2$ (when $x_{c}^\star\neq \infty$). The two cases above imply that the effort and productivity satisfy $e^\star(s_1,a_1)>0=e^\star(s_2,a_2)$ and $p^\star(s_1,a_1) = p(x_1)\leq p(x_{c}^\star) <p(x_2)=p^\star(s_2,a_2)$.
\end{enumerate}
Note that the last two cases exist if and only if $x_{c}^\star \neq \infty$. Combining all cases, we complete the proof.
\end{proof}

\subsection{Model of skill development (Section~\ref{sec:bdc})}
We then endogenize the skill development. We consider a birth-death chain with $2$ skill states $s_1<s_2$. At each state $s_k$, the agent's effort level is given by 
$$e^\star(s_k,a)=\arg\max_{e\geq 0} \left(p(s_k+e+a) - c(e)\right).$$ 
The transition rate of each state is given by $\mathcal{K}_{s_1 s_2}(a)= \lambda(e^\star(s_1,a))$ and $\mathcal{K}_{s_2 s_1}(a)= \mu$, where $\mu > 0$ and $\lambda(e)>0$ is an increasing, and differentiable transition function.
Recall that 
$$x_{c}^\star = \max\{x:p'(x) \geq c'(0)\}.$$ 
We identify a condition for the curvature of production and cost functions, under which the productivity paradox persists and the sensitivity gap is generalized.
\begin{condition}\label{cond:cvx_two_state}
Suppose that $z_0 + z_2 \mu^2 < z_1 \mu$, where
\begin{align*}
    & z_2 = \frac{p'(x_{c}^\star)}{\lambda(0)\big(p(x_{c}^\star +s_2-s_1)-p(x_{c}^\star)\big)} \frac{c''(0)}{c''(0)- p''(x_{c}^\star)}, \\
    & z_1 = \frac{\lambda'(0)}{\lambda(0)}\frac{-p''(x_{c}^\star)}{c''(0)- p''(x_{c}^\star)} -\frac{p'(x_{c}^\star +s_2-s_1)}{p(x_{c}^\star +s_2-s_1)-p(x_{c}^\star)} -\frac{p'(x_{c}^\star)}{p(x_{c}^\star +s_2-s_1)-p(x_{c}^\star)} \frac{c''(0)}{c''(0)- p''(x_{c}^\star)}, \\
    & z_0 = \lambda(0) \frac{p'(x_{c}^\star +s_2-s_1)}{p(x_{c}^\star +s_2-s_1)-p(x_{c}^\star)}.
\end{align*}  
\end{condition}
We note that $z_2>0$, $z_0>0$, and the condition can be digested as follows.
\begin{itemize}
\item For a large class of cost functions, Condition~\ref{cond:cvx_two_state} reduces to a positive sensitivity gap and a large decay rate: consider any convex cost function with $c''(0)=0$, thus $z_2=0$ and $z_1$ recovers the sensitivity gap defined in Definition~\ref{def:sensitivity_general}. Condition~\ref{cond:cvx_two_state} reduces to 
$$z_1=\frac{\lambda'(0)}{\lambda(0)} -\frac{p'(x_{c}^\star+s_2-s_1)}{p(x_{c}^\star+s_2-s_1)-p(x_{c}^\star)}>0 \quad \text{and} \quad \mu> \frac{z_0}{z_1},$$
as in Proposition~\ref{prop:bdc_prod_n=2}; see Example~\ref{ex:positive_sensitivity_gap} for details on when the positive sensitivity gap can hold.
\item When $z_2\neq 0$, Condition~\ref{cond:cvx_two_state} is equivalent to $z_1>2\sqrt{z_2z_0}$ and $\mu\in \big(\frac{z_1-\sqrt{z_1^2-4z_2z_0}}{2z_2}, \frac{z_1+\sqrt{z_1^2-4z_2z_0}}{2z_2}\big)$.
\end{itemize}

Under Condition~\ref{cond:cvx_two_state}, the productivity paradox persists, as stated next.
\begin{proposition}\label{prop:bdc_n=2_cvx_cost}
     Consider any concave production function $p(x)$, convex cost function $c(e)$, increasing transition function $\lambda(e)$, and two states $s_1<s_2$. Suppose $x_{c}^\star<\infty$ and $s_1<x_{c}^\star$. If Condition~\ref{cond:cvx_two_state} holds, then there exists $\tau<x_{c}^\star-s_1$ such that the steady-state productivity $\mathcal{P}(a)$ is decreasing on $(\tau, x_{c}^\star-s_1]$. 
\end{proposition}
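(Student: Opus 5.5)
The plan is to compute the left derivative of $\mathcal{P}$ at the right endpoint $a_0:=x_{c}^\star-s_1$ explicitly and show that Condition~\ref{cond:cvx_two_state} is exactly the statement that this derivative is negative. Continuity of $\mathcal{P}'$ on a left neighborhood of $a_0$ then gives a $\tau<a_0$ such that $\mathcal{P}$ is decreasing on $(\tau,a_0]$. This mirrors the proof of Proposition~\ref{prop:bdc_prod_n=2}, but effort at the low state now responds smoothly to $a$ rather than one-for-one.

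\textbf{Step 1 (local structure of efforts).} Restrict to $a\in(\max(x_{c}^\star-s_2,\,x_{c}^\star-s_1-\delta),\,a_0]$ for a small $\delta>0$. On this range $s_2+a>x_{c}^\star$, so by the proof of Proposition~\ref{prop:effort_productivity_cvx_cost} we have $e^\star(s_2,a)=0$ and $p^\star(s_2,a)=p(s_2+a)$. At the low state, $s_1+a\le x_{c}^\star$, so $e_1(a):=e^\star(s_1,a)$ solves the first-order condition \eqref{eq:foc_cvx_cost}, $p'(s_1+e_1+a)=c'(e_1)$, with $e_1(a_0)=0$ because $p'(x_{c}^\star)=c'(0)$ by continuity. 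The map $(e,a)\mapsto p'(s_1+e+a)-c'(e)$ has $e$-partial equal to $p''-c''<0$, using strict concavity of $p$ and strict convexity of $c$. The implicit function theorem therefore makes $e_1$ continuously differentiable (one-sidedly at $a_0$), with
\[
e_1'(a)=\frac{p''(s_1+e_1+a)}{c''(e_1)-p''(s_1+e_1+a)},\qquad 1+e_1'(a_0)=\frac{c''(0)}{c''(0)-p''(x_{c}^\star)}.
\]

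\textbf{Step 2 (derivative at the endpoint).} Write $\lambda:=\lambda(e_1(a))$, $\Delta:=s_2-s_1$, and $D:=p(x_{c}^\star+\Delta)-p(x_{c}^\star)>0$. Proposition~\ref{prop:bdc_expression_s3} gives
\[
\mathcal{P}(a)=\frac{\mu\, p(s_1+e_1+a)+\lambda\, p(s_2+a)}{\mu+\lambda}.
\]
Differentiating by the quotient rule, the numerator of $\mathcal{P}'(a_0^-)$ (after cancelling the terms involving $p(s_2+a)\lambda' e_1'$) is
\[
\mu(\mu+\lambda(0))\,p'(x_{c}^\star)(1+e_1')+\lambda(0)(\mu+\lambda(0))\,p'(x_{c}^\star+\Delta)+\mu\,\lambda'(0)\,e_1'\,D .
\]
Dividing by the positive quantity $\mu\lambda(0)D$ and substituting Step 1, this should collapse to $\mu z_2-z_1+z_0/\mu$ with $z_0,z_1,z_2$ as in Condition~\ref{cond:cvx_two_state}. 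The terms match as follows:
\begin{itemize}
\item the $\mu^2$ term gives $\mu z_2$;
\item the $\mu\lambda(0)$ cross terms give the two $p'/D$ pieces of $-z_1$;
\item the $\lambda(0)^2$ term gives $z_0/\mu$;
\item the $\lambda'(0)e_1'$ term gives $\frac{\lambda'(0)}{\lambda(0)}\cdot\frac{p''(x_{c}^\star)}{c''(0)-p''(x_{c}^\star)}$, the remaining piece of $-z_1$.
\end{itemize}
Hence $\mathcal{P}'_-(a_0)<0$ if and only if $z_0+z_2\mu^2<z_1\mu$, which is exactly Condition~\ref{cond:cvx_two_state}.

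\textbf{Step 3 (conclusion and main obstacle).} The main obstacle is not the algebra of Step 2 but justifying that $\mathcal{P}'$ is continuous on a left neighborhood of $a_0$; a negative value at the single point $a_0$ is not enough by itself. Since $e_1\in C^1$ by Step 1 and $p,\lambda$ are differentiable, $\mathcal{P}$ is $C^1$ on $(a_0-\delta,a_0]$ provided $\lambda'$ is continuous. If only differentiability of $\lambda$ is assumed, I would first try to use monotonicity of $\lambda$ together with a mean-value argument on the difference quotient $\big(\mathcal{P}(a_0)-\mathcal{P}(a)\big)/(a_0-a)$, and otherwise add the mild assumption that $\lambda\in C^1$. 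Once continuity is in hand, $\mathcal{P}'<0$ on some $(\tau,a_0]$ with $\tau\ge\max(x_{c}^\star-s_2,\,a_0-\delta)$, so $\mathcal{P}$ is decreasing there, which is the claimed statement.
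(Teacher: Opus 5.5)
Your proposal is correct and is essentially the paper's proof. You restrict to $a\in(x_{c}^\star-s_2,\,x_{c}^\star-s_1]$ where $e^\star(s_2,a)=0$, implicitly differentiate the first-order condition to get $\partial e^\star(s_1,a)/\partial a=p''/(c''-p'')$, evaluate $\mathcal{P}'_{-}(x_{c}^\star-s_1)$ and rearrange it to $z_0+z_2\mu^2<z_1\mu$ (your Step~2 algebra checks out), and then use continuity of $\mathcal{P}'$.

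On Step~3: the paper simply asserts that continuity, but it does require $\lambda'$ (and $p''$, $c''$) to be continuous, so your mean-value fallback cannot work and adding $\lambda\in C^1$ is the right repair. For example, $\lambda(e)=1+2e+e^2\sin(1/e)$ near $0$ is increasing and differentiable with $\lambda'(0)=2$, yet $\lambda'(1/(2\pi k))=1$. For suitable parameters this gives $\mathcal{P}'>0$ at points accumulating at $x_{c}^\star-s_1$ even though Condition~\ref{cond:cvx_two_state} holds.
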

\begin{proof}[Proof of Proposition~\ref{prop:bdc_n=2_cvx_cost}]
The steady-state productivity can be written as
$$\mathcal{P}(a)= \dfrac{\mu}{\mu+\lambda(e^\star(s_1,a))}\cdot p(s_1+e^\star(s_1,a)+a) + \dfrac{\lambda(e^\star(s_1,a))}{\mu+\lambda(e^\star(s_1,a))}\cdot p(s_2+e^\star(s_2,a)+a).$$
We focus on the regime where $s_1+a \leq x_{c}^\star<s_2+a$, i.e., $a\in (x_{c}^\star - s_2, x_{c}^\star-s_1]$. Then the analysis of Proposition~\ref{prop:effort_productivity_cvx_cost} implies that $e^\star(s_2,a)=0$.
Differentiating both sides of the first-order condition \eqref{eq:foc_cvx_cost} with respect to $a$ and substituting $s=s_1$, we obtain that
$$ p''(s_1+e^\star(s_1,a) +a) \cdot \left( \frac{\partial e^\star(s_1,a) }{\partial a} + 1 \right) = c''(e^\star(s_1,a))\frac{\partial e^\star(s_1,a) }{\partial a}.$$
Thus, the derivative of $e^\star(s_1,a)$ versus $a$ is given by
\begin{equation*}
    \dfrac{\partial e^\star(s_1,a) }{\partial a} = \dfrac{ p''(s_1+e^\star(s_1,a) +a)}{c''(e^\star(s_1,a) )- p''(s+e^\star(s_1,a)+a)}.
\end{equation*}
Recall that $e^\star(s_2,a)=0$, we now take the derivative of $\mathcal{P}$, which is given by
\begin{align*}  \mathcal{P}'(a)&=\dfrac{\mu}{\mu+\lambda(e^\star(s_1,a))}\cdot  \frac{\partial}{\partial a}p(s_1+e^\star(s_1,a)+a) + \dfrac{\lambda(e^\star(s_1,a))}{\mu+\lambda(e^\star(s_1,a))}\cdot p'(s_2+a)\\
    & + \dfrac{\mu \lambda'(e^\star(s_1,a))}{(\mu+\lambda(e^\star(s_1,a)))^2}\frac{\partial e^\star(s_1,a)}{\partial a}\Big(p(s_2+a) - p(s_1+e^\star(s_1,a)+a)\Big)\\
    & =\dfrac{\mu}{\mu+\lambda(e^\star(s_1,a))}\cdot  p'(s_1+e^\star(s_1,a)+a) (\frac{\partial e^\star(s_1,a)}{\partial a} + 1) + \dfrac{\lambda(e^\star(s_1,a))}{\mu+\lambda(e^\star(s_1,a))}\cdot p'(s_2+a)\\
    & + \dfrac{\mu \lambda'(e^\star(s_1,a))}{(\mu+\lambda(e^\star(s_1,a)))^2} \frac{\partial e^\star(s_1,a)}{\partial a}\Big(p(s_2+a) - p(s_1+e^\star(s_1,a)+a)\Big).
\end{align*}
Substituting the expression of $\frac{\partial e^\star(s_1,a) }{\partial a}$, we have
\begin{align*}
    \mathcal{P}'(a)    &=\dfrac{\mu}{\mu+\lambda(e^\star(s_1,a))}\cdot  p'(s_1+e^\star(s_1,a)+a) \dfrac{c''(e^\star(s_1,a))}{c''(e^\star(s_1,a))- p''(s_1+e^\star(s_1,a)+a)}\\
    &+ \dfrac{\lambda(e^\star(s_1,a))}{\mu+\lambda(e^\star(s_1,a))}\cdot p'(s_2+a)\\
    & + \dfrac{\mu \lambda'(e^\star(s_1,a))}{(\mu+\lambda(e^\star(s_1,a)))^2} \dfrac{p''(s_1+e^\star(s_1,a) +a)}{c''(e^\star(s_1,a))- p''(s_1+e^\star(s_1,a)+a)} \Big(p(s_2+a) - p(s_1+e^\star(s_1,a)+a)\Big).
\end{align*}
Evaluating the derivative at the right endpoint of the regime, i.e., $a=x_{c}^\star-s_1$, as $e^\star(s_1,x_{c}^\star-s_1)=0$, we have
\begin{align*}
     \mathcal{P}'_{-}(x_{c}^\star -s_1)&=\left(\dfrac{\mu}{\mu+\lambda(0)}\cdot  p'(x_{c}^\star) \dfrac{c''(0)}{c''(0)- p''(x_{c}^\star)} + \dfrac{\lambda(0)}{\mu+\lambda(0)}\cdot p'(x_{c}^\star +s_2-s_1)\right)\\
    & + \dfrac{\mu \lambda'(0)}{(\mu+\lambda(0))^2} \dfrac{p''(x_{c}^\star)}{c''(0)- p''(x_{c}^\star)} \big(p(x_{c}^\star +s_2-s_1) - p(x_{c}^\star)\big).
\end{align*}
By rearrangement, this term is negative if and only if
\begin{align*}
     &\mu^2 \left(p'(x_{c}^\star) \dfrac{c''(0)}{c''(0)- p''(x_{c}^\star)}\right)  + \lambda^2(0) p'(x_{c}^\star +s_2-s_1) \\
     < &\mu \left(-\lambda'(0)\dfrac{p''(x_{c}^\star)}{c''(0)- p''(x_{c}^\star)}\big(p(x_{c}^\star +s_2-s_1)-p(x_{c}^\star)\big) -\lambda(0)p'(x_{c}^\star +s_2-s_1) -\lambda(0) p'(x_{c}^\star) \dfrac{c''(0)}{c''(0)- p''(x_{c}^\star)}\right).
\end{align*}
Dividing both sides by $\lambda(0) \big(p(x_{c}^\star +s_2-s_1)-p(x_{c}^\star)\big)$, this corresponds to 
$$z_0 + z_2 \mu^2 < z_1 \mu.$$ 
Thus, under Condition~\ref{cond:cvx_two_state}, the sign of $\mathcal{P}'_{-}(x_{c}^\star -s_1)$ and the continuity of the derivative imply that there exists $\tau<x_{c}^\star-s_1$, such that $\mathcal{P}'(a)$ is negative on $(\tau, x_{c}^\star-s_1)$. Then the steady-state productivity $\mathcal{P}(a)$ is decreasing on $(\tau, x_{c}^\star-s_1]$. 
\end{proof}

\subsection{Model of unreliable AI assistance (Section~\ref{sec:AI_unreliability})}
We then capture uncertainty in AI assistance. The utility-maximizing effort and corresponding productivity are given by
\begin{align*}
    & e^\star(s,\bar{a},q):=\arg\max_{e\geq 0} \left(q\cdot p(s+e+\bar{a}) + (1-q)\cdot p(s+e)-c(e)\right),\\
    & p^\star(s,\bar{a},q):= q\cdot p(s+e^\star(s,\bar{a},q)+\bar{a}) + (1-q)\cdot p(s+e^\star(s,\bar{a},q)).
\end{align*}
Recall that the absolute risk aversion is defined as $A(x)=-\frac{p''(x)}{p'(x)}$ (Definition~\ref{def:risk_averse}). We identify a condition for the absolute risk aversion and cost curvature, under which the productivity paradox persists and the IARA condition is generalized.
\begin{condition}\label{cond:cvx_unreliable}
Suppose that for any $s\geq 0$, $0<q<1$ and sufficiently large $\bar{a}$, the absolute risk aversion satisfies
$$A(s+e^\star(s,\bar{a},q) +\bar{a}) - A(s+e^\star(s,\bar{a},q))>\frac{c''(e^\star(s,\bar{a},q))}{(1-q) p'(s+e^\star(s,\bar{a},q)) }.$$
\end{condition}

\begin{example}\label{ex:cvx_unreliable}
Define $C^3(\mathbb{R}_{\geq 0})$ as the set of functions that are three times continuously differentiable. Condition~\ref{cond:cvx_unreliable} holds for the bounded function class $\Omega$, where
$$\Omega=\{p, c\in C^3(\mathbb{R}_{\geq 0}):  \exists B_1>0, B_2>0 \text{ s.t. } A'(x)\geq B_1, \forall x\geq 0 \text{ and }\limsup_{e\to \infty}\frac{c''(e)}{c'(e)}\leq B_2\}.$$
In particular, the Gaussian and expo-power production function, and any cost function whose growth in the tail is at most exponential, are included in $\Omega$.
\end{example}

Under Condition~\ref{cond:cvx_unreliable}, the productivity paradox can persist, as stated next.
\begin{proposition}\label{prop:ynreliable_cvx_cost}
    Consider any concave production function $p(x)$ and any convex cost function $c(e)$. Suppose $(1-q)p'(s)\geq c'(0)$ and $0<q<1$.
    \begin{enumerate}
        \item If $p$ satisfies IARA and Condition~\ref{cond:cvx_unreliable} holds, productivity $p^\star(s, \bar{a}, q)$ is decreasing in $\bar{a}$ when $\bar{a}$ is sufficiently large.
        \item If $p$ satisfies DARA, productivity $p^\star(s, \bar{a}, q)$ is  increasing in $\bar{a}$. 
    \end{enumerate}
\end{proposition}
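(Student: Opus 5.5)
The plan is to mirror the proof of Lemma~\ref{lem:unreliable_productivity_derivative}, replacing the constant marginal cost $\gamma$ with $c'(e)$. The extra curvature term $c''$ then appears in the numerator of the productivity derivative, and Condition~\ref{cond:cvx_unreliable} is exactly what makes that numerator keep its sign.

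\textbf{Step 1: interior effort and differentiability.} The marginal expected utility at $e=0$ is $q p'(s+\bar{a})+(1-q)p'(s)-c'(0)$. Since $p'>0$, it is at least $(1-q)p'(s)-c'(0)\geq 0$ under the stated hypothesis. By strict concavity of $p$ and strict convexity of $c$, the expected utility is strictly concave in $e$. Together with $\lim p'<\lim c'$, this gives a unique effort $e^\star=e^\star(s,\bar{a},q)$ characterized by the first-order condition
\begin{equation*}
q\,p'(s+e^\star+\bar{a})+(1-q)\,p'(s+e^\star)=c'(e^\star)
\end{equation*}
for every $\bar{a}$. (In the boundary case where equality holds only at $e=0$, the same formula is used with $e^\star=0$.)

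Write $p'_h,p''_h$ for evaluation at $s+e^\star+\bar{a}$ and $p'_\ell,p''_\ell$ for evaluation at $s+e^\star$. Set $D:=q p''_h+(1-q)p''_\ell-c''(e^\star)$, which is strictly negative. By the implicit function theorem, $e^\star$ is differentiable in $\bar{a}$ with
\begin{equation*}
\frac{\partial e^\star}{\partial \bar{a}}=-\frac{q\,p''_h}{D}.
\end{equation*}

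\textbf{Step 2: sign of the productivity derivative.} Differentiating $p^\star$ gives $\partial p^\star/\partial\bar{a}=q p'_h+(q p'_h+(1-q)p'_\ell)\,\partial e^\star/\partial\bar{a}$. Using the first-order condition, the factor $q p'_h+(1-q)p'_\ell$ equals $c'(e^\star)$. Substituting the expression for $\partial e^\star/\partial\bar{a}$ and expanding $c'(e^\star)$ back into $p'$ terms, the cross terms combine as in the linear case. This yields
\begin{equation*}
\frac{\partial p^\star}{\partial \bar{a}}=\frac{q\,p'_h}{D}\Big((1-q)\,p'_\ell\,\big(A(s+e^\star+\bar{a})-A(s+e^\star)\big)-c''(e^\star)\Big).
\end{equation*}
Since $q p'_h>0$ and $D<0$, the derivative is negative if and only if $A(s+e^\star+\bar{a})-A(s+e^\star)>c''(e^\star)/((1-q)p'_\ell)$.

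\textbf{Step 3: the two cases.} For part 1, Condition~\ref{cond:cvx_unreliable} asserts exactly this inequality for all sufficiently large $\bar{a}$, so $p^\star$ is decreasing there. For part 2, DARA gives $A(s+e^\star+\bar{a})\leq A(s+e^\star)$ and $c''\geq 0$. Hence the bracket is nonpositive, and in fact strictly negative because $c''>0$ or $A$ is strictly decreasing. The derivative is therefore positive for every $\bar{a}>0$.

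The main obstacle is Step 1: verifying that the hypothesis $(1-q)p'(s)\geq c'(0)$ keeps the optimum interior uniformly in $\bar{a}$, so that no corner case arises as $\bar{a}$ grows, and that $D\neq 0$ so the implicit-function argument is valid. Once that is secured, Step 2 is a direct algebraic extension of Lemma~\ref{lem:unreliable_productivity_derivative}.
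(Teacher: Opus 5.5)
Your proposal is correct and follows essentially the same route as the paper. The paper also obtains an interior first-order condition from $(1-q)p'(s)\geq c'(0)$, differentiates it implicitly to get $\partial e^\star/\partial\bar{a}$, and writes $\partial p^\star/\partial\bar{a}$ as $q\,p'(s+e^\star+\bar{a})\bigl(c''(e^\star)-(1-q)p'(s+e^\star)\bigl(A(s+e^\star+\bar{a})-A(s+e^\star)\bigr)\bigr)$ over a positive denominator; Condition~\ref{cond:cvx_unreliable} and DARA then fix the two signs, exactly as in your Step 3. The obstacle you flag is already resolved by your own Step 1: $q\,p'(s+\bar{a})>0$ makes the marginal utility at $e=0$ strictly positive for every $\bar{a}$, so the corner case never arises.
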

\begin{proof}[Proof of Proposition~\ref{prop:ynreliable_cvx_cost}]
Recall that the utility-maximizing effort is given by 
$$e^\star(s,\bar{a},q):=\arg\max_{e\geq 0} \left(q\cdot p(s+e+\bar{a}) + (1-q)\cdot p(s+e)-c(e)\right).$$
As $(1-q)p'(s)\geq c'(0)$, the agent's effort is interior and characterized by the first-order condition:
\begin{equation}\label{eq:unreliable_foc_cvx_cost}
    q \cdot p'(s+e^\star (s,\bar{a},q)+\bar{a}) + (1-q) \cdot p'(s+e^\star (s,\bar{a},q)) = c'(e^\star (s,\bar{a},q)).
\end{equation}
For simplicity, write $e^\star=e^\star (s,\bar{a},q)$. Differentiating both sides of \eqref{eq:unreliable_foc_cvx_cost} with respect to $\bar{a}$, we obtain that
$$ q \cdot p''(s+e^\star +\bar{a}) \cdot \left( \frac{\partial e^\star}{\partial \bar{a}} + 1 \right)+ (1-q) \cdot p''(s+e^\star ) \cdot \frac{\partial e^\star}{\partial \bar{a}} = c''(e^\star)\frac{\partial e^\star}{\partial \bar{a}}.$$
Thus, the derivative of $e^\star$ versus $\bar{a}$ is given by
\begin{equation*}
    \dfrac{\partial e^\star}{\partial \bar{a}} = \dfrac{q \cdot p''(s+e^\star +\bar{a})}{c''(e^\star)- q \cdot p''(s+e^\star +\bar{a}) - (1-q) \cdot p''(s+e^\star)}.
\end{equation*}
We now differentiate $p^\star$ with respect to $\bar{a}$:
\begin{align*}
    \frac{\partial p^\star}{\partial \bar{a}}(s,\bar{a},q) & = q \cdot p'(s+e^\star + \bar{a}) \cdot \left( \frac{\partial e^\star}{\partial \bar{a}} + 1 \right)+ (1-q) \cdot p'(s+e^\star ) \cdot \frac{\partial e^\star}{\partial \bar{a}}\\
    & =\big(q \cdot p'(s+e^\star + \bar{a}) + (1-q) \cdot p'(s+e^\star)\big) \cdot \frac{\partial e^\star}{\partial \bar{a}} + q \cdot p'(s+e^\star + \bar{a}).
\end{align*}
Substituting the expression of $\dfrac{\partial e^\star}{\partial \bar{a}}$, we obtain
\begin{align*}
    \frac{\partial p^\star}{\partial \bar{a}}(s,\bar{a},q)& = \dfrac{ \big(q \cdot p'(s+e^\star+\bar{a}) + (1-q) \cdot p'(s+e^\star) \big) \cdot q \cdot p''(s+e^\star +\bar{a})}{c''(e^\star)- q \cdot p''(s+e^\star +\bar{a}) - (1-q) \cdot p''(s+e^\star)} + q \cdot p'(s+e^\star + \bar{a})  \\
    & = \dfrac{q c''(e^\star) p'(s+e^\star +\bar{a}) - q(1-q)  p'(s+e^\star) p'(s+e^\star +\bar{a})\big(A(s+e^\star +\bar{a}) - A(s+e^\star)\big)}{c''(e^\star)- q \cdot p''(s+e^\star +\bar{a}) - (1-q) \cdot p''(s+e^\star)}  \\
    & = \dfrac{q  p'(s+e^\star +\bar{a})  \Big(c''(e^\star) - (1-q) p'(s+e^\star) \big(A(s+e^\star +\bar{a}) - A(s+e^\star)\big)\Big)}{c''(e^\star)- q \cdot p''(s+e^\star +\bar{a}) - (1-q) \cdot p''(s+e^\star)}. 
\end{align*}
Define $h(s,\bar{a},q)= (1-q) p'(s+e^\star) \big(A(s+e^\star +\bar{a}) - A(s+e^\star)\big)- c''(e^\star)$. Then 
$$ \frac{\partial p^\star}{\partial \bar{a}}(s,\bar{a},q) =  - \dfrac{q  p'(s+e^\star +\bar{a}) h(s,\bar{a},q)}{c''(e^\star)- q \cdot p''(s+e^\star +\bar{a}) - (1-q) \cdot p''(s+e^\star)}. $$
The denominator is positive due to the concavity of $p$ and convexity of $c$. We then focus on the numerator.
\begin{enumerate}
\item $p$ satisfies IARA and Condition~\ref{cond:cvx_unreliable} holds. By Condition~\ref{cond:cvx_unreliable}, we have $h(s,\bar{a},q)>0$. Thus the derivative $\frac{\partial p^\star}{\partial \bar{a}}(s, \bar{a}, q)$ is negative negative for sufficiently large $\bar{a}$ and the productivity $p^\star(s, \bar{a}, q)$ is decreasing in $\bar{a}$.
\item $p$ satisfies DARA. Then together with the convexity of $c$, we have $h(s,\bar{a},q) < 0$ and the derivative $\frac{\partial p^\star}{\partial \bar{a}}(s, \bar{a}, q)$ is always positive and the productivity $p^\star(s, \bar{a}, q)$ is increasing in $\bar{a}$.
\end{enumerate}
\end{proof}

\begin{proof}[Proof of Example~\ref{ex:cvx_unreliable}]
Our goal is to prove Condition~\ref{cond:cvx_unreliable} holds for any production and cost functions that satisfies $A'(x)\geq B_1, \forall x\geq 0 $ and $\limsup_{e\to \infty}\frac{c''(e)}{c'(e)}\leq B_2$, for some $B_1>0, B_2>0$ (recall that $A(x)=-\frac{p''(x)}{p'(x)}$).

By the continuity of $c',c''$, and the condition $\limsup_{e\to \infty}\frac{c''(e)}{c'(e)}\leq B_2$, there exists $\varepsilon>0$ and $M>0$, such that $\frac{c''(e)}{c'(e)}\leq M$ for any $e\geq \varepsilon$. Given $\varepsilon$, there exists $M'>0$ such that $\sup_{e\leq\varepsilon}\frac{c''(e)}{p'(s+e)}<M'$.
We then investigate the sign of $h(s,\bar{a},q):= (1-q) p'(s+e^\star) \big(A(s+e^\star +\bar{a}) - A(s+e^\star)\big)- c''(e^\star)$. 
\begin{enumerate}
\item $e^\star(s,\bar{a},q)\geq \varepsilon$. The first-order condition \eqref{eq:unreliable_foc_cvx_cost} and the concavity of $p$ imply that 
$$p'(s+e^\star)>q \cdot p'(s+e^\star+\bar{a}) + (1-q) \cdot p'(s+e^\star) =c'(e^\star).$$
Since $A(s+e^\star +\bar{a}) - A(s+e^\star)>0$ for any $\bar{a}>0$, we have the following bound:
\begin{align*}
    h(s, \bar{a}, q) &>  (1-q) c'(e^\star) \big(A(s+e^\star +\bar{a}) - A(s+e^\star)\big) -c''(e^\star) \\
    & = c'(e^\star) \Big( (1-q) \big(A(s+e^\star +\bar{a}) - A(s+e^\star)\big) - \frac{c''(e^\star)}{c'(e^\star)} \Big)\\
    & \geq c'(e^\star) \big( (1-q) \bar{a} B_1 - M\big),
\end{align*}
where the last inequality holds by the boundedness condition. 
\item $e^\star(s,\bar{a},q)<\varepsilon$. Then we have the following bound:
\begin{align*}
    h(s, \bar{a}, q) &=   (1-q) p'(s+e^\star) \big(A(s+e^\star +\bar{a}) - A(s+e^\star)\big)- c''(e^\star)\\
    & = c'(e^\star) \Big( (1-q) \big(A(s+e^\star +\bar{a}) - A(s+e^\star)\big) - \frac{c''(e^\star)}{p'(s+e^\star)} \Big)\\
    & \geq c'(e^\star) \big( (1-q) \bar{a} B_1- M' \big),
\end{align*}
where the last inequality holds by the boundedness condition. 
\end{enumerate}
Combining the two cases, for sufficiently large $\bar{a}$ such that $\bar{a}>\frac{\max(M,M')}{(1-q)B_1}$, we have $ h(s, \bar{a}, q)>0$, which is exactly Condition~\ref{cond:cvx_unreliable} after rearrangement.
\end{proof}

\subsection{Numerics}
We conduct numerical experiments across multiple parametric specifications of production and cost functions to study the productivity paradoxes. For robustness check, we include both linear and convex cost functions. 

\paragraph{Model of skill development (Section~\ref{sec:bdc})} We numerically illustrate the model of skill development using a skill set with $N=4$ states, where $s_k=0.1 (k-1), k\in[N]$. We consider an upward transition function $\lambda(e)=0.01+e$ and a decay rate $\mu=0.2$. 
We study several concave production functions, including fractional and power law functions, and several quadratic convex cost functions varying the value of second-order coefficient: $$p(x)=\frac{x}{1+x},\ p(x)=\frac{1}{2}x^{1/2},\ p(x)=\frac{1}{2}x^{1/3},\ \text{and} \ p(x)=\frac{1}{2}x^{1/4},$$ 
$$c(e)=\frac{1}{2} e + c_2 e^2,\quad c_2 \in \{0, \frac{1}{16}, \frac{1}{8}\}.$$

Figure~\ref{fig:cvx_s3} shows the steady-state productivity $\mathcal{P}(a)$ as a function of AI assistance level $a$. Productivity exhibits a multimodal pattern for linear costs and the productivity paradox persists under convex costs, although the productivity decline slightly attenuates.
\begin{figure}[ht]
    \centering
    \begin{minipage}{0.45\textwidth}
        \centering
        \includegraphics[width=\linewidth]{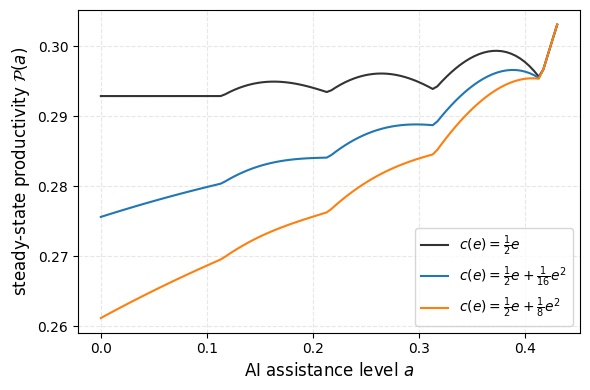}
        \subcaption[]{$p(x)=\frac{x}{1+x}$}
        
    \end{minipage}
    \hfill
    \begin{minipage}{0.45\textwidth}
        \centering
        \includegraphics[width=\linewidth]{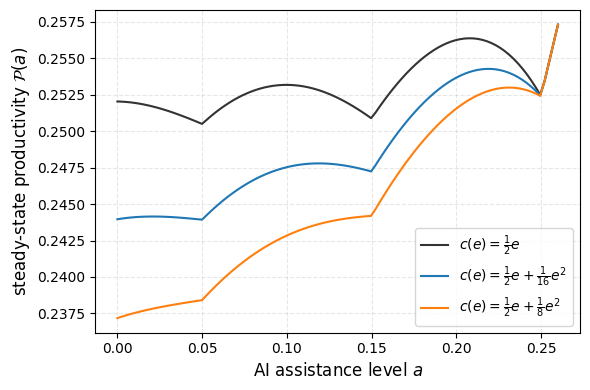}
        \subcaption[]{$p(x)=\frac{1}{2}x^{1/2}$}
        
    \end{minipage}
    \hfill
    
    \begin{minipage}{0.45\textwidth}
        \centering
        \includegraphics[width=\linewidth]{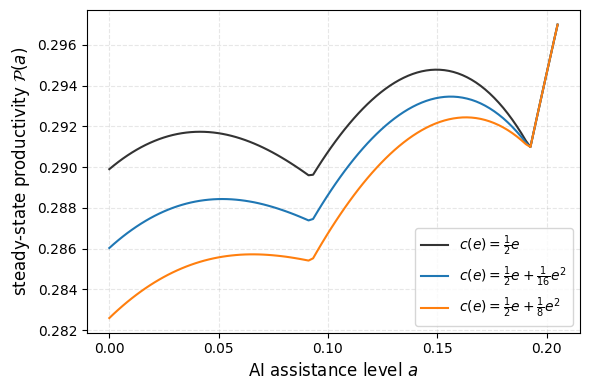}
        \subcaption[]{$p(x)=\frac{1}{2}x^{1/3}$}
        
    \end{minipage}
    \hfill
    \begin{minipage}{0.45\textwidth}
        \centering
        \includegraphics[width=\linewidth]{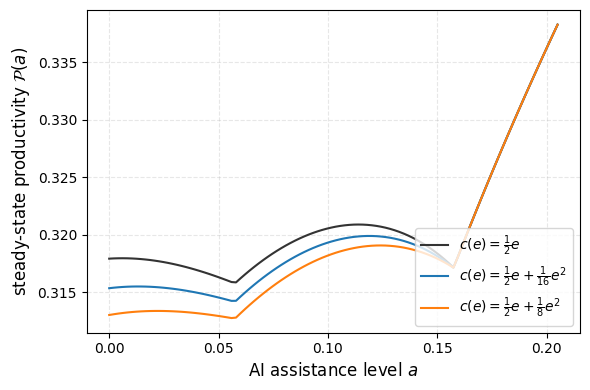}
        \subcaption[]{$p(x)=\frac{1}{2}x^{1/4}$}
        
    \end{minipage}
    \caption{Productivity paradox under endogenous skill development}
    \label{fig:cvx_s3}
\end{figure}

\paragraph{Model of AI unreliability (Section~\ref{sec:AI_unreliability})} We numerically illustrate the model of AI unreliability by varying the reliability measure $q$ for a fixed skill $s=0$. We study several concave production functions that satisfy IARA, including Gaussian, expo-power, and truncated quadratic functions (see Section~\ref{sec:unreliability_discussion}):
$$p(x)=\int_0^x \exp(-t^2)dt,\ p(x)=(-x^2+2 x)\mathbb{I}(x\leq 1) + \mathbb{I}(x> 1),\ \text{and} \ p(x)=1-\exp(- x^2).$$
We consider the same quadratic convex cost functions as above.

Figure~\ref{fig:cvx_s4} shows the productivity $p^\star(s,\bar{a},q)$ as a function of AI proficiency level $\bar{a}$. Productivity is either decreasing or V-shaped for linear costs, given that $q<1$. The productivity paradox persists under convex costs, although the productivity decline slightly attenuates.
\begin{figure}[ht]
    \centering
    \begin{minipage}{0.75\textwidth}
        \centering
        \includegraphics[width=\linewidth]{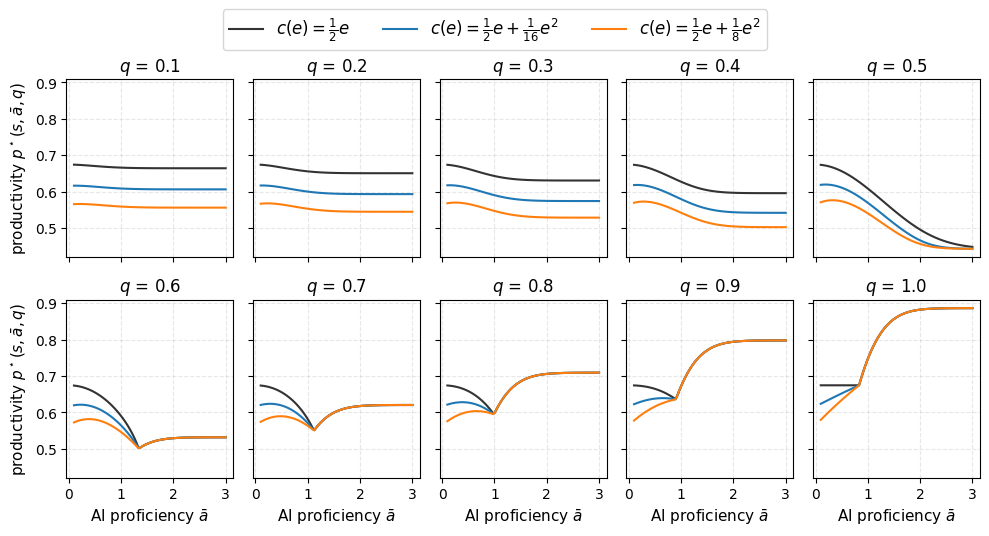}
        \subcaption[]{$p(x)=\int_0^x \exp(-t^2)dt$}
        
    \end{minipage}
    \hfill
    \begin{minipage}{0.75\textwidth}
        \centering
        \includegraphics[width=\linewidth]{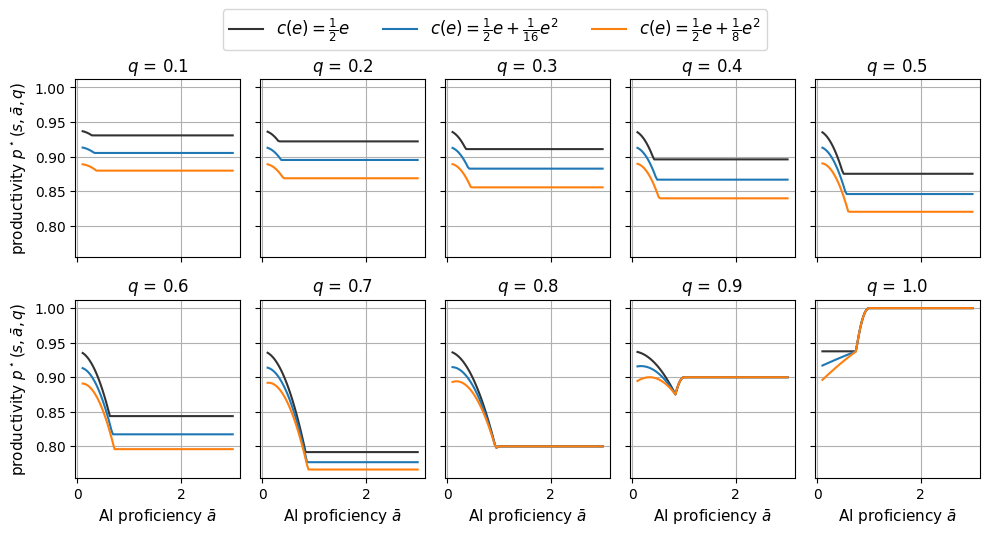}
        \subcaption[]{$p(x)=-x^2+2 x$ for $x\leq 1$ and $p(x)=1$ for $x>1$}
        
    \end{minipage}
    \hfill
    \begin{minipage}{0.75\textwidth}
        \centering
        \includegraphics[width=\linewidth]{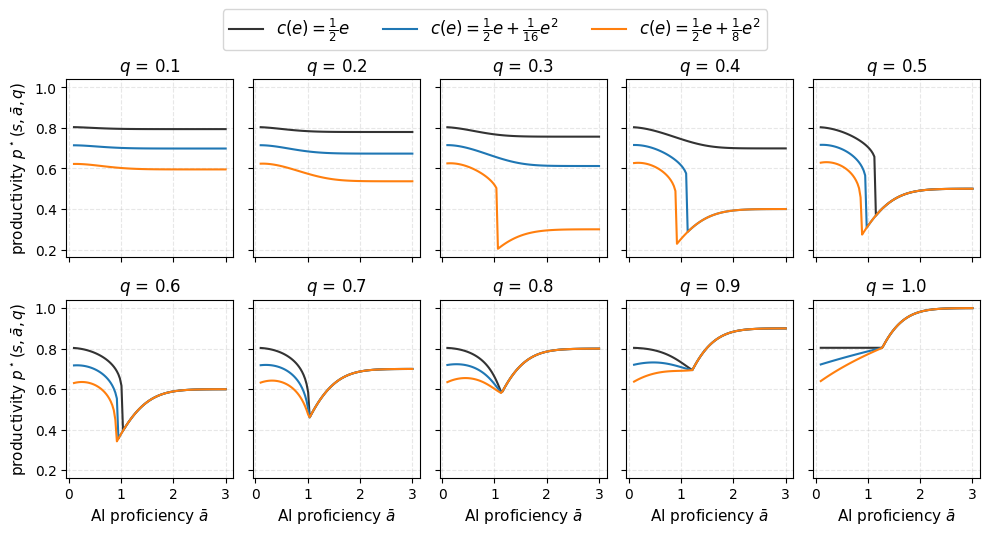}
        \subcaption[]{$p(x)=1-\exp(- x^2)$}
        
    \end{minipage}
    \caption{Productivity paradox under AI unreliability}
    \label{fig:cvx_s4}
\end{figure}

\end{document}